\DeclareRobustCommand{\mybox}[2][gray!15]{
\begin{tcolorbox}[  
        left=0pt,
        right=0pt,
        top=0pt,
        bottom=0pt,
        colback=#1,
        colframe=#1,
        enlarge left by=0mm,
        boxsep=10pt,
        arc=2pt,outer arc=2pt,
        ]
        #2
\end{tcolorbox}
}
\definecolor{cornellred}{rgb}{0.7, 0.11, 0.11}
\definecolor{dgreen}{rgb}{0.0, 0.5, 0.0}
\definecolor{ballblue}{rgb}{0.13, 0.67, 0.8}
\definecolor{royalblue(web)}{rgb}{0.25, 0.41, 0.88}
\definecolor{bleudefrance}{rgb}{0.19, 0.55, 0.91}
\definecolor{royalazure}{rgb}{0.0, 0.22, 0.66}
\renewenvironment*{displayquote}
  {\begingroup\setlength{\leftmargini}{1cm}\csq@getcargs{\csq@bdquote{}{}}}
  {\csq@edquote\endgroup}
 \def\bibsep{\smallskipamount}%
\tikzstyle{vecArrow} = [thick, decoration={markings,mark=at position
\tikzstyle{innerWhite} = [semithick, white,line width=1.4pt, shorten >= 4.5pt]
\newenvironment{myprocedure}[1][htb]
{  
\begin{algorithm}[#1]%
}{\end{algorithm}}
\newcommand*{\R}{\mathbb{R}}
	\providecommand{\given}{}
	\DeclarePairedDelimiterX{\set}[1]\{\}{\renewcommand\given{\nonscript\:\delimsize\vert\nonscript\:\mathopen{}}#1}
	\let\Pr\relax
	\DeclarePairedDelimiterXPP{\Pr}[1]{\mathbb{P}}[]{}{\renewcommand\given{\nonscript\:\delimsize\vert\nonscript\:\mathopen{}}#1}
	\DeclarePairedDelimiterXPP{\Ex}[1]{\mathbb{E}}[]{}{\renewcommand\given{\nonscript\:\delimsize\vert\nonscript\:\mathopen{}}#1}
\newcolumntype{P}[1]{>{\centering\arraybackslash}c{#1}}
\newcommand{\durationpdf}{g}
\newcommand{\durationcdfi}{\bar{\durationdistribution}}
\newcommand{\bench}[1]{\textbf{OPT}[#1]}
\newcommand{\tpre}{\tau}
	\newcommand{\Z}{\mathbb{Z}}
	\newcommand{\inventory}{c}
	\newcommand{\mininventory}{{\inventory_{\min}}}
	\newcommand{\totaltime}{T}
	\newcommand{\reward}{r}
	\newcommand{\rewards}{\mathbf \reward}
	\newcommand{\choice}{\phi}
	\newcommand{\pen}{\Psi}
	\newcommand{\durationdistribution}{G}
	\newcommand{\durationdistributions}{{\mathbf{\durationdistribution}}}
	\newcommand{\assortment}{S}
	\newcommand{\assortmentspace}{\mathcal{S}}
	\newcommand{\duration}{d}
	\newcommand{\typedistribution}{F}
	\newcommand{\inventorydual}{\theta_{i, t}}
	\newcommand{\probdual}{\lambda_{t}}
\newcommand{\probdualtype}{\lambda_{t,\type_t}}
	\newcommand{\typesequence}{\{\type_t\}_{t = 1}^\totaltime}
	\newcommand{\typedistributionsequence}{\{\typedistribution_t\}_{t = 1}^\totaltime}
	\newcommand{\EAR}[2][]{\text{Expected-LP}\ifthenelse{\not\equal{}{#1}}{_{#1}}{}\!\left[{\def\givenn{\middle|}#2}\right]}
	\newcommand{\algo}{\mathcal{A}}
	\newcommand{\Rev}[2][]{\text{\bf Rev}\ifthenelse{\not\equal{}{#1}}{_{#1}}{}\!\left[{\def\givenn{\middle|}#2}\right]}
	\newcommand{\alloc}{y}
	\newcommand{\curalloctype}{\alloc_{\assortment, t, \type_t}}
	\newcommand{\curalloctypeopt}{\alloc^*_{\assortment, t, \type_t}}
	\newcommand{\curexantetype}{X_{\assortment, t, \type_t}}
	\newcommand{\curchoice}{\choice^{\type_t}(\assortment, i)}
	\newcommand{\curreward}{\reward_i^{\type_t}}
	\newcommand{\price}{P}
	\newcommand{\curprice}{\price_{i, t}^{\type_t}}
	\newcommand{\prechoice}{\choice^{\type_{\tpre}}(\assortment, i)}
	\newcommand{\noaccents}[1]{#1}
	\newcommand{\newagentvar}[3][\noaccents]{%
		\expandafter\newcommand\expandafter{\csname #2\endcsname}{#1{#3}}%
		\expandafter\newcommand\expandafter{\csname #2s\endcsname}{#1{\boldsymbol{#3}}}%
		\expandafter\newcommand\expandafter{\csname #2smi\endcsname}[1][i]{#1{\boldsymbol{#3}}_{-##1}}%
		\expandafter\newcommand\expandafter{\csname #2i\endcsname}[1][i]{#1{#3}\agind[##1]}%
		\expandafter\newcommand\expandafter{\csname #2ith\endcsname}[1][i]{#1{#3}_{(##1)}}%
	}
	\newcommand{\sampleassortment}{\hat{\assortment}}
	\newcommand{\DP}{\mathcal{E}}
	\newcommand{\exponential}{\pen(x) = \frac{e^{1-x}-e}{1 - e}}
	\newcommand{\infiniteboundtext}{1 - \tfrac{1}{\sqrt{\mininventory + 3}}}
\newcommand*{\rom}[1]{\expandafter\romannumeral #1}
\newcommand{\Rom}[1]{\uppercase\expandafter{\romannumeral #1\relax}}
\newcommand{\valueDPsimple}{\mathcal{V}}
\newcommand{\indica}{\mathtt{I}}
\newcommand{\algValue}{\mathcal{Q}}
\newcommand{\loss}{\varepsilon^*(\mininventory)}
\newcommand{\pdfoversets}{\mathcal{F}_{\type_t,\hat{\assortment},\bar{\assortment}}}
\newcommand{\gammaopt}{\gamma^*(\mininventory)}
\newcommand{\inflateRatio}{\eta}
\newcommand{\condition}{\,\mid\,}
\newcommand{\prob}[2][]{\text{\bf Pr}\ifthenelse{\not\equal{}{#1}}{_{#1}}{}\!\left[{\def\givenn{\middle|}#2}\right]}
\newcommand{\expect}[2][]{\text{\bf E}\ifthenelse{\not\equal{}{#1}}{_{#1}}{}\!\left[{\def\givenn{\middle|}#2}\right]}
\newcommand{\tparen}{\big}
\newcommand{\tprob}[2][]{\text{\bf Pr}\ifthenelse{\not\equal{}{#1}}{_{#1}}{}\tparen[{\def\given{\tparen|}#2}\tparen]}
\newcommand{\texpect}[2][]{\text{\bf E}\ifthenelse{\not\equal{}{#1}}{_{#1}}{}\tparen[{\def\given{\tparen|}#2}\tparen]}
\newcommand{\sprob}[2][]{\text{\bf Pr}\ifthenelse{\not\equal{}{#1}}{_{#1}}{}[#2]}
\newcommand{\sexpect}[2][]{\text{\bf E}\ifthenelse{\not\equal{}{#1}}{_{#1}}{}[#2]}
\newcommand{\indicator}[1]{{\mathbbm{1}\left\{ #1 \right\}}}
\newcommand{\revcolor}[1]{{\color{black}#1}}
\newif\ifalgtwoappendix
\newif\ifsubassortmentproof
\newif\ifinfiniteapx
\begin{document}

\TITLE{Near-Optimal Bayesian Online Assortment of Reusable Resources}

\RUNAUTHOR{Feng, Niazadeh, Saberi}

\RUNTITLE{Near-Optimal Bayesian Online Assortment of Reusable Resources}

\ARTICLEAUTHORS{%
\AUTHOR{Yiding Feng}
\AFF{Industrial Engineering and Decision Analytics, Hong Kong University of Science \& Technology (HKUST), \EMAIL{ydfeng@ust.hk}}
\AUTHOR{Rad Niazadeh}
\AFF{University of Chicago Booth School of Business, Chicago, IL, \EMAIL{rad.niazadeh@chicagobooth.edu}}
\AUTHOR{Amin Saberi}
\AFF{Management Science and Engineering, Stanford University, Stanford, CA, \EMAIL{saberi@stanford.edu}}
}

\ABSTRACT{

Motivated by the applications of rental services in e-commerce, we consider revenue maximization in online assortment of reusable resources for a stream of arriving consumers with different types. We design competitive online algorithms with respect to the optimum online policy in the Bayesian setting, in which types are drawn independently from known heterogeneous distributions over time. In the regime where the minimum of initial inventories $\mininventory$ is large, our main result is a near-optimal $1-\min\left(\frac{1}{2},\sqrt{\log(\mininventory)/\mininventory}\right)$ competitive algorithm for the general case of reusable resources. 
Our algorithm relies on an expected LP benchmark for the problem, solves this LP, and simulates the solution through an independent randomized rounding. 
The main challenge is obtaining point-wise inventory feasibility in a computationally efficient fashion from these simulation-based algorithms. To this end, we use several technical ingredients to design \emph{discarding policies} -- one for each resource. These policies handle the trade-off between the inventory feasibility under reusability and the revenue loss of each of the resources. However, discarding a unit of a resource changes the future consumption of other resources. To handle this new challenge, we also introduce \emph{post-processing} assortment procedures that help with designing and analyzing our discarding policies as they run in parallel, which might be of independent interest. {As a side result, by leveraging techniques from the literature on prophet inequality, we further show an improved near-optimal $1-1/\sqrt{\mininventory+3}$ competitive algorithm for the special case of non-reusable resources.} We finally evaluate the performance of our algorithms using the numerical simulations on {the} synthetic data.

}
\maketitle
\section{Introduction}
\label{sec:intro}
Assortment planning refers to the decision of a revenue-maximizing firm as to which subset of products to display to its consumers. In classic retail applications, the focus is mostly on sale; however, with the advent of online e-commerce platforms, several applications have emerged where the focus is on renting out \emph{reusable resources}. A reusable resource -- also referred to as a rental product -- leaves the stock for some time duration after being assigned to a consumer, and can be reassigned to a new consumer once it is back. Examples are virtual machines in cloud computing platforms such as AWS, houses in vacation rental online marketplaces such as Airbnb, and local professional services in online labor platforms such as Thumbtack. In order to extract more revenue, a (personalized) assortment policy can decide to display a different subset of these resources once a new consumer interacts with the platform; the task of such a policy is to manage the sequence of assortments in long-run given the inventory restrictions.

Motivated by the above applications, we study the online assortment of reusable resources, in which a platform sequentially makes irrevocable assortment decisions for a stream of arriving consumers. Each consumer has a \emph{type} that determines her choice probabilities given each possible assortment -- referred to as the consumer's choice model. The platform collects a one-time personalized payment each time a consumer rents a product. Different products have different rental fees that are determined by the consumer's type. To model the uncertainty about the duration of a rental, we consider a stochastic model where the rental durations are drawn independently each time a product is rented. The type of the arriving consumer also determines her rental duration distributions for different products. To model the platform's prior information about its future consumers -- which is usually formed based on the past consumers' data in an online platform --  we take a Bayesian approach. We assume the types are drawn independently over time from \emph{known heterogeneous distributions}.

Once a new consumer arrives, the platform observes her realized type from the known type distribution, displays a new subset of products, and allows the consumer to select a product stochastically from this subset based on her choice model. The goal is to design an online assortment algorithm in order to maximize the expected total collected rental fees (also known as the revenue) during the decision making horizon. Importantly, we consider the setting where each product has an initial inventory and the algorithm should always assort an available subset of products. Availabilities are determined by the current inventory levels of the products -- quantities that decrease as a unit of the product is rented and increase as it is returned to the stock.

 Given the sequence of distributions for the entire decision making horizon, a \emph{revenue benchmark} in our problem is an upper-bound on the expected revenue of any feasible online algorithm as the platform, where the expectation is over the randomness in the algorithm, the type sequence, and the consumers' choices. We measure the performance of our online algorithms using the notion of \emph{competitive ratio}, i.e., the worst-case ratio between the expected revenue of the online algorithm and the targeted revenue benchmark.

  As is common in the literature and practice of assortment optimization, we assume general consumer choice models that are weak substitute, i.e., assorting a new product only weakly decreases the choice probability of another assorted product (see \Cref{asp:substitutability}). With no further assumptions, even the one-shot assortment optimization can be computationally hard. To resolve this issue, we assume having oracle access to a blackbox algorithm that can solve the one-shot assortment optimization (see \Cref{asp:oracle}). Again, this is a common assumption in the literature to get around the computational issue associated with the one-shot problem when general consumer choice models are considered~\citep{GNR-14,RST-17}. In the oracle-access computational model, we aim to design polynomial-time competitive online algorithms with respect to an appropriate revenue benchmark for our problem. 

In this setting, the question of finding the optimum online policy that maximizes the expected total revenue is purely computational, and an exponential-size dynamic programming (DP) can formulate the optimum online. While no formal computational hardness is known for computing the optimum online in our problem, it is conjectured to be computationally hard, even in the oracle-access model.\footnote{For example, see \citet{PT-1987} for PSPACE-hardness of finding the optimum policy in partially observable Markov decision processes, and related discussions in ~\citet{RST-17,ANSS-19}.} Therefore, it is natural to consider the expected revenue of the optimum online policy as a revenue benchmark, and study whether it is amenable to polynomial-time competitive online algorithms. More specifically, we ask the following question in this paper: 
\vspace{0.3cm}
\begin{displayquote}
\emph{How close a polynomial time online algorithm can be to the optimum online policy in terms of expected revenue? In particular, can we obtain constant or near-optimal competitive online algorithms with respect to the optimum online policy (when initial inventories are large)?}
\end{displayquote}
\vspace{0.3cm}

 A significant progress towards providing a compelling answer to the above question is the result of \citet{RST-17}, which establishes an approximate dynamic programming approach for the exponential-size optimum online DP. They show that a greedy algorithm that uses a linear approximation of the optimal revenue-to-go function obtains at least $\tfrac{1}{2}$ of the expected revenue of the optimum online. They further study the setting when rental durations are infinite (i.e., the resources are not reusable) and rental fees are type-independent. In this setting, they show how to perform a rollout on a simple static policy to obtain $1-\min\left(\tfrac{1}{2}, \tfrac{1}{\sqrt[3]{\mininventory}}\right)$ fraction of the expected revenue of the optimum online policy, where $\mininventory$ is the minimum initial inventory across different products.\footnote{If rental fees are type-dependent, \citet{RST-17} show the same policy obtains $1-\min\left(\tfrac{1}{2}, \tfrac{R}{\sqrt[3]{\mininventory}}\right)$ competitive ratio guarantee, where $R$ is the ratio between maximum and minimum rentals fees across different types.} This last result is essentially a near-optimal competitive algorithm when the initial inventories are large and the resources are non-reusable. 
 {Note that the large initial inventory regime is 
relevant in many applications of assortment optimization 
 and it is the main focus of our paper as well. }
 
\subsection{Our Contributions} The main contribution of our paper is the following result.

\vspace{0.1cm}
{
\begin{displayquote}
\emph{\textbf{(Main Result)} For the general case of Bayesian online assortment of reusable resources, we propose a polynomial-time online algorithm that obtains a near-optimal competitive ratio of $1-\min\left(\frac{1}{2},\loss\right)$, where $\varepsilon^*(x)=O\left(\sqrt{\log(x)/x}\right)$.}
\end{displayquote}}
\vspace{0.1cm}
\noindent The above competitive ratio guarantee holds even when the rental fees, consumer choices, and rental duration distributions are type-dependent and vary arbitrarily across different types.

To obtain the above result, our work diverges from \citet{RST-17} by considering a different revenue benchmark. In particular, we consider a linear programming relaxation of the optimum online policy, refereed to as the \emph{Bayesian expected LP}. To define this benchmark, suppose a feasible online policy knows the exact realizations of future types, but does not know the realizations of consumer choices and rental durations. The optimum such policy, known as the \emph{clairvoyant optimum online}, clearly provides a revenue benchmark. Now consider a relaxation of this policy by only requiring the inventory feasibility constraints of reusable resources to hold in expectation over the randomness in types, consumer choices, and rental durations. Given the sequence of type distributions, this relaxation is encoded by an LP with exponential number of variables and polynomial number of time-varying packing constraints to ensure the inventory feasibility of reusable resources in-expectation. See \Cref{sec:stochastic-lp-solving} for details.

It turns out that we can simply solve the Bayesian expected LP in polynomial-time by solving its dual program using the ellipsoid method -- given access to the offline assortment oracle. {Given the LP solution, a simple but powerful technique in the Bayesian online optimization is to use a \emph{simulation-based} rounding algorithm to mimic the optimal solution of this LP \citep[for examples of this approach, refer to][]{AHL-12,DSA-12,MSZ-18,gallego2016online,wang2018online,DSSX-18,BM-19}.} After observing the type of the arriving consumer, this algorithm independently samples an assortment from a distribution over subsets of products that comes from the LP solution, ignoring the inventory constraints. This algorithm has no loss in terms of the expected revenue compared to the LP solution; however, it only respects the inventory constraints of each product in expectation -- and not necessarily under every sample path of the existing randomness. 

Our main technical contribution is providing techniques to transform the simulation-based algorithm into a point-wise feasible online algorithm in polynomial-time, with constant or negligible multiplicative loss in the expected revenue. To this end, we run a separate procedure over time -- one for each product --  together with the simulation-based policy. After an assortment is sampled at each time, each procedure decides whether or not to \emph{discard} the corresponding product if it is in the sampled subset to maintain the inventory feasibility of this product. 
{This specific architecture of sampling according to the LP solution and then using product-specific discarding rules have been explored in the past, e.g., see \cite{AHL-12,gallego2016online,wang2018online,DSSX-18,BM-19}.\footnote{{See specifically the ``Primal Routing Algorithm'' in Section 7 of \cite{gallego2016online} and the ``Separation
Algorithm'' in Section 4.2 of \cite{wang2018online}.}}
Similarly, we aim to design polynomial-time online discarding policies (and other necessary algorithmic constructs) that handle the trade-off between maintaining the inventory feasibility and the discarding revenue loss for each product. The main new challenges specific to our problem is that (i) products are reusable, (ii) products are weak-substitute --- and hence discarding a product increases the choice probability of other products, and (iii) discarding can be potentially randomized --- which combined with reusability creates complicated correlation structures among selection indicator random variables across time and makes it challenging to argue about point-wise feasibility of policies. In what follows, we sketch our main technical contributions and how they overcome these challenges.} 

\vspace{1mm}
\paragraph{{(\Rom{1})~General rental duration distributions/ near-optimal discarding (\Cref{sec:large inventory}):}} The main idea behind our near-optimal discarding procedure is discarding each available sampled product independently at random with a small probability. This is a simple yet reasonable approach 
\citep[cf.][]{HKS-07} 
as the simulation-based algorithm respects the inventory constraints of each product in expectation. By independent randomized discarding with probability $\gamma>0$, we leave some slack in the inventory feasibility constraint by ensuring that the expected value of the number of units of the product under rental is at most $(1-\gamma)$ times its initial inventory amount at any time. If this quantity as a sum of independent rental indicator random variables concentrates around its expectation, we will then avoid violation of the inventory constraint with high probability when $\gamma=O\left(\sqrt{\log(\mininventory)/{\mininventory}}\right)$. Moreover, it only loses $\gamma$ fraction of the expected revenue form this product, as desired.

However, the above simple approach does not work as described, because (i) the resources are reusable and the inventories are limited, hence the rental indicator random variable of a product at some time $\tau<t$ can be correlated with the rental indicator random variable of the same product at time $t$ if the rental duration of time $\tau$ is at least $t-\tau$ and the last unit of the product is rented at $\tau$; (ii) once a discarding procedure drops a product from the sampled assortment, there will be less cannibalization of other products -- as the consumer choices are weak substitute. This in turn increases the probability of other products being chosen by the arriving consumer, and hence increases the expected number of units of different products under rental in future for the resulting algorithm -- compared to what is expected from the simulation-based algorithm.

We fix the above issues by proposing a post-processing step after the independent randomized discarding, which we refer to as \emph{sub-assortment sampling}. In a nutshell, the goal of the sub-assortment sampling is to find a distribution over available subsets, so that the products which are not discarded will be rented with \emph{exactly} the same probability as in the optimal solution of the Bayesian expected LP. It is not even clear a priori whether such a distribution exists; nevertheless, we show it does and provide a polynomial-time construction to sample from this distribution. Using the properties of the sub-assortment sampling, we propose a coupling trick to show our desired concentration despite the fact that the rental indicator random variables are correlated across time. Note that the task of sub-assortment sampling is quite general, and it might be of independent interests in other applications.\footnote{{We would like to highlight that after appearance of an online version of our paper, through a personal communication with authors of 
\citet{GGU-20}, we were informed that this paper (which was not available online at the time) independently and concurrently discovered a procedure similar to our 
{sub-assortment sampling}
for settings with adversarial arrival 
and reusable resources.}}

\vspace{1mm}
\paragraph{{(\Rom{2})~General rental duration distributions/ $\tfrac{1}{2}$-competitive discarding (\Cref{sec:small inventory}):}}
As an alternative discarding policy for the general case of reusable resources, consider an exponential-size DP that keeps track of the state of each unit of the product (i.e., when each unit returns to the inventory) and solves the discarding task optimally. We introduce an approximate version of this DP{, which we also refer to as \emph{optimistic DP},} that can be solved in polynomial-time. The goal is to maximize the per-unit revenue-to-go of the product when the inventory is automatically \emph{replenished} by an exogenous process every time we make a discarding decision, so that the entire inventory of the product is always available on-hand. This new DP is inventory independent and thus is polynomial-size. We then consider a discarding algorithm that makes the same decisions as the optimistic DP. The result is a non-adaptive thresholding discarding rule, that is, an available product is only discarded if its rental fee is below a certain threshold. These thresholds are computed upfront and only depend on the product, time, and the realized type.

The main intuition behind why the above discarding algorithm is a reasonable approximation is as follows. We can show the expected revenue-to-go of this algorithm is a concave function of the inventory level -- i.e., the higher the inventory level, the lower the per-unit expected revenue-to-go. As a result, when there is no replenishment in reality, it obtains at least the same expected per-unit revenue-to-go as the DP with replenishment. We then analyze the worst-case ratio between the value of this inventory independent DP and the per-unit revenue of the expected LP using a ``factor revealing linear program'' and its dual. This approach establishes a lower-bound of $\tfrac{1}{2}$ for this ratio. {It is worth noting that similar proof techniques based on dual-fitting have been used in the literature for other problems with non-reusable resources~\citep[e.g., see][]{DA-09,A-07,AHL-12,gallego2016online,wang2018online}. Our work extends the existing analysis to prove performance guarantees for LP-based discarding policies when resources are reusable.}

{We highlight that our DP for designing the above approximate discarding policy given the expected LP solution shares similar but not completely identical recursive structures with the approximate dynamic programming approach in \cite{RST-17} for directly approximating the optimum online policy; in fact, in contrast to their approach, the Bellman update equation of our DP uses the the solution of the expected LP (the optimal assortment sampling probabilities). Therefore, while both DPs provide the same approximation factor of $\frac{1}{2}$, ours is with respect to the \emph{stronger} benchmark of expected LP, versus theirs that is with respect to the optimum online policy. This subtle difference turns out to be the key in combining the performance guarantees of our two algorithms in \Cref{sec:large inventory} and \Cref{sec:small inventory}, in order to obtain a simple hybrid algorithm that achieves the theoretical \emph{``best of both worlds''} competitive ratio guarantee with respect to the stronger expected LP benchmark (and as we observe later, improved performance in numerical simulations).}

\paragraph{(\Rom{3})~Hybrid simulation-based algorithm (\Cref{sec:hybrid}):}{Having access to the above simulation-based algorithms with different discarding rules, we aim to define a \emph{hybrid algorithm} that enjoys the competitive ratios of both small and large inventory regimes. To this end, we make an upfront decision on which discarding policy to use for each product.
In particular, we use the value function of the optimistic DP for each product separately to calculate the ratio $\mathcal{R}_i$ between the expected revenue-to-go of following the optimistic DP for this product and the contribution of this product to the expected-LP's objective. We then compare this ratio with $1-\varepsilon^*(c_i)$ (see \eqref{eq:loss} for definition of function $\varepsilon^*(\cdot)$) to partition the products into \emph{large inventory} (i.e., when $\mathcal{R}_i+\varepsilon^*(c_i)<1$) and \emph{small inventory} (i.e., when $\mathcal{R}_i+\varepsilon^*(c_i)>1$). For each large inventory product, we run the randomized discarding policy in \Cref{sec:large inventory}, and for each small inventory product we run the optimistic DP discarding policy in \Cref{sec:small inventory}. We also use the sub-assortment sampling procedure for post-processing to correct the resulting increase in choice probabilities of non-discarded products due to weak-substitution. By using the facts that (i) the competitive ratio analyses of both algorithms decouple across products, (ii) both analyses compare the expected revenue obtained from each product with the contribution of that product to the expected-LP's objective, and that (iii) sub-assortment sampling corrects the choice probabilities of non-discarded products, we show the resulting hybrid policy combines the two competitive ratios.\footnote{{It is worth noting that the $\frac{1}{2}$-competitive approximate DP algorithm of \cite{RST-17} cannot be combined using this approach with our near-optimal discarding policy, as this approximate DP algorithm competes with the optimum online policy and not the expected-LP.}} While this hybrid algorithm attains the best of both worlds competitive ratio of $1-\min\left(\frac{1}{2},\loss\right)$, it is also likely to outperform both policies in practical scenarios; the results of our numerical simulations in \Cref{sec:numerical} empirically support this claim. We also present a second hybrid algorithm using the method of conditional expectations. See \Cref{sec:hybrid} for more details.}

{We also complement our results by considering the special case of non-reusable resources. By leveraging techniques from the literature on prophet inequality and extending them to the Bayesian assortment optimization problem, we provide a near-optimal improved competitive ratio of $\left(1-\tfrac{1}{\sqrt{\mininventory+3}}\right)$ with respect to the expected LP in this setting. See \Cref{sec:discussion} and \Cref{apx:sand barrier} for more details.}


\paragraph{{(\rom{3})~Numerical simulations (\Cref{sec:numerical}):}}
 We finally provide numerical justification 
for the revenue performance of our proposed policies.
Adapting the setups of the numerical experiments in \citet{GNR-14,RST-17} to our setting,
we compare the revenue of our proposed policies --- i.e., the hybrid algorithm and the simulation with optimal discarding under infinite rental durations --- with other
policies in the literature. {In our numerical simulations, we consider various scenarios with both general rental duration distributions
and infinite rental durations.
In all of these scenarios, our policies 
noticeably outperform the other policies in terms of the expected revenue.}


\section{Preliminaries}
We first formalize our problem, the model, and all the required assumptions in \Cref{sec:model}. We then briefly explain various aspects of our expected LP benchmark in \Cref{sec:stochastic-lp-solving}.

\label{sec:prelim}

\subsection{Model and Problem Definition}
\label{sec:model}
The platform offers $n$ different rental products, indexed by $[n] = \{1, 2, \dots, n\}$. Each rental product $i$ has an initial inventory of $\inventory_i \in \Z_+$.  Consumers who are interested to rent these products arrive sequentially at times $t=1,2,\ldots,T$. Consumer $t$ has type $\type_t\in\typespace_t$, where $\typespace_t$ denotes the (discrete) space of possible types at time $t$. We assume types are drawn independently from \emph{known} probability distributions $\typedistribution_t:\typespace_t\rightarrow [0,1]$ at times $t=1,\ldots,T$. 

Upon the arrival of consumer $t$,  her type $\type_t$ is revealed to the platform. Given this type and the history up to time $t$, the platform offers an assortment of available products $\assortment_t \in \assortmentspace$ from its inventory, where $\assortmentspace\subseteq 2^{[n]}$ is the collection of all feasible assortments that can be offered ignoring the inventory availability.  Given the assortment $\assortment_t$, the consumer chooses a rental product $i_t\in\assortment_t$, pays a rental fee to the platform, and keeps the product for a stochastic rental duration $d_t\in\mathbb{Z}_+$.

We consider the setting where the consumer choice behavior, rental fees of different products, and rental duration distributions of different products depend on the type $\type_t$ at each time $t$. Formally, a consumer type $\type$ is defined as a tuple $\langle\choice^\type, \mathbf{r}^\type,\durationdistributions^\type \rangle$, so that:
\begin{itemize}
\item The choice of a consumer with type $\type$ is modeled by a general choice model function $\choice^\type: \assortmentspace
\times [n] \rightarrow [0,1]$, where $\choice^\type(\assortment, i)$ is the probability that consumer with type $\type$ chooses product $i$ to rent when assortment set $S\in\assortmentspace$ is offered.  
\item For a consumer with type $\type$, $\rewards^\type = (\reward_1^\type, \reward_2^\type, \dots, \reward_n^\type) 
\in \R^n$, where $\reward_i^\type$ denotes the rental fee of product $i$. Moreover, $\durationdistributions^\type =
(\durationdistribution_1^\type, \durationdistribution_2^\type,
\dots, \durationdistribution_n^\type)$, where $\durationdistribution_i^\type$ denotes the c.d.f. of rental duration of product $i$ for type $\type$. We use $\durationpdf_i^{\type}:[T]\rightarrow [0,1]$ to denote the p.d.f. of rental duration of product $i$ for type $\type$. Moreover, let  $\durationcdfi_i^{\type}(\cdot)\triangleq 1-\durationdistribution^\type_i(\cdot)$. 
\end{itemize}
\noindent Note that we assume rental durations are independent across time, that is, if at time $t$ a consumer of type $\type$ chooses a product $i$, a fresh sample $d_t\sim \durationdistribution_i^\type$ is realized as the rental duration of this product. We further impose the following assumptions on our choice models and feasible assortments, which are common in previous literature  \citep[cf.][]{GNR-14, RSTT-14}

\begin{assumption}[Weak substitutability]
	\label{asp:substitutability}
	For all $t\in[\totaltime], \type \in \typespace_t$ and $i \in [n]$,
	$\choice^\type (\emptyset, i) = 0$. Moreover, for all 
	$\assortment\in \assortmentspace$ and $j\in [n]/ \{i\}$, $\choice^\type (\assortment, i)\geq \choice^\type (\assortment\cup\{j\}, i).$
\end{assumption}
\begin{assumption}[{Downward-closed feasibility}] 
\label{asp:down-closed}
If $S\in\assortmentspace$ and $S'\subseteq S$ then $S'\in\assortmentspace$, i.e., a feasible assortment will remain feasible after removing any subset of its offered products. 
\end{assumption}

\begin{remark}
In online hospitality services such as Airbnb, users report the duration of their stay to the platform before platform shows them a listing. In such a variation, the platform makes the assortment decision by using the exact realizations of current rental times for the arriving type. Indeed, this is a special case of our model where the rental time distributions are point mass.
\end{remark}

Given type distributions $\typedistributionsequence$, the goal is to design online algorithms -- playing the role of the platform -- that maximize the expected revenue granted from rental fees; here, the expectation is over randomness of the algorithm (if randomized) and the environment, i.e., types, consumer choices, and rental durations. A revenue benchmark for this problem is defined to be any upper-bound on the expected revenue obtained by any feasible online algorithm (which might or might not be achievable by a feasible online algorithm). Fixing a revenue benchmark, we evaluate the performance of any online algorithm by its \emph{competitive ratio} against this benchmark. Informally speaking, competitive ratio is the worst-case ratio between the expected total revenue of the online algorithm and 
the benchmark, where the worst-case is over all possible type distributions.

\begin{definition}[Competitive Ratio]
	\label{def:competitive ratio}
	An online algorithm $\algo$ is \emph{$\alpha$-competitive} against a given 
	revenue
	benchmark if
	\begin{align*}
		\inf\limits_{T \geq 1}
		\inf\limits_{\typedistributionsequence}
		\frac{\Rev[\algo]{\typedistributionsequence}}{\bench{\typedistributionsequence}}
		\geq \alpha~,
	\end{align*}
	where $\Rev[\algo]{\cdot}$ is the expected revenue of algorithm $\algo$ and $\bench{\cdot}$ is the given revenue benchmark.
\end{definition}

For a general consumer choice model, the exact or even approximate offline assortment optimization can be computationally hard~\citep{KFV-08}. In order to avoid this obstacle when designing polynomial time online algorithms for general consumer choice models, we assume having access to an algorithm that solves the offline assortment problem. For simplicity, we assume the solver is exact throughout the paper, but all of our results still hold with a multiplicative degrade of $\beta$ in the competitive ratios if the solver is a $\beta$-approximation algorithm for some $0<\beta<1$. 
\begin{assumption}[{Offline oracle}]
	\label{asp:oracle}
	For all $t\in[\totaltime], \type \in \typespace_t$, and $\mathbf{\hat{R}}\in\mathbb{R}_+^n$, we have oracle access to an algorithm that finds a subset $\hat{\assortment}\in\assortmentspace$ such that:
\begin{equation*}
\hat{\assortment}\in\underset{\assortment\in\assortmentspace}\argmax~{\displaystyle\sum\nolimits_{i=1}^n\hat{R}_i\choice^\type(\assortment,i)}
\end{equation*}

\end{assumption}

\subsection{Bayesian Expected LP Benchmark}
\label{sec:stochastic-lp-solving}
A key ingredient in all of our algorithms is the \emph{Bayesian expected LP benchmark} -- a concept commonly used in previous literature on online allocations, mechanism design, and assortment optimization to remedy issues of the above benchmarks \citep[e.g., see][]{chawla2010multi,ala-14,MSZ-18,gallego2016online,wang2018online,ANSS-19}. This benchmark, denoted by $\EAR{\typedistributionsequence}$, uses linear programming to capture the optimum algorithm that only requires to satisfy the inventory constraints in expectation, where expectation is taken over randomness in rental durations and consumer types given type distributions $\typedistributionsequence$:

\begin{align}
\label{eq:exante-stochastic}\tag{$\EAR{\typedistributionsequence}$}
	\begin{array}{llll}
	{\max
	\limits_{\mathbf \alloc \geq \mathbf 0}}~~&
	\displaystyle\sum\nolimits_{t=1}^\totaltime 
	\displaystyle\sum\nolimits_{\type_t\in\typespace_t}
	\displaystyle\sum\nolimits_{\assortment \in \assortmentspace}
	\displaystyle\sum\nolimits_{i=1}^n\typedistribution_t(\type_t)
	\curreward\curchoice\curalloctype
	&~~~~~\text{s.t.}& \\[1em]
	  &
	\displaystyle\sum\nolimits_{\tpre = 1}^{t}
	\displaystyle\sum\nolimits_{\type_\tpre\in\typespace_\tpre}
	\displaystyle\sum\nolimits_{\assortment \in \assortmentspace}
	\typedistribution_\tpre(\type_\tpre)\durationcdfi^{\type_{\tpre}}_i(t-\tpre)
	\prechoice\alloc_{\assortment, \tpre, \type_\tpre}\leq \inventory_i
	&~~~~~i \in [n],\ t \in [\totaltime]&
	\\[1em]
	&\displaystyle\sum\nolimits_{\assortment\in \assortmentspace}\curalloctype \leq 1
	&~~~~~t \in [\totaltime],\ \type_t\in\typespace_t &
\end{array}
\end{align}
Here, variables $\{\curalloctype\}_{t\in[\totaltime],\assortment\in\assortmentspace,\type_t\in\typespace_t}$ correspond to probabilities 
that assortment $\assortment$ is offered 
to consumer $t$ given type $\type_t$ is realized, and first constraint shows inventory feasibility in expectation.

 \revcolor{A few explanations are in order.} 
 First, the optimal objective value of this LP is an upper-bound on the expected revenue of the clairvoyant optimum online benchmark, and hence the weaker non-clairvoyant optimum online (\Cref{prop:relaxation}; see \Cref{apx:LP} for the proof). Second,
$\EAR{\typedistributionsequence}$
can be solved efficiently using an oracle for the offline assortment (\Cref{prop:EAR running time}; see \Cref{apx:LP} for the proof). We use this computational block as a pre-processing step in all of our algorithms.\footnote{{In fact, one needs to run the ellipsoid method for the dual of this LP using the offline assortment solver as the separation oracle in order to find the optimal solution. In practice, to obtain a faster algorithm, one can use \emph{cutting plane} methods such as \cite{vaidya1996new} or even faster almost-linear-time cutting plane methods such as \cite{lee2015faster} that use the separation oracle more efficiently.}} 

\begin{restatable}{proposition}{EARupperbound}
\label{prop:relaxation}
	For any type distributions 
	$\typedistributionsequence$,
	the expected total revenue of the clairvoyant optimum online benchmark is upper-bounded by
	$\EAR{\typedistributionsequence}$. 
\end{restatable}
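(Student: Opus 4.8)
The plan is to turn the clairvoyant optimum online policy into a feasible solution of $\EAR{\typedistributionsequence}$ whose objective value equals the policy's expected revenue; since $\EAR{\typedistributionsequence}$ is a maximization LP, its optimum then dominates this value and the bound follows. Throughout, I fix the (possibly randomized) clairvoyant optimum online policy and interpret all expectations and probabilities over the joint randomness of the type sequence, the policy's internal coins, consumer choices, and rental durations.

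First I would define, for every time $t$, type $\type_t\in\typespace_t$, and assortment $\assortment\in\assortmentspace$,
\[
\curalloctype \triangleq \Pr{\assortment \text{ is offered at time } t \given \type_t}.
\]
Because types are independent across time, $\Pr{\type_t}=\typedistribution_t(\type_t)$ and the conditioning is well defined, with $\curalloctype\in[0,1]$. Since the policy offers exactly one (possibly empty) assortment at each step, these conditional probabilities sum to one, i.e.\ $\sum_{\assortment\in\assortmentspace}\curalloctype=1$, which verifies the simplex constraints of the LP.

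Next, for the objective, I would use the defining property of the choice model: conditioned on the offered set $\assortment$ and the realized type $\type_t$, product $i$ is rented with probability $\curchoice$ regardless of the history. Hence the expected fee collected from product $i$ at time $t$ is $\sum_{\type_t}\typedistribution_t(\type_t)\sum_{\assortment}\curalloctype\,\curchoice\,\curreward$, and summing over $i$ and $t$ reproduces exactly the LP objective evaluated at $\{\curalloctype\}$; thus the objective value equals the policy's expected revenue. The crux is the first (inventory-in-expectation) family of constraints. I would show that its left-hand side equals the expected number of units of product $i$ occupied at time $t$ under the policy: at each past step $\tpre\leq t$ at most one unit of $i$ is rented (a single consumer arrives), and such a unit is still in use at time $t$ exactly when its duration exceeds $t-\tpre$, an event of probability $\durationcdfi_i^{\type_\tpre}(t-\tpre)$. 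Since the duration is a fresh independent draw, this survival probability factors out of the renting event given the type and product; multiplying the renting probability (decomposed as in the objective step) by $\durationcdfi_i^{\type_\tpre}(t-\tpre)$ and summing over $\tpre$ yields precisely the constraint's left-hand side. Finally, because the clairvoyant optimum online policy is \emph{feasible}, on every sample path the number of occupied units of $i$ at time $t$ is at most $\inventory_i$; taking expectations preserves this, giving the constraint.

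The hard part will be the bookkeeping in this inventory step, namely rigorously justifying that the joint event ``a unit of $i$ is rented at $\tpre$ and remains in use at $t$'' factors into a renting probability times the duration-survival term $\durationcdfi_i^{\type_\tpre}(t-\tpre)$. This relies on the durations being independent fresh samples, independent of both the offer/choice process and of the clairvoyant policy's dependence on future types; the independence of types across time is what lets the conditional offer probabilities $\curalloctype$ and the factors $\typedistribution_\tpre(\type_\tpre)$ combine correctly after marginalizing over the future. Once this factorization is in place, the remainder is a direct term-by-term translation between the probabilistic description of the policy and the algebraic terms of $\EAR{\typedistributionsequence}$.
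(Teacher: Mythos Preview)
Your proposal is correct and follows essentially the same approach as the paper: define $\curalloctype$ as the conditional probability that the clairvoyant policy offers $\assortment$ at time $t$ given the realized type, then verify LP feasibility by using pointwise inventory feasibility together with the independence of rental durations from the policy's offering decisions (the clairvoyant sees future types but not durations), and observe that the LP objective at this assignment equals the policy's expected revenue. The paper's writeup is terser and phrases the same factorization via the indicator $I_{\assortment,t}(\mathbf z)$, but the logical content is identical to what you outline, including the point you flag as ``the hard part.''
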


\begin{restatable}{proposition}{EARpolytime}
\label{prop:EAR running time}
Given an algorithm for offline assortment (\Cref{asp:oracle}), an optimal assignment $\{\curalloctypeopt\}$ of~$\EAR{\typedistributionsequence}$ can be computed efficiently in time $\textrm{Poly}(n,T, \sum_{t\in[\totaltime]}\lvert\typespace_t\rvert)$. Moreover, $\{\curalloctypeopt\}$ has no more than $\textrm{Poly}(n,T, \sum_{t\in[\totaltime]}\lvert\typespace_t\rvert)$ non-zero entries.  
\end{restatable}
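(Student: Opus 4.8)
The plan is to solve $\EAR{\typedistributionsequence}$ through its dual, using the offline oracle of \Cref{asp:oracle} as a separation oracle for the ellipsoid method, and then to extract a sparse primal optimum from the polynomially many dual constraints the method actually inspects. The primal has exponentially many variables $\{\curalloctype\}$ but only $n\totaltime+\sum_{t}\card{\typespace_t}$ constraints, so dualizing lands us in exactly the regime where ellipsoid-with-separation applies. First I would write the dual: introducing multipliers $\theta_{i,t'}\ge 0$ for the inventory constraints and $\lambda_{t,\type_t}\ge 0$ for the probability constraints, it minimizes $\sum_{i=1}^{n}\sum_{t'=1}^{\totaltime}\inventory_i\,\theta_{i,t'}+\sum_{t=1}^{\totaltime}\sum_{\type_t\in\typespace_t}\lambda_{t,\type_t}$ subject to one constraint per primal variable $\curalloctype$, which after collecting the terms in which $\curalloctype$ appears (inventory constraints $(i,t')$ with $t'\ge t$, plus the $(t,\type_t)$ probability constraint) reads
\[
\lambda_{t,\type_t}\;\ge\;\typedistribution_t(\type_t)\sum_{i=1}^{n}\curchoice\Big(\curreward-\sum_{t'=t}^{\totaltime}\durationcdfi_i^{\type_t}(t'-t)\,\theta_{i,t'}\Big),\qquad \assortment\in\assortmentspace .
\]

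Next I would build the separation oracle. For a candidate $(\theta,\lambda)$, nonnegativity is checked directly, and for each pair $(t,\type_t)$ it remains to test whether $\lambda_{t,\type_t}$ dominates the maximum of the right-hand side over $\assortment\in\assortmentspace$. Defining the effective reward of product $i$ as $\hat R_i\triangleq\typedistribution_t(\type_t)\big(\curreward-\sum_{t'=t}^{\totaltime}\durationcdfi_i^{\type_t}(t'-t)\,\theta_{i,t'}\big)$, this maximum equals $\max_{\assortment\in\assortmentspace}\sum_{i=1}^{n}\hat R_i\,\curchoice$, which is a single call to the offline oracle. If the oracle value exceeds $\lambda_{t,\type_t}$, the returned assortment certifies a violated dual constraint; otherwise the pair is satisfied. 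Each $\hat R_i$ is computable in $O(\totaltime)$ time and there are $\sum_{t}\card{\typespace_t}$ pairs, so one separation step costs polynomially many oracle calls.

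The main obstacle is that the effective rewards $\hat R_i$ need not be nonnegative, whereas the oracle of \Cref{asp:oracle} accepts only $\mathbf{\hat R}\in\mathbb{R}_+^n$. I would resolve this by observing that any product with $\hat R_i\le 0$ can be dropped from an optimal assortment at no loss: by weak substitutability (\Cref{asp:substitutability}) removing it only raises the choice probabilities of the remaining products, by \Cref{asp:down-closed} the resulting set stays feasible, and the dropped product itself contributed a nonpositive amount. Formally, for any $\assortment$ one checks that replacing $\assortment$ by $\assortment\cap\{i:\hat R_i>0\}$ does not decrease $\sum_i \hat R_i\curchoice$, so the maximum is unchanged when each negative $\hat R_i$ is set to $0$. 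Hence the oracle may be invoked on $\max(\hat R,\mathbf 0)$ and the nonpositive-reward products then stripped from its output; this is the one place where substitutability and downward-closedness are essential to the reduction.

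Finally I would recover the primal. Running the ellipsoid method on the dual with this separation oracle terminates in time polynomial in the number of dual variables, the bit-length, and the per-step oracle cost, and it inspects only polynomially many dual constraints; let $\contain$ be the (polynomial-size) set of primal indices $(\assortment,t,\type_t)$ that ever appear. Because the ellipsoid run on the full dual and on the dual restricted to $\contain$ generate an identical sequence of query centers and returned hyperplanes, the two have the same optimal value. By strong duality the full primal optimum then equals the optimum of the \emph{reduced primal} obtained by fixing $\curalloctype=0$ for every index outside $\contain$. This reduced primal has polynomially many variables and constraints, so solving it directly and padding with zeros yields an optimal assignment $\{\curalloctypeopt\}$ supported on $\contain$, giving both the claimed $\textrm{Poly}(n,\totaltime,\sum_{t}\card{\typespace_t})$ running time and at most $\card{\contain}=\textrm{Poly}(n,\totaltime,\sum_{t}\card{\typespace_t})$ nonzero entries.
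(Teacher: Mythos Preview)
Your proposal is correct and follows essentially the same route as the paper: dualize $\EAR{\typedistributionsequence}$, use the offline assortment oracle as a separation oracle for the dual (handling negative effective rewards via weak substitutability and downward-closedness exactly as you do), run the ellipsoid method, and recover a sparse primal optimum from the polynomially many dual constraints actually inspected. The paper packages the last step as a standalone ellipsoid lemma rather than spelling it out, but the argument is identical.
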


\revcolor{In \Cref{apx:benchmark comparison}, we compare
the Bayesian expected LP benchmark with other benchmarks 
considered in the literature.}



\section{Near-optimal Algorithm for General Rental Durations}
\label{sec:stochastic}
In this section, we 
present our main result --
a near-optimal online simulation-based algorithm
with competitive ratio
at least $
\max\left(\tfrac{1}{2},1 - \loss\right)$ against Bayesian expected LP benchmark, where
\begin{equation}
\label{eq:loss}
    \varepsilon^*(x)\triangleq \underset{\gamma\in[0,1]}{\min}~1-\left(1-\gamma\right)\left(1 - \exp\left(-\frac{\gamma^2x}
    {2-\gamma}\right)\right)
\end{equation}
Let $\gammaopt$ be the optimal assignment of $\gamma$ in \Cref{eq:loss}. It is not hard to verify that
$\loss=O\left(\sqrt{\log(\mininventory)/\mininventory}\right)$ and is achieved at $\gammaopt=O\left(\sqrt{\log(\mininventory)/\mininventory}\right)$.
We first sketch our approach in \Cref{sec:sketch}. We then introduce a simulation-based algorithm
with competitive ratio $1 - \loss$
in \Cref{sec:large inventory},
and a different simulation-based algorithm to 
guarantee a competitive ratio of at least $\tfrac{1}{2}$
(even for small $\mininventory$)
in \Cref{sec:small inventory}. We finally present two simple hybrid algorithms that can obtain the best of two competitive ratios in \Cref{sec:hybrid}.
\subsection{High-level Sketch of Our Approach} 
\label{sec:sketch}
Let $\{\curalloctypeopt\}$ be the optimal assignment of $\EAR{\typedistributionsequence}$. 
As $\emptyset\in\assortmentspace$, without loss of generality we can only consider optimal assignments where:
\begin{equation*}
\displaystyle\sum\nolimits_{\assortment\in \assortmentspace}\curalloctypeopt= 1~~~~~~\forall t \in [\totaltime],\ \type_t\in\typespace_t
\end{equation*}
All of our simulation-based online algorithms in this paper follow four steps:
\mybox{

\setlength{\itemsep}{1pt}
  \setlength{\parskip}{1pt}
  \setlength{\parsep}{1pt}
  \openup 0.6em
 \noindent - At time $t=0$ (before starting):
 \vspace{1mm}
 \begin{enumerate}[label=(\roman*)]
     \item\textbf{Pre-processing}: Compute an optimal assignment $\{\curalloctypeopt\}$ of $\EAR{\typedistributionsequence}$ by invoking the offline oracle described in \Cref{asp:oracle}. Also, compute any other offline parameters that are occasionally needed by the algorithm.
 \end{enumerate}
\vspace{2mm}

\noindent - At each time $t=1,2,\ldots,T$:
\vspace{1mm}
\begin{enumerate}[label=(\roman*)]
\setlength{\itemsep}{1pt}
  \setlength{\parskip}{1pt}
  \setlength{\parsep}{1pt}
\setcounter{enumi}{1}
\item\textbf{Simulation}: Upon realizing consumer type $\type_t$ at time $t$, an outer procedure suggests $\hat{\assortment}\in\assortmentspace$ to be assorted by sampling $\hat{\assortment}$ from the distribution $\{\curalloctypeopt\}_{\assortment\in\assortmentspace}$ over $\assortmentspace$.
\item\textbf{Discarding}: For each product $i\in\hat{\assortment}$, a separate inner discarding procedure decides whether to remove this product from the final assortment, given the history up to time $t$ and realized type $\type_t$. If no units of product $i$ is available on-hand, it is discarded automatically to guarantee inventory feasibility. Otherwise, the inner procedure of product $i$ decides to discard or not. Let $\bar{\assortment}\subseteq \hat{\assortment}$ be the set of undiscarded products. 
\item \textbf{Post-processing}: Given $\type_t$, $\hat{\assortment}$ and $\bar{\assortment}$, pick a probability distribution $\pdfoversets$ over all subsets of $\bar{\assortment}$. Then sample
an assortment $\tilde\assortment\sim\pdfoversets$
and offer it to the consumer.
\end{enumerate}}
In the above four-step layout, step~(\rom{2}) is a loss-less randomized rounding for the optimal solution of $\EAR{\typedistributionsequence}$; however the resulting assortment only guarantees inventory feasibility of each product in expectation. 
The role of step~(\rom{3}) and step~(\rom{4}) is to identify a (randomized) subset of this feasible in-expectation assortment, to not only guarantee inventory feasibility in each sample path, but to also guarantee that the expected loss due to discarded products is small.



\subsection{
 Large Initial Inventory:
Towards
Competitive Ratio 
\texorpdfstring{$\boldsymbol{1 - \loss}$}{}}
\label{sec:large inventory}
 \newcommand\mycommfont[1]{\footnotesize\ttfamily\textcolor{royalazure}{#1}}
\SetCommentSty{mycommfont}

 The main idea behind the algorithm of this subsection is discarding each product independently at random with probability $\gamma= O(\sqrt{\log(\mininventory)/\mininventory})$ in step~(\rom{3}) at each time $t$. Intuitively speaking, this discarding tries to leave enough probability for not violating any of the inventory constraints at each time $t$. To see this, if discarding a product does not change the choice probability of another assorted product, the expected {number of unavailable units} of each product $i$ at each time $t$ is at most $(1-\gamma)\inventory_i$ -- due to the feasibility in-expectation of sampled sets in step~(\rom{2}). Now consider the rental indicator random variables of product $i$, i.e., random variables indicating whether this product is rented at each time or not. If these random variables are mutually independent across time, then we can use simple concentration bounds for the sum of independent random variables to prove our claim.
 
 There are two major issues with the above approach:
\begin{enumerate}[label=(\Roman*)]
     \item Under weak substitutability (\Cref{asp:substitutability}), discarding product $i$ weakly increases the choice probability of another assorted product $j\neq i$. Therefore, the probability of an available unit of product $j$ being rented at each time $\tau<t$ becomes larger than expected, which in turn increases the expected {number of unavailable units} of this product at time $t$ if we only simulate the expected LP's optimal solution and discard each product independently with probability $\gamma$. 
     \item As resources are reusable and inventories are limited, the rental indicator random variable of product $i$ at time $\tau<t$ is \revcolor{possibly positively} correlated with the rental indicator random variable of the same product at time $t$; in fact, first indicator forces the second indicator to be zero  when the realized rental duration $d_\tau$ at time $t$ is no smaller than $t-\tau$, last unit of the product is rented at time $\tau$, and no units of the product return during $[\tau+1,t]$.
     
\end{enumerate}
 We address the first issue in \Cref{sec:subassortment} by changing the algorithm, and address the second issue in \Cref{sec:analysis} by modifying the analysis. 

\subsubsection{Sub-assortment Sampling}
\label{sec:subassortment}
To fix the first issue, we propose the \emph{sub-assortment sampling} procedure -- a post-processing procedure to be used in step~(\rom{4}). This procedure ensures that products which were not discarded 
in step~(\rom{2}) are rented by the arriving consumer with \emph{exactly} the same
probability as in the optimal solution of the expected LP benchmark. More formally, the sub-assortment sampling induces a distribution $\pdfoversets$ over subsets of $\bar\assortment$ at each time $t$, so that 
\begin{equation}
\label{eq:subassortment}
\forall i\in\bar\assortment:~~~~~~~\expect[\tilde \assortment\sim \pdfoversets]{\phi^{\type_t}(\tilde\assortment,i)}
=\phi^{\type_t}(\hat\assortment, i)~.
\end{equation}
It is not clear a priori whether such a distribution $\pdfoversets$ exists, yet alone can be sampled from in polynomial time (polynomial in number of products $n$); nevertheless, for any general choice model satisfying weak substitutability (\Cref{asp:substitutability})
and downward-closed feasibility (\Cref{asp:down-closed}), we show such a distribution $\pdfoversets$ exists and we introduce Procedure~\ref{alg:sample assortment} that recursively samples a set from $\pdfoversets$ in polynomial time.\footnote{
It is noteworthy that \citet{GGU-20} independently and concurrently discovered an
idea similar to our 
sub-assortment sampling
for settings with adversarial arrival 
and reusable resources.}
\ifsubassortmentproof
\else
We defer the proof of \Cref{prop:assortment polytope}
to \Cref{apx:subassortment sampling}
for space reason.
\fi

\begin{myprocedure}
\caption{\textsc{Sub-assortment Sampling}}
\label{alg:sample assortment} 

\KwIn{choice model $\choice$, assortment $\assortment=\{1,2,\ldots,m\}$,	target probabilities $\{p_i\}_{i\in \assortment}$}

\vspace{2mm}
Let $\sigma:[m]\rightarrow [m]$ be a permutation
	such that $1\geq \frac{p_{\sigma(1)}}{\choice(\assortment,\sigma(1))} \geq
	\frac{p_{\sigma(2)}}{\choice(\assortment,\sigma(2))} 
	\geq \cdots \geq
	\frac{p_{\sigma(m)}}{\choice(\assortment,\sigma(m))}\geq 0$
	
	\tcc{Define 
	$\frac{p_{\sigma(j)}}{\choice(\assortment,\sigma(j))} = 1$
	if $\choice(\assortment,\sigma(j)) = 0$ or $S=\emptyset$.}
\vspace{2mm}
\If{$\frac{p_{\sigma(m)}}{\choice(\assortment,\sigma(m))}=1$}{
\vspace{2mm}
\Return $\tilde\assortment\gets\assortment$
}
\vspace{1mm}
\Else{
\vspace{2mm}
Let $q_0=1-\frac{p_{\sigma(1)}}{\choice(\assortment,\sigma(1))}$, $q_m=\frac{p_{\sigma(m)}}{\choice(\assortment,\sigma(m))}$, and $q_j=\frac{p_{\sigma(j)}}{\choice(\assortment,\sigma(j))}-\frac{p_{\sigma(j+1)}}{\choice(\assortment,\sigma(j+1))}$ for $j=1,\ldots,m-1$

\tcc{Note that $\sum_{j=0}^m q_j=1$}
\vspace{1mm}
Sample $j^* \sim \{q_j\}_{j=0}^{m}$

\If{$j^* = 0$}{
\Return $\emptyset$
}

\If{$j^* = m$}{
\Return $\assortment$
}

Let $\assortment' \gets \{\sigma^{-1}(j)\}_{j=1}^{j^*}$

\Return
 	$\tilde\assortment\gets\textsc{Sub-assortment Sampling}\left(\choice,\assortment',\{\choice(\assortment, i)\}_{i\in \assortment'}\right)$
}
\end{myprocedure}

\begin{restatable}{proposition}{polytope}
\label{prop:assortment polytope}
For any weak substitutable
and downward-closed feasible choice model $\phi$,
any assortment $\assortment\in \assortmentspace$,
and any target probabilities $\{p_i\}_{i\in \assortment}$
such that $p_i \leq \choice(\assortment, i)$ for all $i\in \assortment$,
Procedure~\ref{alg:sample assortment} 
outputs a randomized assortment $\tilde\assortment$ that satisfies
(i) $\tilde\assortment\subseteq\assortment$; (ii)
$\expect[\tilde\assortment]{\choice(\tilde\assortment, i)} = p_i$ 
for all $i\in \assortment$.
Moreover, it runs in time $\textrm{Poly}(n)$.
\end{restatable}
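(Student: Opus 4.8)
The plan is to prove both claims by strong induction on the assortment size $m = \setsize{\assortment}$, with the inductive hypothesis that for every assortment of size smaller than $m$, together with any target vector satisfying the stated precondition, the procedure returns a random subset satisfying (i) and (ii). Termination, and hence the polynomial running time, will follow from the observation that every recursive call is made on a strictly smaller prefix $\assortment' = \{\sigma(1), \dots, \sigma(j^*)\}$ with $j^* \leq m-1$; thus the recursion depth is at most $m \leq n$, and each level only sorts and performs $O(m)$ evaluations of $\phi$. Establishing correctness simultaneously establishes existence of the distribution $\pdfoversets$, since the procedure is itself the constructive witness.

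For correctness of (ii), I would first relabel the products of $\assortment$ so that the sorted ratios $r_k \triangleq p_k / \phi(\assortment, k)$ satisfy $1 \geq r_1 \geq \cdots \geq r_m \geq 0$ (with the procedure's convention $r_k = 1$ whenever $\phi(\assortment,k)=0$, in which case the precondition forces $p_k = 0$). The weights then read $q_0 = 1 - r_1$, $q_j = r_j - r_{j+1}$ for $1 \le j \le m-1$, and $q_m = r_m$, which telescope to $\sum_{j=0}^m q_j = 1$, so $\{q_j\}$ is a genuine distribution. The procedure returns $\emptyset$ with probability $q_0$, returns $\assortment$ with probability $q_m$, and with probability $q_j$ recurses on the prefix $\assortment'_j = \{1,\dots,j\}$, passing the \emph{original} choice probabilities $\{\phi(\assortment, i)\}_{i \in \assortment'_j}$ as the new target vector.

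I would then fix a product $k$ and compute $\mathbb{E}[\phi(\tilde\assortment, k)]$ by conditioning on the sampled index $j^*$. When $j^*=0$ the contribution is $0$ since $\phi(\emptyset,k)=0$ by \Cref{asp:substitutability}; when $j^*=m$ it is $\phi(\assortment,k)$. For $1\le j^* \le m-1$, if $k\le j^*$ then $k\in\assortment'_{j^*}$ and the inductive hypothesis applied to the recursive call (whose target for $k$ is $\phi(\assortment,k)$) gives conditional expectation $\phi(\assortment,k)$; if $k>j^*$ then $k\notin\assortment'_{j^*}$, and since $\tilde\assortment\subseteq\assortment'_{j^*}$ we have $k\notin\tilde\assortment$, so $\phi(\tilde\assortment,k)=0$ under the standard convention that an unoffered product is chosen with probability zero. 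Collecting terms gives $\mathbb{E}[\phi(\tilde\assortment,k)] = \phi(\assortment,k)\sum_{j=k}^m q_j$, and the telescoping identity $\sum_{j=k}^m q_j = r_k$ collapses this to $\phi(\assortment,k)\,r_k = p_k$, as required; the terminal case $r_m=1$ (all ratios equal to one, handled by returning $\assortment$) and the degenerate case $\phi(\assortment,k)=0$ are checked directly, and claim (i) is immediate since each returned set is $\emptyset$, $\assortment$, or a subset of a prefix of $\assortment$.

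The step I expect to be the crux is verifying that each recursive call is legitimate, i.e.\ that its precondition is preserved. The recursion sorts by the choice probabilities $\phi(\assortment'_j, i)$ inside the smaller set but keeps $\phi(\assortment, i)$ as the targets, so the new ratios are $\phi(\assortment, i)/\phi(\assortment'_j, i)$; weak substitutability (\Cref{asp:substitutability}) is precisely what guarantees $\phi(\assortment'_j, i) \ge \phi(\assortment, i)$ for $i \in \assortment'_j \subseteq \assortment$, keeping these ratios in $[0,1]$ and satisfying the hypotheses of the inductive step, while downward-closed feasibility (\Cref{asp:down-closed}) ensures $\assortment'_j \in \assortmentspace$ so that the procedure may legitimately be invoked on it. Once these two structural facts are in place, the remainder is the routine telescoping bookkeeping described above.
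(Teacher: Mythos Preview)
Your proposal is correct and follows essentially the same approach as the paper: induction on the assortment size $m$, conditioning on the sampled index $j^*$, invoking the inductive hypothesis on the strictly smaller prefix with target probabilities $\phi(\assortment,\cdot)$, and collapsing via the telescoping identity $\sum_{j=k}^m q_j = p_k/\phi(\assortment,k)$. You are in fact slightly more careful than the paper in explicitly verifying the recursive precondition via weak substitutability and noting that downward-closed feasibility is what keeps $\assortment'_j \in \assortmentspace$.
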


\begin{remark}
To guarantee \Cref{eq:subassortment} given any $(\type_t,\hat\assortment,\bar\assortment)$ at step~(\rom{4}), we invoke \Cref{prop:assortment polytope} by setting $\choice\gets\choice^{\type_t}$, $\assortment\gets \bar\assortment$, and $p_i\gets \choice^{\type_t}(\hat{S},i)$ for all $i\in\bar\assortment$. Note that $p_i=\choice^{\type_t}(\hat{S},i)\leq \choice^{\type_t}(\bar\assortment,i)$ for all $i\in\bar\assortment$, simply because of weak substitutability and the fact that $\bar\assortment\subseteq \hat\assortment$. 
\end{remark}
\ifsubassortmentproof
{\begin{proof}{\emph{Proof of \Cref{prop:assortment polytope}.}}
Without loss of generality, we assume $\sigma$ is the identity permutation, i.e., $\sigma(i)=i$ for $i\in[m]$. To show the polynomial
running time,
observe that 
(a) the running time in each recursion
is $\textrm{Poly}(n)$;
and 
(b) the number of iterations of this recursive algorithm is 
at most $n$, 
since $|\assortment| \leq n$ at the beginning and the size of the $\assortment'$ that is the input of the next recursive call shrinks by $1$ at each iteration, i.e.,  $|\assortment'| \leq |\assortment| - 1$.

Property (i) holds by construction.
We show property (ii) by induction on 
$m = |\assortment|$,
i.e., size of 
assortment $\assortment$. 
In this induction, 
we use another simple property (iii)
that
$\phi(\assortment, i)\left(\sum_{j=i}^m q_j\right)  = p_{i}$
for all $i\in \assortment$,
which immediately hold by construction.

\noindent\emph{Base Case ($m=1$).} 
In this case
Procedure~\ref{alg:sample assortment} randomly outputs
$\emptyset$ or $\assortment$.
By property (iii), the induction statement
holds.

\noindent\emph{Inductive step ($m > 1$).} 
Fix an arbitrary product $i\in \assortment$.
Notice that by construction
$
\expect[\tilde\assortment]{\choice(\tilde\assortment, i)
    \condition j^* < i} = 0$,
and 
$
\expect[\tilde\assortment]{\choice(\tilde\assortment, i)
    \condition j^* = m}
    =
    \choice(\assortment, i)$.
For any realized value $j^* = i, \dots, m - 1$,
and its corresponding $\assortment' = \{1,\ldots,j^*\}$, we can use the induction hypothesis for the assortment $\assortment'$
with probabilities $p'_i=\choice(S, i)$ for each $i\in\assortment'$. This is true simply because $\lvert\assortment'\rvert \leq m-1$ and that $\choice(S, i) \leq \choice(S', i)$ for each $i\in\assortment'$ as the choice model $\choice$ is weak substitute. By 
invoking the induction hypothesis when we use $\assortment'$ in the next recursive call,  we have 
    $\expect[\tilde\assortment]{\choice(\tilde\assortment, i)
    \condition j^* = j} = \choice(\assortment, i)$ for all $j = i, \dots, m - 1$.
    Thus, invoking property (iii),
\begin{align*}
\expect[\tilde\assortment]{\choice(\tilde\assortment, i)
    }
    =
    \sum_{j=0}^m
    q_j\expect[\tilde\assortment]{\choice(\tilde\assortment, i)
    \condition j^* = j}
    =
    \left(
    \sum_{j=i}^m q_j\right)\phi(\assortment,i)
    =
    p_i
\end{align*}
which completes the inductive step and finishes the proof.
\hfill\halmos
\end{proof}}
\fi

\subsubsection{The Algorithm and Analysis} 
\label{sec:analysis}
Now we present our first simulation-based algorithm (\Cref{alg:large inventory}) 
with its competitive ratio guarantee 
(\Cref{thm:competitive-ratio large inventory}).
\revcolor{We
defer the formal
proof of \Cref{thm:competitive-ratio large inventory}
to \Cref{apx:alg 1 proof}.}

\begin{algorithm}[htb]
 	\caption{Simulation-based Algorithm with Random Discarding}
 	\label{alg:large inventory}
 	\KwIn{discarding probability $\gamma\in[0,1]$}
 	\vspace{2mm}
 	\emph{\underline{Pre-processing:}} Compute the optimal assignment $\{\curalloctypeopt\}$ of $\EAR{\typedistributionsequence}$ by invoking the offline assortment oracle  (\Cref{asp:oracle})
 	\vspace{2mm}
 	
 	\For{$t=1$ to $T$}{
 	\tcc{consumer $t$ with type $\type_t\sim\typedistribution_t$ arrives} 
 	\vspace{1mm}
 	
 	\emph{\underline{Simulation:}} Upon realizing consumer type $\type_t$, sample $\hat{\assortment}_t\sim \{\curalloctypeopt\}_{\assortment\in\assortmentspace}$
 	
 	\vspace{2mm}
 	\emph{\underline{Discarding:}} Initialize $\bar\assortment_t\gets\hat\assortment_t$
 	
 	\For{each product $i\in\hat\assortment_t$}{
 	\vspace{1mm}
 	Flip an independent coin and remove $i$ from $\bar\assortment_t$ with probability $\gamma$
 	
 	\If {there is no available unit of product $i$}{
 	Remove $i$ from $\bar\assortment_t$}}
 	
 	\vspace{2mm}
 	\emph{\underline{Post-processing:}} Let $\tilde{S}_t\gets \textsc{Sub-assortment Sampling} \left(\choice^{\type_t}, \bar\assortment_t,\{\choice(\hat\assortment_t, i)\}_{i\in \bar\assortment_t}\right)$
 	
 	\tcc{Send a query call to Procedure~\ref{alg:sample assortment} with appropriate input arguments}
 	\vspace{2mm}
 	
 	Offer assortment $\tilde{\assortment}_t$ to consumer $t$
 	
 	}
\end{algorithm}

\begin{theorem}
	\label{thm:competitive-ratio large inventory}
	By setting $\gamma=\gammaopt$, the competitive ratio of \Cref{alg:large inventory} against the Bayesian expected LP benchmark $\EAR{\typedistributionsequence}$ is at least $1 -\loss=1-O\left(\sqrt{\log(\mininventory)/\mininventory}\right) $. 
	Moreover, it runs in time $\textrm{Poly}(n,T, \sum_{t\in[\totaltime]}\lvert\typespace_t\rvert)$ given oracle access to an offline algorithm for assortment optimization (\Cref{asp:oracle}).
\end{theorem}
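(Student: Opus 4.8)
The plan is to decompose the algorithm's expected revenue into per-product contributions, lower-bound each against the corresponding term of the objective of $\EAR{\typedistributionsequence}$, and then optimize over $\gamma$. The running-time claim is the easy part: pre-processing solves the expected LP in polynomial time by \Cref{prop:EAR running time}, which also guarantees only polynomially many nonzero assortment probabilities so that the simulation step samples $\hat\assortment_t$ efficiently; the per-period random discarding is trivial, and the post-processing (\textsc{Sub-assortment Sampling}) runs in $\textrm{Poly}(n)$ time by \Cref{prop:assortment polytope}. Summing over the $T$ periods yields the stated running time. The substance is the competitive ratio.

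First I would use sub-assortment sampling to dispose of issue~(\Rom{1}). By \Cref{prop:assortment polytope} and the remark following it, conditioned on product $i$ surviving into $\bar\assortment_t$ --- i.e.\ $i$ is sampled into $\hat\assortment_t$, is not coin-discarded, and is available --- the arriving consumer rents $i$ with probability exactly $\choice^{\type_t}(\hat\assortment_t,i)$, the very same choice probability as in the sampled set, and this holds \emph{regardless of which other products were discarded}. This is precisely what cancels the weak-substitution inflation, so the marginal rental probability of $i$ decouples from the discarding decisions made for the other products.

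The crux, and the main obstacle, is a coupling that controls the temporal correlation of issue~(\Rom{2}). I would run, on the same realizations of types, LP samples, discarding coins, choices, and rental durations, an idealized \emph{free} process that ignores inventory (equivalently, infinite inventory, so nothing is ever discarded for unavailability). Writing $\tilde{X}_{i,\tau}$ and $X_{i,\tau}$ for the free and real rental indicators of product $i$, the choice-probability guarantee above lets me couple them so that $X_{i,\tau}=\tilde{X}_{i,\tau}\,\indicator{i\text{ available at }\tau}$ pointwise; hence the real rentals of $i$ form a subset of the free rentals, and by induction the number of units of $i$ in use just before period $\tau$ under the real process is dominated by the corresponding free count $\tilde{U}_{i,\tau}$. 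The payoff of working in the free process is independence: each per-period ``rented-at-$\tau$-and-still-in-use'' indicator depends only on the randomness of period $\tau$, so $\tilde{U}_{i,\tau}$ is a sum of independent indicators and $\tilde{X}_{i,\tau}$ is independent of $\tilde{U}_{i,\tau}$ (which is measurable w.r.t.\ earlier periods). Combining $X_{i,\tau}\ge \tilde{X}_{i,\tau}\,\indicator{\tilde{U}_{i,\tau}<\inventory_i}$ with this independence yields the factorization $\prob{X_{i,\tau}=1}\ge \prob{\tilde{X}_{i,\tau}=1}\,\prob{\tilde{U}_{i,\tau}<\inventory_i}$.

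It then remains to bound the two factors. Since the free-process rental probability of $i$ in each period equals $(1-\gamma)$ times its LP rental probability, and durations are fresh independent samples, the first (inventory) constraint of $\EAR{\typedistributionsequence}$ gives $\expect{\tilde{U}_{i,\tau}}\le (1-\gamma)\inventory_i$, and a multiplicative Chernoff bound on the sum of independent indicators yields $\prob{\tilde{U}_{i,\tau}\ge \inventory_i}\le \exp(-\inventory_i\gamma^2/(2-\gamma))$ --- exactly the exponent appearing in \eqref{eq:loss}. Thus $\prob{X_{i,\tau}=1}$ is at least $(1-\gamma)\bigl(1-\exp(-\inventory_i\gamma^2/(2-\gamma))\bigr)$ times the per-period LP rental value of $i$; weighting by $\reward_i^{\type_\tau}$ and summing over periods, the expected revenue from product $i$ is at least $(1-\gamma)\bigl(1-\exp(-\inventory_i\gamma^2/(2-\gamma))\bigr)$ times its contribution to the LP objective. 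This per-product factor is increasing in $\inventory_i\ge\mininventory$, so summing over products and choosing $\gamma=\gammaopt$ gives a competitive ratio of $\max_{\gamma}(1-\gamma)\bigl(1-\exp(-\mininventory\gamma^2/(2-\gamma))\bigr)=1-\loss$, as claimed.
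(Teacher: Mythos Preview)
Your proposal is correct and follows essentially the same approach as the paper: couple the algorithm with a hypothetical inventory-free process (using the marginal guarantee of sub-assortment sampling so that, conditioned on $i\in\bar\assortment_t$, product $i$ is selected with probability exactly $\choice^{\type_t}(\hat\assortment_t,i)$ regardless of which other products were discarded), use the domination $X_{i,\tau}\ge \tilde X_{i,\tau}\,\indicator{\tilde U_{i,\tau}<\inventory_i}$, exploit the independence across periods in the free process to apply a multiplicative Chernoff bound on $\tilde U_{i,\tau}$, and then factor using the independence of $\tilde U_{i,\tau}$ from period-$\tau$ randomness. One small tightening: when you ``weight by $\reward_i^{\type_\tau}$'' you should keep the random reward inside the expectation before factoring, i.e., use $\expect{\reward_i^{\type_\tau}\tilde X_{i,\tau}\,\indicator{\tilde U_{i,\tau}<\inventory_i}}=\expect{\reward_i^{\type_\tau}\tilde X_{i,\tau}}\cdot\prob{\tilde U_{i,\tau}<\inventory_i}$, which holds because $\tilde U_{i,\tau}$ is measurable with respect to periods before~$\tau$ and hence independent of $(\type_\tau,\hat\assortment_\tau,\tilde X_{i,\tau})$; this is exactly how the paper handles it.
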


\revcolor{
Before presenting the proof sketch of \Cref{thm:competitive-ratio large inventory},
we first discuss how to use a simple concentration argument
to obtain a competitive ratio upper bound of 
$1-O\left(\sqrt{\log(\mininventory n T)/\mininventory}\right)$
for a slightly modified version of \Cref{alg:large inventory}.
The simple concentration argument works as follows:
consider a modified version of \Cref{alg:large inventory}
where the random discarding step is fully correlated.
Namely, instead of flipping an independent coin for
each product $i\in \bar \assortment_t$,
we flip a single coin, and set
$\bar \assortment_t$
as all available products in $\hat\assortment_t$
with probability $\gamma$, and empty set otherwise.
Consider this modified \Cref{alg:large inventory}
until the first time that one of the sampled products is not available. 
Note that before such a bad event happens,
all allocations are independent over consumers,
and the sub-assortment sampling step is trivial
(i.e., return $\tilde \assortment_t = \bar \assortment_t$ deterministically). 
We can bound the probability of this bad event at each time and for each product by an exponentially small probability in $\mininventory$
due to the Chernoff bound. 
Applying union bound for every time $t\in[T]$ and 
every $i\in[n]$, and using the fact that when such a bad event happens we have no control over the revenue,
the resulting competitive ratio 
can be upper-bounded by $1-O\left(\sqrt{\log(\mininventory n T)/\mininventory}\right)$. Note that this competitive ratio 
has an extra $\sqrt{\log(nT)}$ dependence, and thus is 
strictly worse than the competitive ratio 
$1 -\loss$ stated in \Cref{thm:competitive-ratio large inventory},
which is independent of the number of products $n$ 
and the number of consumers $T$.

To prove \Cref{thm:competitive-ratio large inventory}
with the competitive ratio $1 -\loss$ 
that is independent of the number of products $n$ 
and the number of consumers $T$,
we use a careful coupling argument in our analysis of \Cref{alg:large inventory} that couples the rental indicator random variables of our algorithm with an alternative hypothetical algorithm. This hypothetical algorithm \emph{ignores} inventory constraints of all the products and only simulates the expected LP's optimal solution combined with independent discarding of each product with probability $\gamma$. This algorithm generates an independent sequence of rental indicator random variables, allowing us to use simple concentration bounds. Importantly, this coupling trick is only possible because of the guarantee of the sub-assortment sampling procedure in \Cref{eq:subassortment} (see the formal proof in \Cref{apx:alg 1 proof}).
}


\begin{remark}
Our analysis in this section mainly 
focuses on the asymptotic regime where $\mininventory$ is large; however, we can still plot the competitive ratio $1-\loss$ of \Cref{alg:large inventory}
for small values of $\mininventory$ by numerically evaluating $\loss$ using \Cref{eq:loss}.
See the black solid curve in \Cref{fig:competitive ratio}.
\end{remark}

\begin{figure}
    \centering
    \begin{tikzpicture}\begin{axis}
[height=6.5cm,width=8.5cm,
xmin=1,xmax=100,ymin=0,ymax=1,
axis x line*=bottom,
axis y line*=left,
xtick={1, 50, 100},
ytick={0.25, 0.5, 0.75, 1},
xlabel={$\mininventory$},
ylabel={competitive ratio},
]
\addplot [draw=black, line width=0.5mm] coordinates {
(1.0000, 0.0946)
(2.0000, 0.1629)
(3.0000, 0.2153)
(4.0000, 0.2573)
(5.0000, 0.2920)
(6.0000, 0.3213)
(7.0000, 0.3466)
(8.0000, 0.3688)
(9.0000, 0.3884)
(10.0000, 0.4060)
(11.0000, 0.4219)
(12.0000, 0.4363)
(13.0000, 0.4495)
(14.0000, 0.4617)
(15.0000, 0.4730)
(16.0000, 0.4834)
(17.0000, 0.4932)
(18.0000, 0.5023)
(19.0000, 0.5109)
(20.0000, 0.5189)
(21.0000, 0.5265)
(22.0000, 0.5337)
(23.0000, 0.5406)
(24.0000, 0.5471)
(25.0000, 0.5532)
(26.0000, 0.5591)
(27.0000, 0.5648)
(28.0000, 0.5701)
(29.0000, 0.5753)
(30.0000, 0.5803)
(31.0000, 0.5850)
(32.0000, 0.5896)
(33.0000, 0.5940)
(34.0000, 0.5982)
(35.0000, 0.6023)
(36.0000, 0.6063)
(37.0000, 0.6101)
(38.0000, 0.6138)
(39.0000, 0.6174)
(40.0000, 0.6208)
(41.0000, 0.6242)
(42.0000, 0.6274)
(43.0000, 0.6306)
(44.0000, 0.6337)
(45.0000, 0.6366)
(46.0000, 0.6396)
(47.0000, 0.6424)
(48.0000, 0.6451)
(49.0000, 0.6478)
(50.0000, 0.6504)
(51.0000, 0.6530)
(52.0000, 0.6554)
(53.0000, 0.6579)
(54.0000, 0.6602)
(55.0000, 0.6625)
(56.0000, 0.6648)
(57.0000, 0.6670)
(58.0000, 0.6692)
(59.0000, 0.6713)
(60.0000, 0.6733)
(61.0000, 0.6754)
(62.0000, 0.6773)
(63.0000, 0.6793)
(64.0000, 0.6812)
(65.0000, 0.6830)
(66.0000, 0.6849)
(67.0000, 0.6867)
(68.0000, 0.6884)
(69.0000, 0.6901)
(70.0000, 0.6918)
(71.0000, 0.6935)
(72.0000, 0.6951)
(73.0000, 0.6967)
(74.0000, 0.6983)
(75.0000, 0.6999)
(76.0000, 0.7014)
(77.0000, 0.7029)
(78.0000, 0.7043)
(79.0000, 0.7058)
(80.0000, 0.7072)
(81.0000, 0.7086)
(82.0000, 0.7100)
(83.0000, 0.7113)
(84.0000, 0.7127)
(85.0000, 0.7140)
(86.0000, 0.7153)
(87.0000, 0.7165)
(88.0000, 0.7178)
(89.0000, 0.7190)
(90.0000, 0.7202)
(91.0000, 0.7214)
(92.0000, 0.7226)
(93.0000, 0.7238)
(94.0000, 0.7249)
(95.0000, 0.7261)
(96.0000, 0.7272)
(97.0000, 0.7283)
(98.0000, 0.7294)
(99.0000, 0.7305)
(100.0000, 0.7315)
};
\addplot [draw=blue, dashed, line width=0.7mm] coordinates {
(1.0000, 0.5000)
(2.0000, 0.5000)
(3.0000, 0.5000)
(4.0000, 0.5000)
(5.0000, 0.5000)
(6.0000, 0.5000)
(7.0000, 0.5000)
(8.0000, 0.5000)
(9.0000, 0.5000)
(10.0000, 0.5000)
(11.0000, 0.5000)
(12.0000, 0.5000)
(13.0000, 0.5000)
(14.0000, 0.5000)
(15.0000, 0.5000)
(16.0000, 0.5000)
(17.0000, 0.5000)
(18.0000, 0.5023)
(19.0000, 0.5109)
(20.0000, 0.5189)
(21.0000, 0.5265)
(22.0000, 0.5337)
(23.0000, 0.5406)
(24.0000, 0.5471)
(25.0000, 0.5532)
(26.0000, 0.5591)
(27.0000, 0.5648)
(28.0000, 0.5701)
(29.0000, 0.5753)
(30.0000, 0.5803)
(31.0000, 0.5850)
(32.0000, 0.5896)
(33.0000, 0.5940)
(34.0000, 0.5982)
(35.0000, 0.6023)
(36.0000, 0.6063)
(37.0000, 0.6101)
(38.0000, 0.6138)
(39.0000, 0.6174)
(40.0000, 0.6208)
(41.0000, 0.6242)
(42.0000, 0.6274)
(43.0000, 0.6306)
(44.0000, 0.6337)
(45.0000, 0.6366)
(46.0000, 0.6396)
(47.0000, 0.6424)
(48.0000, 0.6451)
(49.0000, 0.6478)
(50.0000, 0.6504)
(51.0000, 0.6530)
(52.0000, 0.6554)
(53.0000, 0.6579)
(54.0000, 0.6602)
(55.0000, 0.6625)
(56.0000, 0.6648)
(57.0000, 0.6670)
(58.0000, 0.6692)
(59.0000, 0.6713)
(60.0000, 0.6733)
(61.0000, 0.6754)
(62.0000, 0.6773)
(63.0000, 0.6793)
(64.0000, 0.6812)
(65.0000, 0.6830)
(66.0000, 0.6849)
(67.0000, 0.6867)
(68.0000, 0.6884)
(69.0000, 0.6901)
(70.0000, 0.6918)
(71.0000, 0.6935)
(72.0000, 0.6951)
(73.0000, 0.6967)
(74.0000, 0.6983)
(75.0000, 0.6999)
(76.0000, 0.7014)
(77.0000, 0.7029)
(78.0000, 0.7043)
(79.0000, 0.7058)
(80.0000, 0.7072)
(81.0000, 0.7086)
(82.0000, 0.7100)
(83.0000, 0.7113)
(84.0000, 0.7127)
(85.0000, 0.7140)
(86.0000, 0.7153)
(87.0000, 0.7165)
(88.0000, 0.7178)
(89.0000, 0.7190)
(90.0000, 0.7202)
(91.0000, 0.7214)
(92.0000, 0.7226)
(93.0000, 0.7238)
(94.0000, 0.7249)
(95.0000, 0.7261)
(96.0000, 0.7272)
(97.0000, 0.7283)
(98.0000, 0.7294)
(99.0000, 0.7305)
(100.0000, 0.7315)
};
\end{axis}\end{tikzpicture}
    \caption{Competitive ratio of simulation-based algorithms:
    Black solid curve corresponds to \Cref{alg:large inventory};
    blue dashed curve corresponds to 
    the hybrid between \Cref{alg:large inventory}
    and \Cref{alg:SB} (see \Cref{sec:hybrid}).
    }
    \label{fig:competitive ratio}
\end{figure}
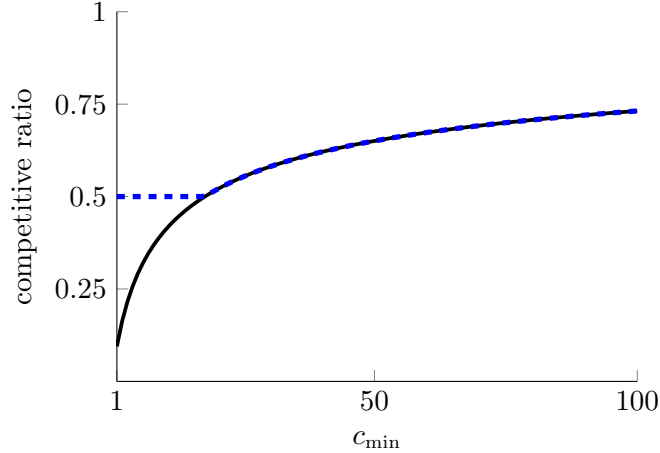

\vspace{-0.5cm}
\subsection{
Small Initial Inventory: Towards
Competitive Ratio \texorpdfstring{$\mathbf{\tfrac{1}{2}}$}{}}
\label{sec:small inventory}

In this subsection, we propose our  second simulation-based algorithm. The main difference between this algorithm and \Cref{alg:large inventory} is in the discarding step: if no units of product $i$ is available on-hand, it is discarded automatically to guarantee inventory feasibility; otherwise, it is only selected in the final assortment if $\curreward\geq \curprice$, where $\curprice$ are \emph{non-adaptive} thresholds computed by the algorithm up-front (will be discussed later). The aim of this discarding procedure is to  to guarantee that only available products with high enough  rental fees are assorted.


 Technically speaking, for each product $i$ one can consider a separate DP to optimally make discarding decisions. This DP will maximize per-unit revenue-to-go of assorted units of product $i$ over the finite time horizon $[1:T]$,  given the randomized suggestion of step (\rom{2}) (recall \Cref{sec:sketch}). The drawback is the need for a high-dimensional state variable that keeps track of the on-hand product inventories, as well the inventory of units of the product that are in use (and will return to inventory at different times). A major ingredient of our algorithm is to replace this high-dimensional DP with a simple one that is \emph{inventory independent}, and uses an optimistic upper-bound of $\inventory_i$ on the actual inventory in the Bellman equation for updating optimal per-unit revenue-to-go of product $i$.

 \subsubsection{Dynamic Programming for Per-unit Revenue-to-go with Replenishment.} 
 \label{sec:stochastic-finite-simple-DP}
In the rest of this subsection, let $\curexantetype\triangleq\curalloctypeopt\typedistribution_t(\type_t)$ for every $\assortment$,$t$ and $\type_t$. Suppose at each time $t$, a new independent consumer type $\type_t\sim \typedistribution_t$ is realized. Let $\hat{S}\sim\curexantetype$ denote the randomized subset sampled in step~(\rom{2})(simulation step). Fix a product $i\in[n]$, with initial inventory of $\inventory_i$ units. Now, consider a hypothetical scenario where an exogenous process \emph{replenishes the inventory} at each time to guarantee we always have $\inventory_i$ units of the product on-hand, no matter how many units are currently under rental. In this new problem, the goal is to design an online policy to discard or accept units of the reusable product once suggested in step~(\rom{2}), in order to maximize the per-unit revenue-to-go of renting this product. We can formulate this problem using a simple dynamic programming, where $\valueDPsimple_{i,t}$ is the optimal per-unit revenue-to-go of product $i$ during time interval $[t:T]$. Compared to the original high-dimensional DP for solving the optimum discarding, this DP is \emph{optimistic}
in that 
it ``imagines'' the deficiency in inventory
is replenished every period.

 As a convention, let $\valueDPsimple_{i,T+1}=0$. To write the Bellman update equation of the optimistic DP at time $t$ using backward induction, suppose type $\type_t$ is realized and $\hat{\assortment}=\assortment$ (which happens w.p. $\curexantetype$). If optimal policy decides to discard product $i$, then per-unit revenue-to-go will be $\valueDPsimple_{i, t + 1}$. 
 If optimal policy decides to not discard $i$, 
 then with probability 
$(1-\curchoice)$ the per-unit revenue-to-go will still be  $\valueDPsimple_{i, t + 1}$.\footnote{In this hypothetical scenario, 
we assume that the 
probability that the consumer select product $i$ equals $\curchoice$
regardless of whether another product $i'$ is discarded from $\assortment$.} 
 However, with probability $\curchoice$ consumer rents one of the $\inventory_i$ units (remember that inventory will always be full), and therefore generates a total revenue-to-go of $(\inventory_i-1)\valueDPsimple_{i, t + 1}$ (i.e., due to contribution of units not rented at time $t$; these units will transfer to the inventory at time $t+1$) plus $\curreward+\valueDPsimple_{i, t + d}$ upon realization of rental time 
 $d\sim\durationdistribution^{\type_t}_i$ (i.e., due to contribution of the rented unit). To summarize, we will have the following Bellman update equation:
\begin{align}
		\label{eq:bellman-equation-finite}
		\begin{split}
			&\valueDPsimple_{i,t}= \displaystyle\sum\nolimits_{\type_t \in \typespace_t}
		\displaystyle\sum\nolimits_{\assortment \in \assortmentspace}
		\curexantetype \\
& 
		\times\max\left\{\valueDPsimple_{i, t + 1}~,~\left(1 - \curchoice\right)\valueDPsimple_{i, t + 1}+ \curchoice\left(\frac{1}{\inventory_i}\sum\nolimits_d\durationpdf_i^{\type_t}(d)\left(\curreward+\valueDPsimple_{i, t + d}\right)+\frac{\inventory_i-1}{\inventory_i}\valueDPsimple_{i, t + 1}\right )\right\}
	\end{split}
	\end{align}

Note the update rule of the above dynamic programming can be simplified by rearranging the terms; Interestingly, the rule will be independent of $\inventory_i$ and $\curchoice$ as they cancel out: 
\begin{equation}
\left[\textrm{At time $t$ with type $\type_t$, $i\in \hat{\assortment}$ will be accepted}\right] \Longleftrightarrow \curreward \geq \valueDPsimple_{i, t + 1}-\displaystyle\sum\nolimits_d\durationpdf^{\type_t}_i(d)\valueDPsimple_{i, t + d}
\end{equation}

\begin{remark}
 {Later in \Cref{sec:discussion} and \ref{sec:stochastic-infinite}, we will replace this simple DP with a slightly modified one that has \emph{inventory dependent} state, but is still low dimensional when rental times are infinite. This allows us to obtain (an almost) optimal competitive ratio for this special case.}
\end{remark}
\subsubsection{The Algorithm and Analysis}
We now present our second simulation-based algorithm (\Cref{alg:SB}), with its competitive ratio guarantee (\Cref{thm:stochastic-finite-competitive-ratio}).
\footnote{{The competitive ratio in 
\Cref{thm:stochastic-finite-competitive-ratio} is 
optimal even if rental times are infinite.
Consider the following example:
there is a single non-reusable product with a single unit.
There are two time periods $T = 2$.
Consumer 1 has a deterministic type
that deterministically purchases this item with fee $1$.
With probability $\epsilon$,
consumer 2 has a type that
deterministically purchases this item with fee ${1}/{\epsilon}$.
Otherwise (i.e., with probability $1-\epsilon$),
consumer 2 has a type that purchases nothing.
In this example, the expected revenue of the Bayesian Expected LP benchmark as well as 
the clairvoyant policy is $2 - \epsilon$,
while the expected revenue of any online policy is 
at most $1$.}}

\begin{algorithm}[htb]
 	\caption{Simulation-based Algorithm with Non-adaptive  Per-unit Revenue Thresholds}
 	 \label{alg:SB}
 	\vspace{2mm}
 	
 	\emph{\underline{Pre-processing:}}
 	\begin{itemize}
 	    \item Compute the optimal assignment $\{\curalloctypeopt\}$ of $\EAR{\typedistributionsequence}$ by invoking the offline assortment oracle  (\Cref{asp:oracle})
 	    \item 	Set $\curexantetype\triangleq\curalloctypeopt\typedistribution_t(\type_t)$ for every $\assortment$,$t$ and $\type_t$ where optimal assignment has a non-zero entry
 	    \item Solve the dynamic programming with Bellman update described in \cref{eq:bellman-equation-finite} and boundary condition $\valueDPsimple_{i,T+1}=0$ for every product $i$ to obtain $\{\valueDPsimple_{i,t}\}_{i\in[n],t\in[\totaltime]}$
 	    \item Let $\curprice\triangleq \valueDPsimple_{i, t + 1}-\displaystyle\sum\nolimits_d\durationpdf_i^{\type_t}(d)\valueDPsimple_{i, t + d}$, for all $i\in[n],t\in[\totaltime],\type_t\in\typespace_t$.
 	\end{itemize}

 	\vspace{2mm}
 	
 	\For{$t=1$ to $T$}{
 	\tcc{consumer $t$ with type $\type_t\sim\typedistribution_t$ arrives} 
 	\vspace{1mm}
 	
 	\emph{\underline{Simulation:}} Upon realizing consumer type $\type_t$, sample $\hat{\assortment}_t\sim \{\curalloctypeopt\}_{\assortment\in\assortmentspace}$
 	
 	\vspace{2mm}
 	\emph{\underline{Discarding:}} Initialize $\bar\assortment_t\gets\hat\assortment_t$
 	
 	\For{each product $i\in\hat\assortment_t$}{
 	\vspace{1mm}
 	\If {$\curreward<\curprice$ or there is no available unit of product $i$}{
 	Remove $i$ from $\bar\assortment_t$}}
 	\tcc{Per-unit revenue thresholds $\{\curprice\}$ are computed once, i.e., are non-adaptive}
 	\vspace{2mm}
 {
 	\emph{\underline{Post-processing:}} Let $\tilde{S}_t\gets \textsc{Sub-assortment Sampling} \left(\choice^{\type_t}, \bar\assortment_t,\{\choice(\hat\assortment_t, i)\}_{i\in \bar\assortment_t}\right)$
 	
 	\tcc{Send a query call to Procedure~\ref{alg:sample assortment} with appropriate input arguments}}
 	\vspace{2mm}

 	
 	Offer assortment $\tilde{\assortment}_t$ to consumer $t$
 	
 	}
\end{algorithm}

	

\begin{restatable}{theorem}{algtwocc}
	\label{thm:stochastic-finite-competitive-ratio}
	The competitive ratio of \Cref{alg:SB} against offline Bayesian expected LP benchmark, i.e., $\EAR{\typedistributionsequence}$, is at least $1/2$. Moreover, it runs in time $\textrm{Poly}(n,T, \sum_{t\in[\totaltime]}\lvert\typespace_t\rvert)$ given oracle access to an offline algorithm for assortment optimization (\Cref{asp:oracle}).
\end{restatable}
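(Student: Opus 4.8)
The plan is to prove the bound by a per-product decomposition that reduces the claim to a single-resource statement, and then to bound each product's contribution in two steps. The running-time part is immediate and I would dispatch it first: the pre-processing solves $\EAR{\typedistributionsequence}$ in polynomial time by \Cref{prop:EAR running time}, the optimistic DP in \eqref{eq:bellman-equation-finite} is \emph{inventory-independent} and hence has only $\textrm{Poly}(n,T,\sum_{t}\lvert\typespace_t\rvert)$ states, and each period invokes \Cref{alg:sample assortment} once, which runs in $\textrm{Poly}(n)$ by \Cref{prop:assortment polytope}. For the competitive ratio, the structural fact I would lean on is the guarantee \eqref{eq:subassortment} of sub-assortment sampling: whenever product $i$ survives the threshold discarding, it is rented by the arriving consumer with \emph{exactly} the LP-induced probability $\curchoice$, independent of which other products were discarded. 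Consequently the rental process of product $i$ depends only on $i$'s own availability history and on the LP solution, so the expected revenue of \Cref{alg:SB} splits as $\sum_i \mathrm{ALG}_i$, mirroring the benchmark's split $\EAR{\typedistributionsequence}=\sum_i \mathrm{LP}_i$ with $\mathrm{LP}_i=\sum_{t}\sum_{\type_t\in\typespace_t}\sum_{\assortment\in\assortmentspace}\typedistribution_t(\type_t)\,\curreward\,\curchoice\,\curalloctypeopt$. It therefore suffices to establish $\mathrm{ALG}_i\ge \tfrac12\,\mathrm{LP}_i$ for each product $i$.

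\textbf{Step A (the optimistic DP lower-bounds the true revenue).} Here I would show $\mathrm{ALG}_i\ge \inventory_i\,\valueDPsimple_{i,1}$. Treating $i$ as an isolated reusable resource driven by the threshold rule $\curreward\ge\curprice$, I would define the true (replenishment-free) value-to-go $w_{i,t}(\cdot)$ as a function of the on-hand inventory level and prove, by a joint backward induction on $t$, that $w_{i,t}$ is \emph{concave} and satisfies $w_{i,t}(c)\ge c\,\valueDPsimple_{i,t}$ for all $c\le\inventory_i$; concavity of the continuation value is exactly what is needed to push the bound through the Bellman step, since the marginal value of the unit consumed by an accepted rental is then at most the per-unit value $\valueDPsimple$. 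Equivalently, the per-unit value-to-go is nonincreasing in the inventory level, and because the replenishment DP pins the inventory at the maximal $\inventory_i$ every period while the true process runs with weakly fewer available units, the true per-unit value dominates the replenished one. Taking $c=\inventory_i$ and $t=1$ gives $\mathrm{ALG}_i=w_{i,1}(\inventory_i)\ge \inventory_i\,\valueDPsimple_{i,1}$. The delicate point is that the reusable state is genuinely high-dimensional (one must track when each rented unit returns), so the concavity and the resulting comparison have to be arranged to hold after conditioning on the realized types, choices, and durations.

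\textbf{Step B (a factor-revealing LP yields the $\tfrac12$).} It remains to prove $\inventory_i\,\valueDPsimple_{i,1}\ge\tfrac12\,\mathrm{LP}_i$. I would unfold the recursion \eqref{eq:bellman-equation-finite} telescopically so that both $\inventory_i\valueDPsimple_{i,1}$ and $\mathrm{LP}_i$ become linear functionals of the same data --- the sampling masses $\curexantetype$, the fees $\curreward$, and the duration p.d.f.s $\durationpdf_i^{\type_t}$ --- subject to (i) the inequalities encoding that the threshold rule realizes the $\max$ in each Bellman update, and (ii) the expected-inventory feasibility constraint of $\EAR{\typedistributionsequence}$ restricted to product $i$. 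Minimizing $\inventory_i\valueDPsimple_{i,1}/\mathrm{LP}_i$ over all data consistent with these constraints is itself a linear program, and I would certify the value $\tfrac12$ by exhibiting an explicit feasible dual solution. Summing $\mathrm{ALG}_i\ge\tfrac12\,\mathrm{LP}_i$ over $i\in[n]$ then gives $\Rev[\algo]{\typedistributionsequence}\ge\tfrac12\,\EAR{\typedistributionsequence}$, as claimed.

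I expect \textbf{Step B} to be the main obstacle. The conceptual content of Step A is a standard, if technically fiddly, concavity-and-coupling argument, whereas the factor-revealing LP must faithfully encode reusability --- the coupling of period $t$ to periods $t+d$ through $\durationpdf_i^{\type_t}$ and the continuation terms $\valueDPsimple_{i,t+d}$ --- and the dual multipliers must be guessed so as to prove \emph{exactly} $\tfrac12$ rather than a weaker constant. This dual-fitting step, extending the non-reusable analyses in the literature to the reusable Bellman recursion, is where the real work lies.
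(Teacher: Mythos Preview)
Your proposal is correct and follows essentially the same approach as the paper: the same per-product decomposition, the same two-step comparison (algorithm vs.\ optimistic DP via a per-unit monotonicity/concavity argument, then optimistic DP vs.\ LP via a factor-revealing primal--dual), and the same running-time reasoning. The only refinement in Step~A, which you already anticipate, is that the true value-to-go must be indexed by the full return-time state $\mathbf{J}$ rather than just the on-hand count; the paper carries the backward induction on that state and shows the per-unit value is nonincreasing in the on-hand component $J_t$.
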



\ifalgtwoappendix
\else
{\begin{proof}{\emph{Proof sketch of \Cref{thm:stochastic-finite-competitive-ratio}}.} 
The running time is proved by \Cref{prop:EAR running time}, and the fact that the simple DP in \Cref{sec:stochastic-finite-simple-DP} can be solved in polynomial time. The analysis of the competitive ratio can be decoupled across products. For each fixed product $i$, we do the analysis in two parts, each sketched below (see the full details in \Cref{apx:alg 2 proof}):
\begin{itemize}
    \item \texttt{Part~(\rom{1}}) -- \Cref{apx:part1}: we first compare \Cref{alg:SB} with the simple optimistic dynamic programming described in \Cref{sec:stochastic-finite-simple-DP} and show the total expected revenue of \Cref{alg:SB} due to rentals of product $i$ is at least $\inventory_i\valueDPsimple_{i,1}$. We prove this claim using induction, and the fact that showing a subset $\tilde{S}_t$ of sampled assortment $\hat\assortment_t$ can only increase the revenue-to-go of the discarding policy that follows the thresholds of the optimistic DP (as in the algorithm). 
    \item \texttt{Part~(\rom{2}}) -- \Cref{apx:part2}: We then compare this simple dynamic programming with expected LP benchmark and show for each product $i$, $\inventory_i\valueDPsimple_{i,1}$ is at least $1/2$ of the contribution of product $i$ to the optimal objective value of $\EAR{\typedistributionsequence}$ 
(\texttt{Part~(\rom{2}})). In order to prove this part, we use the connection between the optimistic DP of \Cref{sec:stochastic-finite-simple-DP} and a related factor-revealing LP that characterizes the competitive ratio of the optimistic DP. This connections leads us to apply duality arguments to find a lower-bound on the ratio of $\inventory_i\valueDPsimple_{i,1}$ and the contribution of product $i$ to the optimal objective value of $\EAR{\typedistributionsequence}$.
\end{itemize}
\end{proof}}

\subsection{Hybrid between \Cref{alg:large inventory} and \Cref{alg:SB}}
\label{sec:hybrid}

{\noindent\textbf{Best of both worlds discarding.} In both \Cref{alg:large inventory}
and \Cref{alg:SB},
we have discarding policies (one for each of the products) that run 
independently form each other. Also, both competitive ratio analyses essentially decouple across different products, as we analyze the revenue performances of these discarding policies for each product separately. Moreover, in both analyses, we compare the expected revenue of each product $i$ with the contribution of that product in the expected-LP. 

Considering all of the above design and analysis aspects of our two algorithms, we can propose a hybrid algorithm
where we decide on the choice of the discarding policy for each product $i$ based on its initial inventory $\inventory_i$ upfront. Once we finalize these choices, we run the (possibly different) discarding algorithms in parallel and separately for different products during the discarding step of our final hybrid algorithm. In order to achieve the best of both worlds revenue performance guarantees of small and large inventory regimes, we partition the set of products into those with large initial inventory and those with small initial inventory (will be formally defined later) at the beginning. Given this partition, we assign a ``randomized discarding'' policy (as described in
in \Cref{alg:large inventory}) to make discarding decisions of product $i$ across times $t\in[T]$
if $\inventory_i$ is large, and we use a  ``discarding with per-unit revenue thresholds'' (as described in
in \Cref{alg:SB}) if $\inventory_i$ is small. 

To distinguish between large and small $\inventory_i$, we first solve the dynamic programming of the optimistic DP discarding policy in \Cref{sec:small inventory} for each product $i\in[n]$, by using its Bellman update equation (described in  \Cref{eq:bellman-equation-finite}). We then use the value function $\valueDPsimple_{i,1}$ of the optimistic DP for product $i$ to label this product as either large or small inventory. In particular, define $\mathcal{R}_i$ to be the ratio between between the expected revenue-to-go of the optimistic DP for product $i$ and the contribution of this product to the expected-LP's objective, i.e., 
$$
\mathcal{R}_i\triangleq \frac{c_i\valueDPsimple_{i,1}}{\displaystyle\sum\nolimits_{t=1}^{\totaltime}
			\displaystyle\sum\nolimits_{\type_t \in \typespace_t}
			\displaystyle\sum\nolimits_{\assortment \in \assortmentspace}\curexantetype\curchoice
			\curreward}
$$
Note that $\mathcal{R}_i\in[0.5,1]$ from \Cref{thm:stochastic-finite-competitive-ratio}. Also note that $\varepsilon^*(c)$ is continuous and monotone increasing in $c$, $\varepsilon^*(0)=1$, and $\varepsilon^*(+\infty)=0$ --- see \Cref{eq:loss} for the definition $\varepsilon^*(\cdot)$. We next compare the ratio $\mathcal{R}_i$ with $1-\varepsilon^*(c_i)$. In fact, if $\mathcal{R}_i+\varepsilon^*(c_i)<1$, we then except the competitive ratio of randomized discarding to be no smaller than that of the optimistic DP, and hence we label the product as ``large inventory''. Otherwise, we expect the optimistic DP to beat the randomized discarding in terms of the competitive ratio, and hence we label the product as ``small invenotry''. Finally, we perform the sub-assortment sampling as a post-processing step in order to correct the choice probabilities of non-discarded products (which might have been increased due to weak-substitution and that other products are either discarded or not even selected in the assortment as they were not available at the first place). We denote the resulting algorithm by \texttt{Sim+Hybrid(i)}.
\begin{theorem}
\label{thm:hybrid}
	The competitive ratio of \texttt{Sim+Hybrid(i)} against offline Bayesian expected LP benchmark, i.e., $\EAR{\typedistributionsequence}$, is at least $1-\min\left(\frac{1}{2},\loss\right)$ (see \Cref{fig:competitive ratio}).
\end{theorem}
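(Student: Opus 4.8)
The plan is to reduce the theorem to a product-by-product comparison against the expected-LP, exploiting that both underlying analyses decouple across products and benchmark each product against the \emph{same} quantity. For each product $i$, let
\[
\Lambda_i \triangleq \sum_{t=1}^{\totaltime}\sum_{\type_t\in\typespace_t}\sum_{\assortment\in\assortmentspace}\curexantetype\,\curchoice\,\curreward
\]
be its contribution to $\EAR{\typedistributionsequence}$, so that $\EAR{\typedistributionsequence}=\sum_{i=1}^n\Lambda_i$ and $\mathcal{R}_i=\inventory_i\valueDPsimple_{i,1}/\Lambda_i$. Writing $\rho_i$ for the expected fees that \texttt{Sim+Hybrid(i)} collects from rentals of product $i$, the algorithm's total expected revenue is $\sum_{i=1}^n\rho_i$. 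First I would lower-bound each $\rho_i$ in terms of $\Lambda_i$, and then sum.

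The hard part will be arguing that the two per-product guarantees survive the fact that different products now run different discarding rules \emph{simultaneously}. The key enabling fact is that the hybrid retains the sub-assortment sampling post-processing step: by \Cref{prop:assortment polytope} and the defining identity \Cref{eq:subassortment}, conditioned on the sampled set $\hat\assortment_t$ and the per-product discarding decisions, every non-discarded product is rented by consumer $t$ with \emph{exactly} its expected-LP marginal $\curchoice$, no matter how the remaining products were discarded. Hence the rental-indicator process of any single product $i$ --- and therefore $\rho_i$ --- is distributionally identical to what it would be if \emph{every} product used the rule assigned to $i$. This modularity is exactly what permits importing the decoupled per-product bounds of \Cref{thm:competitive-ratio large inventory} and \Cref{thm:stochastic-finite-competitive-ratio} without modification; it is also where benchmarking against the expected-LP (rather than the optimum online, as in \citet{RST-17}) is essential, since only then do both per-product bounds speak about the same $\Lambda_i$.

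Granting modularity, I would apply the two guarantees termwise according to the labeling. If $i$ is labeled large inventory (i.e.\ $\mathcal{R}_i+\varepsilon^*(\inventory_i)<1$) it runs the randomized discarding of \Cref{alg:large inventory}, and the product-decoupled form of \Cref{thm:competitive-ratio large inventory}, applied with inventory $\inventory_i$ in place of $\mininventory$, yields $\rho_i\geq\left(1-\varepsilon^*(\inventory_i)\right)\Lambda_i$. If $i$ is labeled small inventory it runs the optimistic-DP thresholds of \Cref{alg:SB}, and Part~(\rom{1}) of the analysis behind \Cref{thm:stochastic-finite-competitive-ratio} yields $\rho_i\geq \inventory_i\valueDPsimple_{i,1}=\mathcal{R}_i\Lambda_i$. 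Since the labeling always selects whichever of $1-\varepsilon^*(\inventory_i)$ and $\mathcal{R}_i$ is larger, in both cases
\[
\rho_i\;\geq\;\max\!\left(1-\varepsilon^*(\inventory_i),\;\mathcal{R}_i\right)\Lambda_i.
\]

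Finally I would make this bound uniform across products. Because $\inventory_i\geq\mininventory$ and $\varepsilon^*(\cdot)$ is monotone decreasing (with $\varepsilon^*(0)=1$ and $\varepsilon^*(+\infty)=0$; see \Cref{eq:loss}), we have $1-\varepsilon^*(\inventory_i)\geq 1-\loss$; and $\mathcal{R}_i\geq\tfrac12$ by \Cref{thm:stochastic-finite-competitive-ratio}. Therefore $\max\!\left(1-\varepsilon^*(\inventory_i),\mathcal{R}_i\right)\geq\max\!\left(1-\loss,\tfrac12\right)=1-\min\!\left(\tfrac12,\loss\right)$ for every $i$, so $\rho_i\geq\left(1-\min\left(\tfrac12,\loss\right)\right)\Lambda_i$. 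Summing over $i\in[n]$ gives $\sum_i\rho_i\geq\left(1-\min\left(\tfrac12,\loss\right)\right)\sum_i\Lambda_i=\left(1-\min\left(\tfrac12,\loss\right)\right)\EAR{\typedistributionsequence}$, establishing the claimed competitive ratio.
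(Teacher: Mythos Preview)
Your proposal is correct and follows essentially the same approach as the paper's proof sketch: both exploit that sub-assortment sampling renders the per-product analyses behind \Cref{thm:competitive-ratio large inventory} and \Cref{thm:stochastic-finite-competitive-ratio} independent of how other products are discarded, so those guarantees can be imported termwise into the hybrid. Your write-up is more explicit than the paper's sketch---spelling out the per-product bound $\rho_i\geq\max\!\left(1-\varepsilon^*(\inventory_i),\mathcal{R}_i\right)\Lambda_i$ and the uniform lower bound over $i$---but the underlying argument is identical.
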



\begin{proof}{\emph{Proof sketch.}}
As it can be seen by following the lines of both the proof of \Cref{thm:competitive-ratio large inventory} in \Cref{sec:analysis} and the proof of \Cref{thm:stochastic-finite-competitive-ratio} in \Cref{apx:alg 2 proof}, the arguments for the revenue performance of the corresponding discarding policies for each product $i$ (and also the resulting competitive ratio) is independent of how discarding of another product $i'$ is handled, and therefore the competitive ratio guarantees of the randomized discarding and the optimistic DP from these theorems still hold for the hybrid algorithm. For brevity, we do not repeat these proofs. We conclude that the proposed hybrid algorithm has a competitive ratio of at least $\max\left(\frac{1}{2},1-\loss\right)$.\hfill\halmos

\end{proof}}

\medskip
{\noindent\textbf{Monte-Carlo simulation to help.}
In principle, given the sequence of type distributions $\typedistributionsequence$, one can simulate both \Cref{alg:large inventory} and \Cref{alg:SB} and estimate their expected future revenues using Monte-Carlo simulation, starting at any time $t\in[T]$ (given any history up to time $t$). Now, a simple hybrid algorithm can switch to the algorithm with the higher expected revenue and runs this algorithm for the next time step given the current history of rental products. By repeatedly applying this method at each time $t$ given the history up to this time  -- a techniques known as the \emph{method of conditional expectation} -- we end up with an alternative hybrid algorithm that essentially is the Be-The-Leader policy among the two policies at each time, meaning that its expected future revenue at each time is at least the expected future revenue of each of the two policies (which can be proved using induction). Hence, this hybrid algorithm clearly obtains the best of both worlds competitive ratio of $1-\min\left(\frac{1}{2},\loss\right)$ similar to our previous hybrid algorithm. Such a policy can also choose to switch at a lower frequency; but no matter what frequency it picks, it is expected to outperform both policies in expectation, both in theory and practice. We use this alternative hybrid algorithm, which is denoted by \texttt{Sim+Hybrid(ii)}, in our numerical simulations in \Cref{sec:numerical} as well.

\begin{theorem}
\label{thm:hybrid 2}
	The competitive ratio of \texttt{Sim+Hybrid(ii)} against offline Bayesian expected LP benchmark, i.e., $\EAR{\typedistributionsequence}$, is at least $1-\min\left(\frac{1}{2},\loss\right)$ (see \Cref{fig:competitive ratio}).
\end{theorem}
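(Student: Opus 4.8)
The plan is to prove the guarantee by a \emph{Be-The-Leader} (method of conditional expectations) argument, showing that the revenue of \texttt{Sim+Hybrid(ii)} dominates the revenue of the better of the two base policies, and then invoking \Cref{thm:competitive-ratio large inventory} and \Cref{thm:stochastic-finite-competitive-ratio}. Write $\algo_1$ for \Cref{alg:large inventory} (run with $\gamma=\gammaopt$) and $\algo_2$ for \Cref{alg:SB}. Both are well-defined Markovian online policies: given the current ``state'' $h$ at the beginning of a period $t$ --- namely the on-hand inventory together with the return schedule of the units currently out --- each prescribes a distribution over offered assortments as a function of the realized type. For $k\in\{1,2\}$ let $V_t^{(k)}(h)$ denote the expected revenue accrued over periods $t,\ldots,\totaltime$ when $\algo_k$ is run from state $h$, and let $W_t(h)$ denote the analogous revenue-to-go of the hybrid. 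The hybrid, upon reaching state $h$ at time $t$, computes $V_t^{(1)}(h)$ and $V_t^{(2)}(h)$ and executes one period of whichever base policy attains the larger value before re-evaluating at $t+1$.

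The core of the argument is the backward induction on $t$ establishing
\[
    W_t(h)\;\ge\;\max_{k\in\{1,2\}} V_t^{(k)}(h)\qquad\text{for all } t\in[\totaltime+1]\text{ and all states }h,
\]
with base case $W_{\totaltime+1}=V_{\totaltime+1}^{(k)}=0$. For the inductive step, let $k^\star$ attain the maximum on the right at $(t,h)$, decompose each revenue-to-go into the expected immediate revenue in period $t$ plus the expected continuation value, and note that the hybrid takes exactly the period-$t$ action of $\algo_{k^\star}$ (hence collects the same expected immediate revenue and induces the same distribution over the next state $h'$), after which it continues as the hybrid. Applying the induction hypothesis $W_{t+1}(h')\ge V_{t+1}^{(k^\star)}(h')$ state by state and taking expectations over $h'$ gives $W_t(h)\ge V_t^{(k^\star)}(h)=\max_k V_t^{(k)}(h)$, closing the induction.

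Evaluating the bound at $t=1$ and the deterministic initial state (all inventories full) yields that the expected revenue of \texttt{Sim+Hybrid(ii)}, namely $W_1$, satisfies $W_1\ge\max\big(\Rev[\algo_1]{\typedistributionsequence},\Rev[\algo_2]{\typedistributionsequence}\big)$. Combining with $\Rev[\algo_1]{\typedistributionsequence}\ge(1-\loss)\,\EAR{\typedistributionsequence}$ from \Cref{thm:competitive-ratio large inventory} and $\Rev[\algo_2]{\typedistributionsequence}\ge\tfrac12\,\EAR{\typedistributionsequence}$ from \Cref{thm:stochastic-finite-competitive-ratio}, and using $\max(1-\loss,\tfrac12)=1-\min(\tfrac12,\loss)$, gives the claimed competitive ratio.

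I expect the main obstacle to be not the induction itself but the two implicit modeling points it relies on. First, the hybrid must be able to \emph{exactly} evaluate the continuation values $V_t^{(k)}(h)$, whereas in practice these are only estimated by Monte-Carlo simulation; one should either phrase the statement for the idealized exact-expectation oracle, or carry an additive estimation error through the induction and argue that with $\textrm{Poly}$-many samples it contributes a vanishing loss (so that the ratio degrades by an arbitrarily small constant while preserving polynomial running time). Second, one must confirm that running $\algo_2$ from an arbitrary mid-horizon state $h$ is well-defined: its non-adaptive thresholds $\{\curprice\}$ and $\algo_1$'s sub-assortment sampling depend only on the precomputed LP solution and the realized type, not on how $h$ was reached, so both continuation policies are legitimate functions of $(t,h)$ and the decomposition underlying the induction is valid.
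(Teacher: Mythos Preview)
Your proposal is correct and follows essentially the same approach as the paper: a backward induction on $t$ over states (inventory levels plus return schedule) showing the hybrid's revenue-to-go dominates that of each base policy, then invoking \Cref{thm:competitive-ratio large inventory} and \Cref{thm:stochastic-finite-competitive-ratio}. Your two caveats about exact versus Monte-Carlo evaluation and about the well-definedness of mid-horizon continuation are apt; the paper's proof sketch is similarly informal on the first point (writing ``Suppose Monte-Carlo simulation suggests that\ldots'') and implicitly relies on the second.
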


\begin{proof}{\emph{Proof sketch.}}
Consider the following backward induction.
The induction hypothesis is that 
given current state $\mathbf{J}$ (i.e., 
the number of available units of each product in the inventory, as well as the return time of each allocated unit of every product),
and time period $t$,
the expected revenue from time $t$ to time $T$
in \texttt{Hybrid-Sim(ii)} 
is weakly higher than 
\Cref{alg:large inventory} and \Cref{alg:SB}.
Base case $t = T$ is straightforward.
Suppose the induction hypothesis
is correct for $t + 1, \dots, T$.
Consider the inductive step for time $t$
and current state $\mathbf{J}$.
Suppose Monte-Carlo simulation suggests that 
given state $\mathbf{J}$,
\Cref{alg:large inventory}
achieves higher expected revenue from time $t$ to time $T$,
and its induced new state is $\mathbf{J}'$
(the analysis for the other case is similar).
In this case, we know that
the expected revenue 
in \texttt{Hybrid-Sim(ii)} 
from time $t$ to time $T$
can be decomposed
into the following two terms:
(i) the expected revenue in \texttt{Hybrid-Sim(ii)}
at time $t$,
and 
(ii)
the expected revenue in \texttt{Hybrid-Sim(ii)} 
from time $t + 1$ to time $T$ 
under state $\mathbf{J}'$.
By construction,
term (i)
equals to 
the expected revenue in \Cref{alg:large inventory}
at time $t$.
By our induction hypothesis for time $t + 1$
with state $\mathbf{J}'$,
term (ii) is weakly higher than 
the expected revenue in \Cref{alg:large inventory}
from time $t + 1$ to time $T$
given state $\mathbf{J'}$.
Hence, 
the expected revenue 
in \texttt{Hybrid-Sim(ii)} 
from time $t$ to time $T$ given state $\mathbf{J}$
is weakly higher than \Cref{alg:large inventory}
and \Cref{alg:SB} as well,
which concludes the backward induction.
\hfill\halmos
\end{proof}

\begin{remark}
\label{remark:hybrid}
We would like to highlight that while both  \texttt{Sim+Hybrid(i)} and \texttt{Sim+Hybrid(ii)} attain the theoretical best of both worlds competitive ratio that was mentioned earlier, and as we see in our numerical simulations in \Cref{sec:numerical} they both outperforms other existing policies in practical scenarios of our problem, they differ in terms of computational requirements. In fact, \texttt{Sim+Hybrid(i)} can easily make upfront decisions for the choice of discarding policy of each product $i$, with almost no extra computation compared to \Cref{alg:large inventory} and \Cref{alg:SB}; nevertheless, \texttt{Sim+Hybrid(ii)} needs to run Monte-Carlo simulation several times (depending on the switching frequency) by sampling from future types, which makes it less practically appealing. 
\end{remark}}

 \section{Conclusion}
 \label{sec:conclusion}

We studied designing near-optimal algorithms for the online assortment of reusable resources in the Bayesian setting. We proposed an algorithmic framework based on four modular steps: (\rom{1}) solving the expected LP, (\rom{2}) simulating the solution, (\rom{3}) running a separate discarding procedure for each product to maintain point-wise inventory feasibility (while only losing a negligible fraction of the revenue of each product), and (\rom{4}) performing a post-processing step to adjust choice probabilities of non-discarded items. Using this framework, we designed an algorithm that is $1-\min\left(\frac{1}{2},O\left(\sqrt{\log(\mininventory)/\mininventory}\right)\right)$ under the general rental duration distributions, and an improved near-optimal algorithm with competitive ratio  $1-1/\sqrt{(\mininventory+3)}$ under infinite rental durations. Not only our algorithms outperform the existing algorithms in the literature theoretically, we further verified their revenue performance advantages through numerical simulations.

 As a road-map for future, it is interesting to study what other practical aspects of a real-world assortment problem beyond reusable resources can be modeled, and to what extent mathematical programming techniques can be used to design competitive algorithms there. On the technical side, the most immediate open problem stemming from our work is finding the optimal competitive ratio for the case of general rental duration distributions. In particular, can one shave the logarithmic factor in our competitive ratio and obtain a $1-O(1/\sqrt{\mininventory})$ competitive algorithm, similar to the best known competitive ratio in the non-reusable case? As a  different yet more ambitious future direction, it would be interesting to study classes of stochastic online optimization similar to the Bayesian online assortment further, in order to discover the computational hardness of computing or approximating the optimum online policy, i.e., the DP policy. An interesting discovery here would be obtaining improved approximations against the optimum online benchmark versus the expected LP benchmark through polynomial-time policies, à la \cite{ANSS-19}, or proving its impossibility.

 \section*{Acknowledgement} We would like to thanks the anonymous associate editor and the referees for their comments and feedback throughout the revision process. Rad Niazadeh's research is partially supported by an Asness Junior Faculty Fellowship from the University of Chicago Booth School of Business. 
 

\setlength{\bibsep}{0.0pt}
\bibliographystyle{plainnat}
\OneAndAHalfSpacedXI
{\footnotesize
 \bibliography{refs}}
\newpage
\newpage
\ECDisclaimer
\ECSwitch
\section{Further Related Work}
\label{sec:furtherreview}
\revcolor{
Assortment planning for revenue management has an extensively growing 
literature over the recent decades. We refer the reader to  related surveys and books \citep[cf.][]{KFV-08, lan-90, HT-98} for a comprehensive study.
\citet{VM-99} study the static model 
which captures the trade-offs between
inventory costs and 
product variety under the multinomial logit 
consumer choice model. 
Later work have 
considered assortment under various consumer choice models,
e.g., 
demand substitution model \citep{SA-00}, multinomial logit models
\citep{TV-04, GIPD-04, LV-08, top-13},
the Lancaster choice model
\citep{GH-06},
ranked-list preference 
\citep{HGS-10, GLS-16,aouad2018approximability}, consider-then-choose choice models \citep{AFL-15}, and non-parametric (data-driven)
choice models \citep{FJS-13}.

Online assortment has been studied  more recently, both under the prior-free/adversarial and the Bayesian settings. For non-reusable products,
\citet{BKX-15} study a model with two products, i.i.d.\ consumer types and Poisson arrivals. \citet{CF-09} study a model with non-stationary consumer types and show $\tfrac{1}{2}$ performance guarantee
with respect to the clairvoyant optimum online benchmark. \citet{GNR-14} introduce ``inventory balancing" algorithms -- inspired by the seminal work of \cite{MSVV-05} for online ad allocation -- and analyze their performance guarantee in the prior-free setting using a primal-dual approach. This analysis is later improved and generalized  by \cite{MS-19}. {\citet{CMSX-16}
consider another variant, in which assortments are 
offered as add-ons with discounted price. 
The idea of discarding  
to ensure point-wise inventory feasibility is also used in this paper,
and they face a 
similar challenge due to product substitution.
To overcome this challenge, the authors introduce 
a sampling procedure with error for their specific goal.
In contrast, our sub-assortment sampling 
does not incur any errors and can be used for 
more general purposes.}

For online assortment of reusable products,
\citet{LR-10} study a model that assumes independent demands across products, without any choice behavior for the customers with infinite selling horizon. A follow up work from \citet{OS-18} extend this work to incorporate customer choice behavior and a finite selling horizon. \citet{CLS-17} study a model with multiple units of a single reusable product. They consider stochastic usage durations and advance reservations, and obtain data-dependent performance guarantees that are asymptotically optimal when the product inventory and the customer arrival rate scale up linearly at the same rate. Finally, \citet{RST-17} study the Bayesian setting and \citet{GGISUW-19,feng2019linear,feng2021online,goyal2020online} study the prior-free/adversarial setting 
{where both rental fees
and rental duration distributions
are type-independent and 
identical across time.}

In the Bayesian setting, our problem resembles some aspects of the ``prophet inequality" problem. This problem is originated from the seminal work of \cite{KS-78} in the 70's, and has since been studied quite extensively. Combinatorial generalizations of this problem are also studied in the literature. Examples are prophet inequalities for  matroids~\citep{SC-84,HKS-07,KW-12}, matchings~\citep{AHL-12}, and combinatorial auctions~\citep{DFKL-17}. In these generalizations, the natural dynamic programming for computing the optimum online policy are exponential-size~\citep{NSS-18,ANSS-19}, similar to our problem. See \cite{L-17}, for a comprehensive survey.

{Some of our techniques resemble those used for the prophet inequality matching \citep{AHL-12}, the Magician's problem and online contention resolution schemes~\citep{ala-14,brubach2021improved,FSZ-16}, the pricing with static calendar problem~\citep{MSZ-18}, 
online bipartite matching with 
reusable offline nodes~\citep{DSSX-18},
and the volunteer crowdsourcing problem~\citep{manshadi2020online}. The closest to us is the work of \citep{BM-19} for 
the network revenue management problem with 
reusable resource, in which the authors have independently and concurrently discovered policies similar to our algorithm for the small inventory regime. 

Other expected LP benchmarks have been used in the literature under the name of ``ex-ante relaxation'' for various stochastic online optimization and mechanism design problems. For example, see \cite{AHL-12,DFKL-17,LS-18,VB-18,MSZ-18,ANSS-19}. The factor-revealing technique we use to analyze our DP-based thresholding discarding rule resemble the LP-method of \cite{A-07}, the dual-fitting methods in \cite{ala-14,wang2018online}, and also other LP-based methods for approximate dynamic programming (see \cite{SBAPW-04} for a comprehensive study).}

Some of our techniques resemble those used for the prophet inequality matching \citep{AHL-12}, the Magician's problem~\citep{ala-14}, the pricing with static calendar problem~\citep{MSZ-18} (which also studies static assortment policies without reusable resources), and the volunteer crowdsourcing problem~\citep{manshadi2020online}. Similar to our Bayesian expected LP, other expected LP benchmarks have been used in the literature under the name of ``ex-ante relaxation'' for various stochastic online optimization and mechanism design problems. For example, see \citet{AHL-12,DSA-12,FSZ-16,DFKL-17,CDHKMS-17,LS-18,VB-18,MSZ-18,ANSS-19,DSSX-18}. 

Another interesting line of work related to us is the work of \cite{VB-18} and \cite{banerjee2020uniform}. The goal here is to obtain improved (additive or regret-based) approximations for various stochastic online optimization problems with packing constraints. These papers diverges from ours by taking a different technical approach and considering a different regime (i.e., when inventories increase with horizon); nevertheless, carrying over their techniques to the reusable resources problem stands as an interesting future research direction.
}

\section{Comparison Between
Different Benchmarks}
\label{apx:benchmark comparison}
\revcolor{
Consider the following hierarchy of revenue benchmarks, based on the given information on future uncertainty and required computational power:
\begin{enumerate}
    \item \emph{Optimum offline:} the expected revenue of the optimum offline algorithm that has full information about the realized type sequence $\typesequence$, exact realizations of rental durations for each type $\type_t$ and product $i$, and exact realizations of consumer choices for each possible assortment. 
    \item \emph{Clairvoyant optimum online:} the expected revenue of the optimum online algorithm that has full information about the realized type sequence $\typesequence$, but does not know the exact realizations of rental durations, nor the consumer choices. 
    \item \emph{Non-clairvoyant optimum online:} the expected revenue of the optimum online algorithm that only knows the sequence of type distributions $\typedistributionsequence$.
\end{enumerate}

\begin{figure}[htb]
\includegraphics[width=0.9\textwidth]{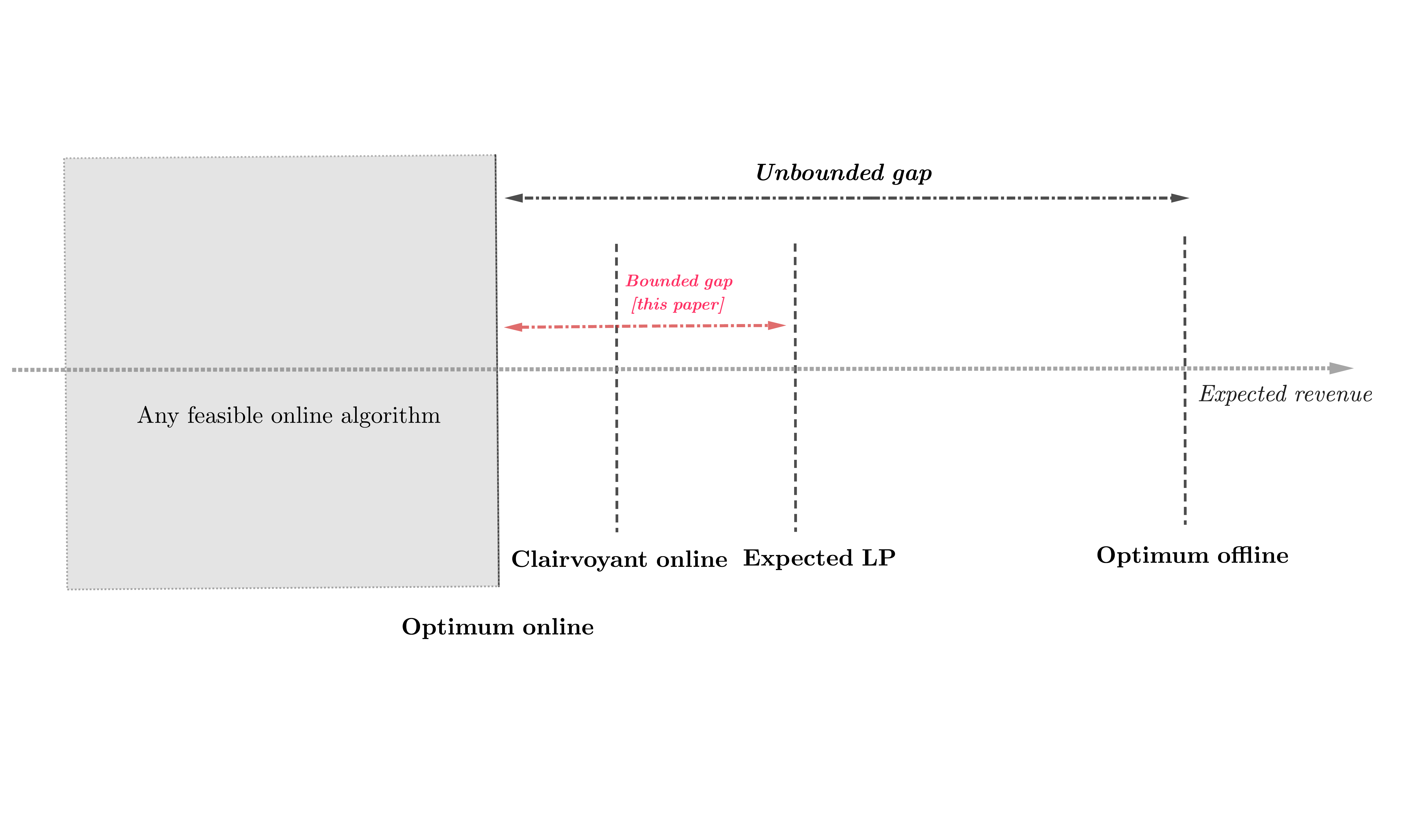}
\centering
\caption{\label{fig:benchmarks}Comparing different revenue benchmarks}
\end{figure}
\noindent \Cref{fig:benchmarks} compares these benchmarks in terms of their expected revenue. No constant competitive online algorithm exists against optimum offline, even when resources are not reusable; see Appendix~\ref{apx:hardness-optoffline}. (Non-clairvoyant) Optimum online is a weaker benchmark and requires solving an exponential-size dynamic programming~\citep{RST-17}. Clairvoyant optimum online -- introduced by \cite{GNR-14} for non-reusable products and extended to reusable products by \cite{GGISUW-19} -- is a middle ground benchmark in terms of expected revenue; moreover, it has to solve an exponential-size dynamic programming to compute its assortment decisions given the extra information $\typesequence$ (which is not provided to a normal online algorithm).

\subsection{Unbounded Competitiveness of Optimum Offline}
\label{apx:hardness-optoffline}

In this subsection, we construct a simple instance
which
shows 
the Bayesian expected LP $\EAR{\typedistributionsequence}$
is at most a $O(\sfrac{1}{T})$-approximation to the optimum offline.
Combining with \Cref{prop:relaxation},
it implies that no online algorithm can achieve a ($T$-independent) bounded 
competitive ratio against optimum offline.

\begin{theorem}
\label{thm:hardness-optoffline}
The Bayesian expected LP $\EAR{\typedistributionsequence}$
is at most a $O(\sfrac{1}{T})$-approximation to the optimum offline.
\end{theorem}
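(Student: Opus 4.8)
The plan is to exhibit a single family of instances (indexed by the horizon $T$) on which the optimal value of $\EAR{\typedistributionsequence}$ stays bounded by a constant while the optimum offline collects revenue $\Omega(T)$; by \Cref{prop:relaxation} the same instances then certify that no online policy can be boundedly competitive against optimum offline. The instance has a single product ($n=1$) with unit inventory $\inventory_1 = 1$ over a horizon of length $T$. At every time $t\in[T]$ a single deterministic type arrives whose consumer, if offered the singleton $\{1\}$, rents it with probability one and pays fee $\reward = 1$; crucially, each rental draws an independent duration $d$ equal to $1$ with probability $\tfrac12$ and equal to $T$ with probability $\tfrac12$. The LP and any online policy see only this fixed type, but the \emph{realized} duration stays hidden until a rental occurs, whereas optimum offline knows every realization in advance. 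The asymmetry that a long duration blocks the single unit for the entire remaining horizon is the engine of the gap, and the weak-substitutability and downward-closed assumptions hold trivially for one product.

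First I would upper bound $\EAR{\typedistributionsequence}$. With one type and one product the only variables are the offering probabilities $y_t\in[0,1]$, and the objective is $\sum_{t=1}^T y_t$. The complementary duration CDF is $\durationcdfi(0)=1$ and $\durationcdfi(k)=\tfrac12$ for $1\le k\le T-1$. Evaluating the expected-inventory constraint at the terminal time $t=T$ gives $y_T + \tfrac12\sum_{\tau=1}^{T-1} y_\tau \le 1$, which forces $\sum_{\tau=1}^{T} y_\tau \le 3$. Hence the optimal LP value is at most a constant, independent of $T$.

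Next I would lower bound the optimum offline. Because offline knows all realized durations, it offers $\{1\}$ at exactly those periods whose realized duration equals $1$ and withholds the product at periods whose duration equals $T$. A duration-$1$ rental occupies the unit only for its own period and returns it immediately, so these short rentals never conflict and are always feasible; offline thus collects one unit of revenue at every short-duration period. The expected number of such periods is $T/2$, so the expected offline revenue is at least $T/2 = \Omega(T)$. Combining the two bounds yields $\EAR{\typedistributionsequence}\le 3 = \tfrac{6}{T}\cdot\tfrac{T}{2} \le \tfrac{6}{T}\cdot\mathrm{OPT}_{\mathrm{off}}$, i.e., an $O(1/T)$ approximation, as claimed.

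The main obstacle is the LP upper bound, and it is also the conceptual point: one must recognize that the expected-occupancy constraint of $\EAR{\typedistributionsequence}$ cannot distinguish a short rental from a long one, so it is charged the \emph{average} blocking effect $\tfrac12$ of every past offer accumulated additively across the whole horizon, whereas offline pays only the \emph{realized} (and avoidable) blocking cost. Making sure no feasible $\{y_t\}$ can beat the constant bound — precisely because the long-duration contribution aggregates over all earlier periods in the terminal constraint — is where care is needed; the offline lower bound is then an easy greedy interval-packing observation, and the conclusion follows by plugging both bounds together.
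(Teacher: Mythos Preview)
Your proposal is correct and follows essentially the same approach as the paper: the same single-product, unit-inventory instance with Bernoulli(1/2) duration in $\{1,T\}$ (the paper uses $\{1,\infty\}$, which is identical within horizon $T$), the same offline argument of renting exactly at the short-duration periods to collect expected revenue $T/2$, and the same observation that the LP value is an absolute constant. The only cosmetic difference is that the paper bounds the LP by writing the dual and exhibiting the feasible dual point $\theta_T=2$ with value $2$, while you read the bound directly off the terminal primal constraint; your bound of $3$ is slightly loose (the same inequality actually gives $\sum_\tau y_\tau \le 2$), but this does not affect the $O(1/T)$ conclusion.
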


\begin{corollary}
No online algorithm
can achieve a 
($T$-independent) 
bounded 
competitive ratio against optimum offline.
\end{corollary}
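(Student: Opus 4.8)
The plan is to prove \Cref{thm:hardness-optoffline} by exhibiting a single instance on which the optimum offline collects revenue $\Theta(T)$ while the value of $\EAR{\typedistributionsequence}$ stays bounded by an absolute constant; the ratio is then $O(1/T)$, and the corollary follows from \Cref{prop:relaxation}. The instance I would use is deliberately minimal: one reusable product with a single unit of inventory ($n=1$, $c_1=1$), a single deterministic consumer type at every time (so each $\typespace_t$ is a singleton), rental fee $1$, and choice probability $\phi(\{1\},1)=1$ so that the lone product is rented whenever it is offered and available. The entire difficulty is loaded into the rental duration: a rental started at any time lasts $1$ period (the unit returns next period) with probability $p=\tfrac12$, and lasts $T$ periods (so it effectively never returns within the horizon) with probability $1-p$. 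All of \Cref{asp:substitutability,asp:down-closed,asp:oracle} hold trivially because there is a single product.

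For the offline side I would use the defining feature of the offline benchmark, namely that it knows in advance the duration $d_t$ that each rental would incur. The strategy is to rent the unit at time $t$ exactly when $d_t=1$. A short rental occupies the unit only during its own period and frees it by $t+1$, so this strategy never creates a conflict and never locks the unit up; it therefore realizes a rental at every short-duration step. Hence the offline revenue equals the number of short durations, whose expectation is $pT=\Theta(T)$.

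For the benchmark value I would bound $\EAR{\typedistributionsequence}$ directly from its inventory constraints. Writing $y_t$ for the (single-type) probability of offering $\{1\}$ at time $t$, the objective equals $\sum_t y_t$, and since the survival probabilities are $\durationcdfi(0)=1$ and $\durationcdfi(s)=1-p$ for every $1\le s<T$, the first (inventory) constraint of the Bayesian expected LP at time $t$ reads $y_t+\left(1-p\right)\sum_{\tau<t}y_\tau\le 1$. Evaluating this single constraint at $t=T$ already forces $\left(1-p\right)\sum_{\tau<T}y_\tau\le 1$, whence $\sum_t y_t\le \tfrac{1}{1-p}+1=O(1)$. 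The benchmark value is thus a constant, giving the claimed $O(1/T)$ ratio against the offline revenue $\Theta(T)$.

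The conceptual heart, and the step I expect to be the main obstacle to state cleanly rather than to compute, is explaining \emph{why} the offline policy escapes the constraint that pins $\EAR{\typedistributionsequence}$ to a constant. The expected LP charges every rental its \emph{average} future footprint $\durationcdfi(\cdot)$, so each rental permanently consumes an expected $1-p$ units of inventory, and this is precisely what caps the total expected number of rentals at $O(1)$. The offline policy instead correlates the decision to rent with the realized duration, taking only short rentals and so incurring essentially no long-run footprint. This correlation between the allocation decision and the realized duration is exactly what the in-expectation constraint cannot represent; I would emphasize this to make clear that reusability, rather than any choice-model effect, is the genuine source of the unbounded gap. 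Finally, the corollary is immediate: by \Cref{prop:relaxation} every online algorithm earns at most $\EAR{\typedistributionsequence}=O(1)$ on this instance, while the offline optimum earns $\Theta(T)$, so no $T$-independent competitive ratio against optimum offline is possible.
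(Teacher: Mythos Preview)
Your proposal is correct and follows essentially the same approach as the paper: the same single-product, single-unit instance with duration equal to $1$ with probability $\tfrac12$ and ``forever'' otherwise, the same offline strategy of renting only on short-duration steps to collect $\Theta(T)$ in expectation, and the same reduction to \Cref{prop:relaxation} for the corollary. The only cosmetic differences are that the paper takes the long duration to be literally infinite (you take it to be $T$, which is equivalent within the horizon) and that the paper bounds $\EAR{\typedistributionsequence}$ by exhibiting an explicit dual feasible solution $(\theta_T=2,\ \theta_t=0\text{ for }t<T)$, whereas you read off the bound directly from the primal inventory constraint at $t=T$; both yield the same $O(1)$ bound.
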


\begin{proof}{\textsl{Proof of \Cref{thm:hardness-optoffline}.}}
Consider the following instance
where
there is a single product
and a single consumer type.
For this product, its initial inventory is one, 
reward is one.
Once this product is allocated, the realized rental duration 
is one (i.e., becoming available in the next time period) with probability $\sfrac{1}{2}$,
and infinite otherwise.
At every time period $t \in [T]$,
a consumer with this single consumer type arrives the platform
with probability one. 
Consumers select this product (if assorted) with probability one.

First, we claim that the expected revenue of the optimum offline
is at least $\sfrac{T}{2}$.
Note that the optimum offline observe the realization of rental
duration of
the product at each time period before assorting, and thus it can assort
the product whenever the realized rental duration at that time 
period is one. 
Since 
the expected number of such time period is $\sfrac{T}{2}$,
the expected revenue of optimum offline is at least $\sfrac{T}{2}$.

Next, we consider the Bayesian expected LP 
$\EAR{\typedistributionsequence}$
and its dual program on this instance:
\begin{equation*}
\arraycolsep=1.4pt\def\arraystretch{1}
\begin{array}{llllllll}
\max\quad\quad\quad&
\displaystyle\sum_{t=1}^T
y_t
\qquad\qquad\qquad&\text{s.t.}&
&\quad\quad\text{min}\quad\quad\quad &
\displaystyle\sum_{t=1}^T
\theta_t
\qquad\qquad\qquad&\text{s.t.} \\[1.4em]
 &
 \displaystyle\sum_{\tau=1}^t{\frac{1}{2}\,y_\tau}\leq 1 & 
 t\in[T]~.
 & 
& &
 \displaystyle\sum_{\tau=t}^T{\frac{1}{2}\,\theta_\tau}\geq 1 & 
 t\in[T]~.
\end{array}
\end{equation*}
Consider the following feasible dual solution 
with objective value 2
in the dual program:
$$\theta_t = 0,~\forall t\in[T - 1], ~~~~ \theta_T = 2.$$
Invoking the weak duality between the above primal-dual linear
programs finishes 
the proof.
\hfill\Halmos
\end{proof}
}
 \section{Discussions and Subsidiary Results}
 \label{sec:discussion}

{
\subsection{Static vs.\ dynamic substitution and sub-assortment sampling.}
\label{sec:discussion static dynamic}
In this section, we discuss the connection between
the sub-assortment sampling 
and the assortment optimization for both static substitution and
dynamic substitution studied in the literature 
\citep[cf.][]{MSZ-18}.
We first introduce the definitions 
of static substitution and dynamic substitution,
then discuss their connection 
with the sub-assortment sampling 
in the offline (single-shot) assortment problem,
and finally we extend the discussion 
to the online (multi-period) assortment problem 
considered in this paper.

In our model, 
the platform
can only form assortments of
\emph{available} products.
An alternative modeling assumption is that 
the platform can form assortments
regardless of availability of the products.
In this alternative model, 
we distinguish two setups 
on 
how consumers make their decision based 
on the displayed assortment and 
availability of the products.
\begin{itemize}
    \item \emph{Static substitution:}
    Given assortment $\assortment$
    and availability status of all the products,
    a consumer with choice model $\choice$ 
    chooses item $i$ with probability 
    $\choice(\assortment,i)$,
    and 
    the sale is only final if the selected product $i$ is available.
    
    \item \emph{Dynamic substitution:}
    Given assortment $\assortment$
    and availability status of all the products,
    a consumer with choice model $\choice$
    chooses item $i$ 
    with probability $\choice(\hat\assortment, i)$
    where assortment $\hat\assortment \subseteq \assortment$
    contains all available products in $\assortment$.
    By definition, the consumer never chooses 
    an unavailable product.
\end{itemize}

Notably, in the {offline (single-shot)} assortment problem,
the sub-assortment sampling builds the following 
connection between the static substitution and dynamic substitution.

\begin{proposition}
For any weak substitutable and downward-closed feasible choice model $\choice$,
any assortment $\assortment\in \assortmentspace$,
and any possible availability of products, 
there exists an randomized assortment $\tilde\assortment$
such that 
for every product $i\in[n]$,
the probability that product $i$ is allocated 
under assortment $\assortment$
in the static substitution
equals to
the expected probability (over the randomness of $\tilde\assortment$) that product $i$ is allocated under
assortment $\tilde\assortment$ 
in the dynamic substitution.
\end{proposition}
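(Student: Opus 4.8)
The plan is to reduce the statement to a direct application of the sub-assortment sampling guarantee in \Cref{prop:assortment polytope}. First I would fix the set of available products, call it $A \subseteq [n]$, and set $\hat{\assortment} \triangleq \assortment \cap A$, the sub-assortment of $\assortment$ consisting of exactly the available offered products. Under static substitution, a consumer facing $\assortment$ selects product $i$ with probability $\choice(\assortment, i)$ but the sale is finalized only when $i$ is available; hence the probability that $i$ is \emph{allocated} equals $\choice(\assortment, i)$ when $i \in A$ and is $0$ otherwise. Since $\choice(\assortment, i) = 0$ whenever $i \notin \assortment$, the only products with a nonzero static allocation probability lie in $\hat{\assortment}$, and for each such $i$ the target probability is $p_i \triangleq \choice(\assortment, i)$.

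Next I would invoke the sub-assortment sampling on $\hat{\assortment}$ with these targets. This requires verifying the two hypotheses of \Cref{prop:assortment polytope}: (a) $\hat{\assortment}$ is itself a feasible assortment, which holds because $\hat{\assortment} \subseteq \assortment$ and $\assortmentspace$ is downward-closed (\Cref{asp:down-closed}); and (b) $p_i \le \choice(\hat{\assortment}, i)$ for every $i \in \hat{\assortment}$, which holds because $\hat{\assortment} \subseteq \assortment$ and weak substitutability (\Cref{asp:substitutability}) gives $\choice(\hat{\assortment}, i) \ge \choice(\assortment, i) = p_i$. With these in place, \Cref{prop:assortment polytope} yields a randomized assortment $\tilde{\assortment}$ with $\tilde{\assortment} \subseteq \hat{\assortment}$ and $\expect[\tilde{\assortment}]{\choice(\tilde{\assortment}, i)} = p_i = \choice(\assortment, i)$ for all $i \in \hat{\assortment}$.

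The final step is the observation that makes dynamic substitution collapse onto $\choice(\tilde{\assortment}, \cdot)$: because $\tilde{\assortment} \subseteq \hat{\assortment} \subseteq A$, every product offered in $\tilde{\assortment}$ is available, so the available sub-assortment of $\tilde{\assortment}$ equals $\tilde{\assortment}$ itself and the dynamic-substitution allocation probability of $i$ under $\tilde{\assortment}$ is exactly $\choice(\tilde{\assortment}, i)$. Taking expectations over the randomness of $\tilde{\assortment}$ then matches the static allocation probability product-by-product: for $i \in \hat{\assortment}$ both sides equal $\choice(\assortment, i)$, while for $i \notin \hat{\assortment}$ (i.e.\ $i$ unavailable or not offered) the static probability is $0$ and, since $i \notin \tilde{\assortment} \subseteq \hat{\assortment}$, the dynamic probability is $0$ as well.

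I expect the only substantive step to be checking hypothesis (b) above --- recognizing that it is precisely weak substitutability that guarantees the targets $\choice(\assortment, i)$ remain feasible for the smaller assortment $\hat{\assortment}$ --- together with the clean observation that restricting to $\tilde{\assortment} \subseteq A$ removes any discrepancy between static and dynamic substitution. The rest is bookkeeping over the three cases $i \in \hat{\assortment}$, $i$ unavailable, and $i$ available but not offered.
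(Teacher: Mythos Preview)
Your proposal is correct and follows the same reduction to \Cref{prop:assortment polytope} that the paper uses. In fact you are more careful than the paper's one-line proof: you explicitly pass the available sub-assortment $\hat{\assortment}=\assortment\cap A$ (rather than $\assortment$ itself) into the sub-assortment sampling procedure, which is what is actually needed for the dynamic-substitution side to collapse to $\choice(\tilde\assortment,\cdot)$ and for the target inequalities $p_i\le\choice(\hat\assortment,i)$ to be nontrivial.
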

\begin{proof}{\textsl{Proof.}}
Invoking \Cref{prop:assortment polytope}
by setting $\choice\gets \choice$,
$\assortment\gets\assortment$
and $p_i\gets \choice(\assortment, i)$
for all $i\in\assortment$
finishes the proof.
\hfill\halmos
\end{proof}

In the online (multi-period) assortment problem, our simulation-based algorithms first samples assortment $\hat\assortment_t$ based on the optimal
assignment of the Bayesian expected LP $\EAR{\typedistributionsequence}$
and then constructs assortment 
$\bar \assortment_t$ by removing unavailable products 
as well as products that the discarding procedure suggests to discard.
We adapt the definition 
of static substitution as follows.
\begin{itemize}
    \item \emph{Static substitution with cancellation:}
    Given assortment $\assortment$,
    a consumer with choice model $\choice$
    chooses product $i$ with probability $\choice(\assortment,i)$.
    After observing the realized choice of the consumer,
the platform can take back this product $i$
immediately with no cost for the purpose of discarding
or due to the unavailability.
\end{itemize}
Under static substitution with cancellation,
we can simplify our simulation-based algorithms 
(\Cref{alg:large inventory} and \Cref{alg:SB})
and preserve the same competitive ratio guarantee 
(\Cref{thm:competitive-ratio large inventory}
and \Cref{thm:stochastic-finite-competitive-ratio})
by modifying the post-processing step as follows:
offer assortment $\sampleassortment_t$
to consumer $t$, 
and take back product $i$ selected by
consumer $t$ if $i\not\in\bar\assortment_t$.
The competitive ratio guarantee for 
the aforementioned
modified \Cref{alg:SB} follows exactly the same argument as before.
To see why the competitive ratio is preserved 
for the aforementioned modified \Cref{alg:large inventory},
note that the similar argument as the one 
in \Cref{thm:competitive-ratio large inventory}
for \Cref{alg:large inventory}
still holds.
In particular, we can consider 
 a hypothetical scenario where we run 
 the modified \Cref{alg:large inventory}
 without inventory constraint and 
 the normal run of the modified \Cref{alg:large inventory};
 Interestingly, under static substitution with cancellation,
 the coupling between two scenarios 
 this time becomes trivial, and then the 
 concentration argument for the 
 hypothetical scenario concludes the argument.
 
 On the other hand, the online (multi-period) assortment 
 problem with dynamic substitution 
 is essentially the exact problem considered in 
 the main text.
 In this sense, the sub-assortment sampling 
 and the coupling argument in the proof of
 \Cref{thm:competitive-ratio large inventory}
 can be interpreted as a reduction 
 from the online assortment problem with
 dynamic substitution
 to the online assortment problem 
 with static substititution (with cancellation).
 }

{
\subsection{Improved competitive ratio for non-reusable case.}
\label{sec:discussion non-reusable}
As a side result and mainly for the purpose of our numerical simulations,  we consider the special case of Bayesian online assortment for non-reusable resources in \Cref{sec:stochastic-infinite}, which was studied in \citet{RST-17}. We follow our simulation-based approach, and then similar to \Cref{sec:stochastic-finite-simple-DP} we use a dynamic programming that captures the expected revenue-to-go of the optimal discarding policy for this product, when the product receives (probabilistic) suggestions from the simulation-based outer algorithm for being placed in the assortment.\footnote{This 
dynamic programming is a generalization of
the dynamic programming for prophet inequality matching
in \citet{AHL-12}.} The main difference this time is that  we can actually compute and run the \emph{exact optimum} discarding DP. To see this, fix a product $i\in [n]$. Notice that in the case of infinite rental durations, the current inventory level plays the role of the state (which is a monotone decreasing quantity, as units of the product never return to the inventory); thus the DP has polynomial-size state space and in principle can be computed efficiently, i.e., in polynomial time. 

It turns out that solving this DP is intimately related to the ``magician problem'', introduced and studied in \citet{ala-14}, which itself is an special case of online contention resolution schemes for a simple uniform matroid environment~\citep{FSZ-16}. By leveraging the techniques introduced in this paper and extending them to the Bayesian online assortment optimization problem, we show that a discarding algorithm that follows the optimum DP loses no more than $\tfrac{1}{\sqrt{\mininventory + 3}}$ fraction of the expected LP's revenue (which is the same near-optimal approximation factor as in the magician problem).
}


Specifically, let $\DP_{i, t}^I$ denote the expected revenue-to-go of product $i$ during $[t,T]$ from the optimal discarding policy, when $I$	is the remaining inventory at time $t$.
Similar to the discussion about the per-unit revenue
thresholds in \Cref{sec:stochastic-finite-simple-DP},
suppose assortment $\sampleassortment \sim \curexantetype$ is sampled 
at time $t$. If the DP discarding policy 
discards the product $i$ from the sampled assortment $\sampleassortment$,
the expected revenue-to-go becomes $\DP_{i, t + 1}^I$;
otherwise, with probability $\choice^{\type_t}(\sampleassortment, i)$\footnote{Similiar to \Cref{sec:stochastic-finite-simple-DP},
here we consider a hypothetical scenario
where 
the 
probability that the consumer select product $i$ equals $\choice^{\type_t}(\sampleassortment, i)$
regardless whether another product $i'$ is discarded from $\sampleassortment$.
}
the consumer selects product $i$ and the expected revenue-to-go becomes
$\DP_{i,t + 1}^{I - 1} + \curreward$.
With the remaining probability, product $i$ is not selected and 
the expected revenue-to-go is $\DP_{i, t + 1}^I$.
Hence, we 
can calculate $\DP_{i, t}^I$ using dynamic programming (i.e., backward induction), with 
the following Bellman equation:
\begin{align}
	\label{eq:bellman equation infinite}
	\begin{split}
		&\DP_{i, t}^I = 
		\displaystyle\sum\nolimits_{\type_t \in \typespace_t}
		\displaystyle\sum\nolimits_{\assortment \in \assortmentspace}
		\curexantetype  
		\max\{\DP_{i, t + 1}^I,
			\curchoice
			(\curreward + 
			\DP_{i, t + 1}^{I - 1})
			+
			(1 - \curchoice)\DP_{i, t + 1}^I
		\},
	\end{split}
\end{align}
After rearranging the terms, it is easy 
to observe that DP decisions are made by an inventory-dependent revenue thresholding rule (that is
independent of $\{\curchoice\}$, as these terms cancel out):
\begin{align*}
	\left[\textrm{At time $t$ with type $\type_t$
			and inventory $I_i$, 
	$i\in \hat{\assortment}$ will be accepted}\right] 
	\Longleftrightarrow 
	\curreward    \geq 
	\DP_{i, t + 1}^{I_i}-\DP_{i, t + 1}^{I_i - 1}
\end{align*}
We are now ready to present our refined simulation-based policy with a new discarding rule. This rule uses the above inventory-dependent thresholds, where these thresholds are computed upfront, but they depend on the product, time, and the current remaining inventory level of the product. See \Cref{alg:SB infinite DP} for details.

\begin{algorithm}[htb]
	\caption{Simulation-based Algorithm with Inventory-dependent Revenue Thresholds}
	\label{alg:SB infinite DP}
 	\vspace{2mm}
 	
 	\emph{\underline{Pre-processing:}}
 	\begin{itemize}
 	    \item Compute the optimal assignment $\{\curalloctypeopt\}$ of $\EAR{\typedistributionsequence}$ by invoking the offline assortment oracle  (\Cref{asp:oracle})
 	    \item 	Set $\curexantetype\triangleq\curalloctypeopt\typedistribution_t(\type_t)$ for every $\assortment$,$t$ and $\type_t$ where optimal assignment has a non-zero entry
  		\item Solve the dynamic programming with Bellman update described in \Cref{eq:bellman equation infinite} 
			and boundary condition $\DP_{i,T+1}^I= \DP_{i, t}^0 = 0$ for every product $i$ and $I \in [c_i]$ to obtain 
		$\{\DP_{i,t}^I\}_{i\in[n],t\in[\totaltime]}^{I \in [\inventory_i]}$
		\item Let $Q_{i, t}^I \triangleq
		\DP_{i, t + 1}^{I}-\DP_{i, t + 1}^{I-1}
		$, for all $i\in[n],t\in[\totaltime], I\in [c_i]$.
 	\end{itemize}

 	\vspace{2mm}
 	
 	\For{$t=1$ to $T$}{
 	\tcc{consumer $t$ with type $\type_t\sim\typedistribution_t$ arrives} 
 	\vspace{1mm}
 	
 	\emph{\underline{Simulation:}} Upon realizing consumer type $\type_t$, sample $\hat{\assortment}_t\sim \{\curalloctypeopt\}_{\assortment\in\assortmentspace}$
 	
 	\vspace{2mm}
 	\emph{\underline{Discarding:}} Initialize $\bar\assortment_t\gets\hat\assortment_t$
 	
 	\For{each product $i\in\hat\assortment_t$}{
 	\vspace{1mm}
 	Let $I_{i,t}$ be the currently available units
 	of product $i$ for product $i$.
 	
 	\vspace{1mm}
 	\If {$\curreward
			<
		Q_{i, t}^{I_{i,t}}$ or there is no available unit of product $i$}{
 	Remove $i$ from $\bar\assortment_t$}}
 	\tcc{Inventory-dependent revenue thresholds $\{Q_{i, t}^{I}\}$ are computed once upfront}
 	\vspace{2mm}
 	\emph{\underline{Post-processing:}} Let $\tilde{S}_t\gets \textsc{Sub-assortment Sampling} \left(\choice^{\type_t}, \bar\assortment_t,\{\choice(\hat\assortment_t, i)\}_{i\in \bar\assortment_t}\right)$
 	
 	\tcc{Send a query call to Procedure~\ref{alg:sample assortment} with appropriate input arguments}
 	\vspace{2mm}
 	
 	Offer assortment $\tilde{\assortment}_t$ to consumer $t$
 	
 	}
\end{algorithm}

\begin{restatable}{theorem}{infiniteDP}
	\label{thm:SB infinite DP}
	Let $\mininventory = \min_{i\in[n]}\inventory_i$ be the smallest inventory.
	The competitive ratio of \Cref{alg:SB infinite DP}
	against offline Bayesian expected LP benchmark, i.e., $\EAR{\typedistributionsequence}$
	is at least $\left(1-\tfrac{1}{\sqrt{\mininventory+3}}\right)$.
	Moreover, it runs in time
	$\textrm{Poly}(n,T, \sum_{t\in[\totaltime]}\lvert\typespace_t\rvert)$ given oracle access to an offline algorithm for assortment optimization (\Cref{asp:oracle}).
\end{restatable}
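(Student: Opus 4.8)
The plan is to mirror the two-part, product-by-product decoupling used in the proof of \Cref{thm:stochastic-finite-competitive-ratio}, but to replace the factor-revealing LP of its Part~(\rom{2}) by a reduction to the \emph{magician's problem} of \citet{ala-14} (equivalently, an online contention resolution scheme for a uniform matroid, \citealt{FSZ-16}). Since the per-product discarding procedures run independently and both the algorithm and the Bellman recursion \eqref{eq:bellman equation infinite} treat each product separately, it suffices to fix a product $i$ and lower bound its expected revenue by $(1-\tfrac{1}{\sqrt{\inventory_i+3}})$ times its contribution $\text{LP}_i \triangleq \sum_{t}\sum_{\type_t}\sum_{\assortment}\curexantetype\curchoice\curreward$ to the objective of $\EAR{\typedistributionsequence}$; summing over $i$, using $\inventory_i\ge\mininventory$ and monotonicity of $x\mapsto 1-\tfrac{1}{\sqrt{x+3}}$, together with $\sum_i \text{LP}_i = \EAR{\typedistributionsequence}$, then yields the stated competitive ratio.

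For the first part, I would show that the expected revenue of \Cref{alg:SB infinite DP} attributable to rentals of product $i$ is at least $\DP_{i,1}^{\inventory_i}$, by backward induction on $t$ exactly as in Part~(\rom{1}) of \Cref{thm:stochastic-finite-competitive-ratio}. The key enabler is that the sub-assortment sampling post-processing, via \Cref{prop:assortment polytope}, guarantees each non-discarded product is rented with exactly its sampled choice probability $\curchoice$, so the combined effect of weak substitution, unavailable units of \emph{other} products, and discards of other products is only to offer a sub-assortment of $\hat\assortment_t$. Because the threshold rule $\curreward \ge Q_{i,t}^{I_{i,t}}$ is precisely the greedy-optimal action in \eqref{eq:bellman equation infinite} and the value function is monotone in the available inventory, passing to such a sub-assortment can only weakly increase the per-product revenue-to-go, which closes the induction with the algorithm's product-$i$ revenue dominating $\DP_{i,1}^{\inventory_i}$.

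For the second part, I would reinterpret the DP $\{\DP_{i,t}^{I}\}$ as the magician's problem with $\inventory_i$ wands. Let $q_{i,t} \triangleq \sum_{\type_t}\sum_{\assortment}\curexantetype\curchoice$ be the probability that product $i$ ``requests'' a unit at time $t$ in the simulation step. For non-reusable resources the complementary duration CDF is identically one, so the inventory constraint of $\EAR{\typedistributionsequence}$ evaluated at $t=T$ reads $\sum_{t=1}^{T} q_{i,t}\le \inventory_i$, which is exactly the budget condition of the magician with $\inventory_i$ wands. By construction $\DP_{i,t}^{I}$ is the expected revenue of the \emph{optimal} online discarding policy for this request process subject to the hard capacity $\inventory_i$, hence $\DP_{i,1}^{\inventory_i}$ dominates the expected revenue of \emph{any} feasible online policy; in particular it dominates the $\gamma$-conservative magician's policy of \citet{ala-14}, which accepts each arriving request with probability at least $1-\gamma$ \emph{uniformly over its realized reward}, with $\gamma=\tfrac{1}{\sqrt{\inventory_i+3}}$. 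Applying this per-request, reward-by-reward guarantee and summing shows the magician's expected revenue is at least $(1-\tfrac{1}{\sqrt{\inventory_i+3}})\,\text{LP}_i$, and therefore so is $\DP_{i,1}^{\inventory_i}$.

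Finally, the running time follows from \Cref{prop:EAR running time} for solving the expected LP, the fact that the DP \eqref{eq:bellman equation infinite} has a polynomial-size state space (inventory level in $[\inventory_i]$ times horizon $T$), and the polynomial-time sub-assortment sampling of \Cref{prop:assortment polytope}. The main obstacle I anticipate is the magician reduction in the second part: I must verify carefully that the acceptance-probability guarantee is \emph{uniform across reward realizations} so that it converts losslessly into the revenue bound $\text{LP}_i$-weighted by values, that the LP packing constraint matches the magician's budget $\sum_t q_{i,t}\le \inventory_i$ exactly, and that the magician's policy is admissible in the DP's feasible class so that optimality of $\DP_{i,1}^{\inventory_i}$ indeed dominates it. The first part is comparatively routine given the sub-assortment sampling machinery already established.
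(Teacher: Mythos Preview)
Your approach is correct, and Part~(\rom{1}) coincides with the paper's argument. Part~(\rom{2}), however, takes a genuinely more direct route than the paper. The paper writes down a factor-revealing primal LP in the variables $\{\curreward\}$ and $\{\DP_{i,t}^I\}$, passes to the dual, and then \emph{re-derives} Alaei's sand/barrier construction inside that dual to exhibit a feasible assignment with objective $1-\tfrac{1}{\sqrt{\inventory_i+3}}$ (\Cref{lem:factor revealing}). You instead observe that $\DP_{i,1}^{\inventory_i}$ is the optimal value over the single-product discarding MDP, so it dominates any feasible policy---in particular Alaei's $\gamma$-conservative magician, whose acceptance indicator $\pi_t$ depends only on the number of units already allocated (hence is independent of $(\type_t,\hat\assortment_t)$ and the current reward). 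Because $\Pr[\pi_t=1]\ge 1-\tfrac{1}{\sqrt{\inventory_i+3}}$ unconditionally and $\sum_t q_{i,t}\le\inventory_i$ follows from the LP constraint at $t=T$, the magician's expected revenue factors as $\sum_t\Pr[\pi_t=1]\cdot\sum_{\type_t,\assortment}\curexantetype\curchoice\curreward \ge \bigl(1-\tfrac{1}{\sqrt{\inventory_i+3}}\bigr)\text{LP}_i$. This bypasses the LP duality machinery entirely while still resting on the same sand/barrier lemma; the paper's approach buys uniformity with the reusable-case analysis in \Cref{apx:part2}, yours buys brevity. The concerns you flag (reward-uniform acceptance, budget match, admissibility of the magician policy in the DP's class) all check out: the magician's policy is state-$(t,I)$ measurable, hence a feasible randomized discarding rule dominated by the DP optimum, and its obliviousness to $(\type_t,\hat\assortment_t)$ gives exactly the independence you need.
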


The proof follows a similar structure
as in proof of \Cref{thm:stochastic-finite-competitive-ratio}. See \Cref{apx:sand barrier} for details. 
\section{Omitted Proofs in Section~\ref{sec:prelim}}
\label{apx:LP}
\EARupperbound*
\begin{proof}{\emph{Proof.}}
Let $I_{\assortment,t}(\mathbf{\type})\in\{0,1\}$ be the indicator that clairvoyant optimum online benchmark offers set $\assortment$ at time $t$ for a given type sequence $\mathbf{\type}=\typesequence$. As clairvoyant optimum online benchmark is inventory feasible for any sample path of rental durations $\mathbf{\duration}=\{\duration_t\}_{t=1}^T$,  for all $i \in [n],\ t \in [\totaltime]$ we have:
\begin{equation*}
\displaystyle\sum\nolimits_{\tpre = 1}^{t}
	\displaystyle\sum\nolimits_{\assortment \in \assortmentspace}
	\indicator{\duration_\tpre \geq t - \tpre}
	\prechoice I_{\assortment,t}(\mathbf{\type}) \leq \inventory_i
\end{equation*}
Now, note that random variable $d_\tpre$ is independent from random variable $I_{\assortment,\tpre}(\mathbf{\type})$, as clairvoyant optimum online benchmark does not get to see $d_\tpre$ when assorts a set at time $\tpre$. Therefore,  setting  $\curalloctype=\texpect{I_{\assortment,t}(\mathbf{\type})\given \type_t}$
results in a feasible assignment in the corresponding linear program $\EAR{\typedistributionsequence}$ for any sequence of type distributions $\typedistributionsequence$; 
to see this, take expectation of the LHS of the above inequality over rental durations first, and then over $\mathbf{z}$ considering that types are independent. 
Moreover, the objective value of $\EAR{\typedistributionsequence}$ under this assignment will be equal to the expected revenue of the clairvoyant optimum online benchmark, which finishes the proof.
 \hfill\Halmos
\end{proof}
\EARpolytime*

Before proving the above proposition, we first state a technical lemma regarding the ellipsoid algorithm and sketch its proof
\citep[see][for a detailed proof]{GLS-1981}.
\begin{lemma}
\label{lem:ellipsoid-dual}
Suppose a primal linear program is bounded and feasible. Moreover, suppose its dual has a separation oracle with running time at most $\tau$. Then primal admits an optimal solution that has $\textrm{Poly}(\tau,m)$ non-zero entries and can be computed in $\textrm{Poly}(\tau,m)$ time, where $m$ is the number of dual variables (primal constraints). 
\end{lemma}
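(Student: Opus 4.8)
The plan is to reduce the claim to the classical separation--optimization equivalence for the ellipsoid method \citep{GLS-1981}, combined with a duality bookkeeping argument that turns a sparse certificate on the dual side into a sparse optimal solution on the primal side. First I would fix notation and write the primal as $\max\{c^\top x : Ax \le b,\ x \ge 0\}$, where $x$ ranges over the (possibly exponentially many) primal variables and $A$ has $m$ rows. Its dual is $\min\{b^\top y : A^\top y \ge c,\ y \ge 0\}$, a program in only $m$ variables $y$ whose constraints $\sum_i a_{ij}y_i \ge c_j$ are indexed by the primal variables $j$; thus each dual constraint \emph{names} a primal column. The assumed separation oracle for the dual, on input $y$, either certifies dual feasibility or returns one violated constraint index $j$ (equivalently, one primal variable). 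Since the primal is bounded and feasible, strong LP duality guarantees the dual is feasible and bounded with finite common value $\mathrm{opt}(P) = \mathrm{opt}(D)$, which is what lets the ellipsoid procedure terminate.

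Next I would run the ellipsoid method on the dual, reducing optimization to a polynomial number of feasibility tests on systems of the form $\{y : A^\top y \ge c,\ y \ge 0,\ b^\top y \le \gamma\}$ via binary search on the threshold $\gamma$ (the standard sliding-objective reduction). Each feasibility test invokes the separation oracle at most $\textrm{Poly}(m)$ times at cost $\tau$ per call, so the whole run takes $\textrm{Poly}(\tau,m)$ time (up to the usual polynomial dependence on the encoding length, which I fold into the polynomial). Let $J$ collect every constraint index ever returned by the oracle across the entire run; by construction $\lvert J\rvert = \textrm{Poly}(\tau,m)$, and each element of $J$ names a primal variable.

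The heart of the argument --- and the step I expect to be the main obstacle --- is showing that this polynomially sized subsystem already pins down the dual optimum, i.e.\ $\mathrm{opt}(D_J) = \mathrm{opt}(D)$, where $D_J$ is the dual with only the constraints in $J$. One inequality is immediate: $D_J$ relaxes $D$, so $\mathrm{opt}(D_J) \le \mathrm{opt}(D)$. For the reverse, the precise content of the GLS machinery is that the infeasibility certificate the ellipsoid produces for any threshold $\gamma < \mathrm{opt}(D)$ references only the separating hyperplanes it actually queried, all of which lie in $J$; hence the restricted system $\{y : (A^\top y)_j \ge c_j\ \forall j\in J,\ y\ge 0,\ b^\top y\le\gamma\}$ is already infeasible, so $D_J$ cannot attain a value below $\mathrm{opt}(D)$ either, giving $\mathrm{opt}(D_J)\ge\mathrm{opt}(D)$. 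The delicate point I would need to state carefully (or cite verbatim from \citet{GLS-1981}) is exactly this support property of the infeasibility certificate; a naive count of violated constraints does not by itself guarantee that the retained subsystem preserves the optimum.

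Finally I would close the loop by duality. Let $P_J$ denote the primal obtained by deleting every column $j\notin J$; then $D_J$ is precisely its LP dual, so strong duality for this polynomial-size pair yields $\mathrm{opt}(P_J) = \mathrm{opt}(D_J) = \mathrm{opt}(D) = \mathrm{opt}(P)$. Hence the full primal has an optimal solution supported on $J$. Since $P_J$ has only $\textrm{Poly}(\tau,m)$ variables and $m$ constraints, I would solve it exactly in polynomial time and extract a basic optimal solution, which (being a vertex of $\{Ax\le b,\ x\ge 0\}$ restricted to $J$) has at most $m$ nonzero coordinates. Padding it with zeros on the deleted coordinates produces an optimal solution of the full primal with $\textrm{Poly}(\tau,m)$ nonzero entries, computed in $\textrm{Poly}(\tau,m)$ time, as required.
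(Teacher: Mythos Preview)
Your proposal is correct and follows essentially the same route as the paper's own proof sketch: run the ellipsoid method on the dual, collect the polynomially many violated constraints it queries, argue these suffice to pin down the dual optimum, and then solve the resulting compact primal restricted to the corresponding columns. If anything, your write-up is more careful than the paper's, since you explicitly flag the support property of the ellipsoid's infeasibility certificate (i.e., why $\mathrm{opt}(D_J)=\mathrm{opt}(D)$), which the paper simply asserts.
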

\begin{proof}{\emph{Proof sketch of \Cref{lem:ellipsoid-dual}.}}
Because primal LP is feasible and bounded, dual is also feasible and bounded. Now consider running ellipsoid algorithm with the separation oracle to solve the dual. Ellipsoid can solve the dual in $\textrm{Poly}(\tau,m)$ time. Now by looking at the run of ellipsoid and considering only dual constraints that ellipsoid returns as violated,  we can find $\textrm{Poly}(\tau,m)$ many constraints in the dual such that the dual value does not change if we throw away the remaining constraints. Note that constraints in the dual correspond to variables in the primal. So, we can throw away all the variables in the primal except the ones corresponding to the constraints in the dual that the separation oracles asks us to keep, and the LP value does not change. Now we have a compact LP in the primal, which we can again solve to get the succinct primal solution with $\textrm{Poly}(\tau,m)$ non-zero entries and in time $\textrm{Poly}(\tau,m)$.
\hfill\Halmos
\end{proof}
\begin{proof}{\emph{Proof of \Cref{prop:EAR running time}.}}
Given \Cref{lem:ellipsoid-dual}, we are ready to show that expected LP can be solved efficiently by the offline oracle. $\EAR{\typedistributionsequence}$ has exponentially many variables and $(nT+T)$ constraints. Hence its dual has only $(nT+T)$ variables and exponentially many constraints. Here is the dual LP, with dual variable $\inventorydual$ 
	and $\probdual$:
\begin{align*}
	\begin{array}{rlll}
		\min\limits_{\boldsymbol{\theta},\boldsymbol{\lambda} \geq \mathbf 0} &
		~~~~~~~~~~\displaystyle\sum\nolimits_{t=1}^\totaltime
		\displaystyle\sum\nolimits_{i=1}^n
		\inventory_i\inventorydual+ \displaystyle\sum\nolimits_{t =1}^\totaltime\sum\nolimits_{\type_t\in\typespace_t}\probdualtype &~~~ \text{s.t.} &\\[1em]
		&
		\probdualtype + \displaystyle\sum\nolimits_{i = 1}^n\displaystyle\sum\nolimits_{\tpre= t}^T\typedistribution_t(\type_t)\durationcdfi^{\type_t}_i(\tpre-t)\curchoice\theta_{i,\tpre}\geq \displaystyle\sum\nolimits_{i = 1}^n
		\curreward \typedistribution_t(\type_t)\curchoice
		&~~~\assortment \in \assortmentspace,\ t \in [\totaltime],\type_t\in\typespace_t
		&
	\end{array}
\end{align*}	
Now, fix $t\in[\totaltime], \type_t\in\typespace_t$, and consider the group of dual constraints corresponding to $t$ and $\type_t$. In order to obtain a separation oracle for this group of constraints, given $\probdualtype$ and $\{\theta_{i,\tpre}\}$ one needs to find a set $\hat{S}\in\underset{\assortment\in\assortmentspace}\argmax~{\displaystyle\sum\nolimits_{i=1}^n\hat{R}_i\choice^{\type_{t}}(\assortment,i)}$, where
\begin{equation*}
\hat{R}_i\triangleq\typedistribution_t(\type_t)\left(\curreward-\displaystyle\sum\nolimits_{\tpre= t}^T\durationcdfi^{\type_t}_i(\tpre-t)\theta_{i,\tpre}\right)
\end{equation*}
Then if $\probdualtype\geq \displaystyle\sum\nolimits_{i=1}^n\hat{R}_i\choice^{\type_{t}}(\hat\assortment,i)$ these constraints are satisfied and if not the constraint corresponding to $t,\type_t,\hat{\assortment}$ is a separating hyperplane.  Note that because of downward-closeness of $\assortmentspace$, i.e., \Cref{asp:down-closed}, and weak-substitutability of $\curchoice$, i.e., \Cref{asp:substitutability}, we can discard products for which $\hat{R}_i<0$, and hence without loss of generality assume $\hat{R}_i\geq 0$. Therefore, thanks to \Cref{asp:oracle}, one can find such a subset $\hat \assortment$ with one call to the oracle. By a polynomial time search over all possible $t$ and $\type_t$, dual will have a separation oracle with running time $\textrm{Poly}(n,T,\sum_{t\in[\totaltime}\lvert\typespace_t\rvert)$. Invoking \Cref{lem:ellipsoid-dual} finishes the proof.
\hfill\Halmos
\end{proof}
\section{Omitted Technical Details of \Cref{sec:large inventory} (Large Inventory)}
\label{apx:alg 1 proof}
\revcolor{
In this section, we provide the detailed analysis 
of the competitive ratio of \Cref{alg:large inventory},
stated in \Cref{thm:competitive-ratio large inventory}.

In \Cref{sec:large inventory}, 
we highlighted the one major issue due to the possibly positive 
correlation among allocations of the same product across time.
To work around this issue,
we use a careful coupling argument in our analysis of \Cref{alg:large inventory} that couples the rental indicator random variables of our algorithm with an alternative hypothetical algorithm. This hypothetical algorithm \emph{ignores} inventory constraints of all the products and only simulates the expected LP's optimal solution combined with independent discarding of each product with probability $\gamma$. This algorithm generates an independent sequence of rental indicator random variables, allowing us to use simple concentration bounds. Importantly, this coupling trick is only possible because of the guarantee of the sub-assortment sampling procedure in \Cref{eq:subassortment}, which will be clear later in the proof.

Before the proof of \Cref{thm:competitive-ratio large inventory}, we recall the multiplicative form of Chernoff concentration bound~\citep{C-1952} that is used in the proof. 

\begin{lemma}[Multiplicative Chernoff Bound]
\label{lem:chernoff}
Suppose $X_1,X_2,\ldots,X_t$ are independent random variables taking values in $\{0,1\}$. Let $X$ denote their sum and let $\mu=\expect[]{X}$ denote the sum's expected value. Then for any $\delta>0$,
$$
\prob{X> (1+\delta)\mu}\leq \exp\left(-\frac{\delta^2\mu}{2+\delta}\right)
$$
\end{lemma}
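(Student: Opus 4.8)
The plan is to prove this via the standard Chernoff exponential-moment method, followed by an analytic simplification to reach the stated clean form. First I would fix an arbitrary $s>0$ and apply Markov's inequality to the nonnegative random variable $e^{sX}$, writing $\mathbb{P}[X > (1+\delta)\mu] = \mathbb{P}[e^{sX} > e^{s(1+\delta)\mu}] \le e^{-s(1+\delta)\mu}\,\mathbb{E}[e^{sX}]$. Since the $X_i$ are independent, the moment generating function factorizes: $\mathbb{E}[e^{sX}] = \prod_{i} \mathbb{E}[e^{sX_i}]$.

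Next, writing $p_i = \mathbb{P}[X_i = 1]$ so that $\mu = \sum_i p_i$, each factor equals $\mathbb{E}[e^{sX_i}] = 1 + p_i(e^s - 1)$, and applying the elementary inequality $1 + x \le e^x$ with $x = p_i(e^s-1) \ge 0$ gives $\mathbb{E}[e^{sX_i}] \le \exp(p_i(e^s-1))$. Multiplying over $i$ yields $\mathbb{E}[e^{sX}] \le \exp((e^s-1)\mu)$, whence $\mathbb{P}[X > (1+\delta)\mu] \le \exp\!\big(\mu(e^s - 1 - s(1+\delta))\big)$ for every $s > 0$. Optimizing the exponent over $s$ by setting its derivative $e^s - (1+\delta)$ to zero gives the minimizer $s = \ln(1+\delta) > 0$, which produces the classical bound $\mathbb{P}[X > (1+\delta)\mu] \le \exp\!\big(-\mu[(1+\delta)\ln(1+\delta) - \delta]\big)$.

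It then remains to pass from this classical form to the stated bound, which reduces to the purely analytic inequality $(1+\delta)\ln(1+\delta) - \delta \ge \frac{\delta^2}{2+\delta}$ for all $\delta \ge 0$. I would establish this by defining $g(\delta) = (1+\delta)\ln(1+\delta) - \delta - \frac{\delta^2}{2+\delta}$ and showing $g \ge 0$ through two differentiations. One checks $g(0) = 0$ and $g'(\delta) = \ln(1+\delta) - \frac{\delta(4+\delta)}{(2+\delta)^2}$ with $g'(0) = 0$, and then computes $g''(\delta) = \frac{1}{1+\delta} - \frac{8}{(2+\delta)^3}$ (the second term simplifies cleanly because its numerator collapses to the constant $8$). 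The inequality $g''(\delta) \ge 0$ is equivalent to $(2+\delta)^3 \ge 8(1+\delta)$, i.e. $\delta^3 + 6\delta^2 + 4\delta \ge 0$, which is manifest for $\delta \ge 0$. Hence $g'$ is nondecreasing with $g'(0)=0$, so $g' \ge 0$; therefore $g$ is nondecreasing with $g(0)=0$, giving $g \ge 0$ and completing the chain.

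I expect the only genuine obstacle to be this last analytic inequality: the exponential-moment machinery is entirely routine, whereas converting the sharp Chernoff rate $(1+\delta)\ln(1+\delta)-\delta$ into the algebraically convenient lower bound $\frac{\delta^2}{2+\delta}$ requires the two-step monotonicity argument above. A couple of minor points are worth flagging but cause no trouble: the probability is of the strict event $X > (1+\delta)\mu$, so Markov's inequality applies verbatim and no boundary case arises; and in the degenerate case $\mu = 0$ (all $p_i = 0$) one has $X = 0$ almost surely, so the left side is $0$ while the right side is $1$, and the claim holds trivially.
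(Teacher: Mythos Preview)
Your proof is correct and complete. The paper itself does not prove this lemma: it merely states the bound as a known result with a citation to \citet{C-1952} and then invokes it in the proof of \Cref{thm:competitive-ratio large inventory}. Your argument supplies the standard exponential-moment derivation (Markov on $e^{sX}$, factorize, bound $1+x\le e^x$, optimize $s=\ln(1+\delta)$) together with the analytic simplification $(1+\delta)\ln(1+\delta)-\delta \ge \tfrac{\delta^2}{2+\delta}$, and all the calculus checks out, including the clean collapse of $h'(\delta)$ to $8/(2+\delta)^3$.
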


\begin{proof}{\emph{Proof of \Cref{thm:competitive-ratio large inventory}.}}
Running time is easily proved by \Cref{prop:EAR running time} and \Cref{prop:assortment polytope}. For the proof of competitive ratio,  we first claim that for any time period $t$ and product $i$,
the probability that there exists an available unit 
of product $i$ is at least $1 - \exp\left(-\frac{\gamma^2\mininventory}
    {2-\gamma}\right)$. To show this claim, define the following indicator random variables and their corresponding events:
\begin{align*}
    \indica_{i, t}: &\text{ event that a unit of product $i$ is allocated 
    at time $t$}~;
    \\
    \indica_{i, t}^{(1)}: &\text{ event that product $i$ is in assortment $\hat\assortment_t$}~; 
    \\
    \indica_{i, t}^{(2)}: &\text{ event that product $i$ is not removed from assortment $\bar\assortment_t$ in line 6
    of \Cref{alg:large inventory}}~; 
    \\
     \indica_{i, t}^{(3)}: &\text{ event that a unit of product $i$ is available in the inventory at the beginning of time $t$}~; 
    \\
    \indica_{i, t}^{(4)}: &\text{ event that product $i$ is 
    in assortment $\tilde\assortment_t$ 
    and selected by consumer $t$}~; \\
    \indica_{i, \tpre, t}: &\text{ event that 
    rental duration of product $i$ at time $\tpre$ is at least 
    $t - \tpre$~.}
\end{align*}
Notice that by definition, 
$\indica_{i, t} = \indica_{i, t}^{(1)} \cdot 
\indica_{i, t}^{(2)} \cdot 
\indica_{i, t}^{(3)} \cdot 
\indica_{i, t}^{(4)}$. Now, our claim is equivalent to
\begin{align}
\label{eq:small infeasible probability}
    \prob{\sum_{\tpre < t} \indica_{i,\tpre}\cdot \indica_{i,\tpre,t} \geq \inventory_i} \leq 
    \exp\left(-\frac{\gamma^2\mininventory}
    {2-\gamma}\right)~,
\end{align}
for any time $t$ and 
product $i$. To show inequality~\eqref{eq:small infeasible probability},
one concern is that $\{\indica_{i,t}\}$ are not independent across $t$.
To resolve this issue, 
consider a hypothetical scenario where 
we run \Cref{alg:large inventory}
without inventory constraints, 
and define random variables 
$\left\{\type_t^\dagger, \hat\assortment_t^\dagger,
\indica_{i, t}^\dagger, 
\indica_{i, t}^{(1)\dagger}, \indica_{i, t}^{(2)\dagger}, 
\indica_{i, t}^{(3)\dagger}, \indica_{i, t}^{(4)\dagger}, \indica_{i,\tpre,t}^\dagger\right\}_{i,t,\tau}$ 
exactly in the same way as in a normal run of \Cref{alg:large inventory} with inventory constraints.  
Note that by definition,
(i) $\indica_{i, t}^{(3)\dagger}$ is deterministically equal to  1;
(ii)
$\indica^\dagger_{i, t} = \indica_{i, t}^{(1)\dagger} \cdot 
\indica_{i, t}^{(2)\dagger} \cdot 
\indica_{i, t}^{(3)\dagger}
\cdot 
\indica_{i, t}^{(4)\dagger}
=\indica_{i, t}^{(1)\dagger} \cdot 
\indica_{i, t}^{(2)\dagger} \cdot 
\indica_{i, t}^{(4)\dagger}
$;
and
(iii)
$\{\indica^\dagger_{i,\tpre}\cdot\indica_{i,\tpre,t}^\dagger\}_\tpre$ are mutually independent across $\tpre$.

We now use a coupling between this hypothetical scenario and the normal run of \Cref{alg:large inventory} with inventory constraints, in which $\sum_{\tpre < t} \indica_{i,\tpre}^\dagger\cdot \indica_{i,\tpre,t}^\dagger\geq \sum_{\tpre < t} \indica_{i,\tpre}\cdot \indica_{i,\tpre,t}$ for all $i,t$. Clearly, we can define the coupling such that
$\type_t^\dagger \gets\type_t$, $\hat\assortment_t^\dagger \gets \hat\assortment_t$ (and therefore
$\indica_{i, t}^{(1)\dagger} = \indica_{i, t}^{(1)}$),
$\indica_{i, t}^{(2)\dagger} \gets \indica_{i, t}^{(2)}$,
and 
$\indica_{i,\tpre,t}^\dagger \gets
\indica_{i,\tpre,t}$.
Additionally, 
notice that 
\begin{align*}
\prob{\indica_{i, t}^{(4)}
\cdot \indica_{i, t}^{(3)} = 1 
\condition 
\indica_{i, t}^{(1)}, \indica_{i, t}^{(2)},\indica_{i, t}^{(3)}}
&{=}~
\choice^{\type_t}(\hat\assortment_t, i)
\cdot 
\indica_{i, t}^{(1)}\cdot \indica_{i, t}^{(2)}
\cdot\indica_{i, t}^{(3)}{=}
\choice^{\type_t^\dagger}(\hat\assortment_t^\dagger, i)
\cdot 
\indica_{i, t}^{(1)\dagger}\cdot \indica_{i, t}^{(2)\dagger}\cdot \indica_{i, t}^{(3)}
\\
&{=}~
\prob{\indica_{i, t}^{(4)\dagger}
\cdot \indica_{i, t}^{(3)} = 1 
\condition 
\indica_{i, t}^{(1)\dagger}, \indica_{i, t}^{(2)\dagger}, \indica_{i, t}^{(3)}}~,
\end{align*}
where the first equality holds due to the guarantee of the sub-assortment sampling in 
\Cref{prop:assortment polytope} and the second equality holds because of the coupling above. Also, the third equality holds automatically when $\indica_{i, t}^{(3)}=0$, and when $\indica_{i, t}^{(3)}=1$ it holds because of the guarantee of sub-assortment sampling in \Cref{prop:assortment polytope} and the fact that $\indica_{i, t}^{(3)}=\indica_{i, t}^{(3)\dagger}$. Again -- because of the coupling above -- we conclude that: 
$$
\prob{\indica_{i, t}^{(4)}
\cdot \indica_{i, t}^{(3)} = 1 
\condition 
\indica_{i, t}^{(1)}, \indica_{i, t}^{(2)}}=\prob{\indica_{i, t}^{(4)\dagger}
\cdot \indica_{i, t}^{(3)} = 1 
\condition 
\indica_{i, t}^{(1)\dagger}, \indica_{i, t}^{(2)\dagger}}~.
$$
Hence, we can further couple random variables
$\indica_{i, t}^{(4)\dagger}
\cdot \indica_{i, t}^{(3)}\gets
\indica_{i, t}^{(4)}
\cdot \indica_{i, t}^{(3)}$. Therefore, $\indica_{i, t}^{(4)\dagger}
\cdot \indica_{i, t}^{(3)\dagger}\geq \indica_{i, t}^{(4)\dagger}
\cdot \indica_{i, t}^{(3)}= \indica_{i, t}^{(4)}
\cdot \indica_{i, t}^{(3)}$,
which guarantees that 
%
$\indica^\dagger_{i, t}\cdot \indica_{i,\tpre,t}^\dagger
\geq \indica_{i, t}
\cdot \indica_{i,\tpre,t}$ 
for all $i, t, \tau$ in this coupling.
Thus, it is sufficient to show that
\begin{align*}
    \prob{\sum_{\tpre < t} \indica_{i,\tpre}^\dagger\cdot \indica_{i,\tpre,t}^\dagger \geq \inventory_i} \leq 
    \exp\left(-\frac{\gamma^2\mininventory}
    {2-\gamma}\right)~.
\end{align*}

To show the above bound, we first rewrite the expectation $\expect{\indica_{i,\tpre}^\dagger\cdot \indica_{i,\tpre,t}^\dagger}$ as follows,
\begin{align*}
    \expect{\indica_{i,\tpre}^\dagger\cdot \indica_{i,\tpre,t}^\dagger} &= 
    \expect{\indica_{i, \tpre}^{(1)\dagger} \cdot 
\indica_{i, \tpre}^{(2)\dagger} \cdot 
\indica_{i, \tpre}^{(4)\dagger}\cdot \indica_{i,\tpre,t}^\dagger} \\
&=
\expect{\indica_{i, \tpre}^{(1)\dagger} \cdot 
\indica_{i, \tpre}^{(2)\dagger} \cdot 
\expect{\indica_{i, \tpre}^{(4)\dagger}\cdot\indica_{i,\tpre,t}^\dagger\condition
\indica_{i, \tpre}^{(1)\dagger}, \indica_{i, \tpre}^{(2)\dagger}
}
} \\
&= 
\expect{\indica_{i, \tpre}^{(1)\dagger} \cdot 
\indica_{i, \tpre}^{(2)\dagger} \cdot 
\left(\choice^{\type_\tpre^\dagger}(\hat\assortment_\tpre^\dagger, i)\cdot\indica_{i, \tpre}^{(1)\dagger}\cdot \indica_{i, \tpre}^{(2)\dagger} 
\cdot \indica_{i,\tpre,t}^\dagger\right)
} \\
&=
\expect{\indica_{i, \tpre}^{(1)\dagger} \cdot 
\indica_{i, \tpre}^{(2)\dagger} \cdot 
\choice^{\type_\tpre^\dagger}(\hat\assortment_\tpre^\dagger, i)\cdot \indica_{i,\tpre,t}^\dagger}
=
(1-\gamma)
	\displaystyle\sum
	_{\type_\tpre\in\typespace_\tpre}
	\displaystyle\sum
	_{\assortment \in \assortmentspace}
	\typedistribution_\tpre(\type_\tpre)
    \durationcdfi^{\type_{\tpre}}_i(t-\tpre)
	\curchoice\alloc_{\assortment, t, \type_t}^*~,
\end{align*}
where we use the facts that 
$\expect{\indica_{i, t}^{(4)\dagger}\condition
\indica_{i, t}^{(1)\dagger}, \indica_{i, t}^{(2)\dagger}
} 
= \choice^{\type_t^\dagger}(\hat\assortment_t^\dagger, i)\cdot\indica_{i, t}^{(1)\dagger}\cdot \indica_{i, t}^{(2)\dagger}
$ 
by
\Cref{prop:assortment polytope};
$\expect{\indica_{i, t}^{(2)\dagger}\condition\indica_{i, t}^{(1)\dagger}}
=
(1-\gamma)\,
\indica_{i, t}^{(1)\dagger}$ by construction; and random variables $\indica_{i,\tpre}^\dagger$ and $\indica_{i,\tpre,t}^\dagger$ are independent conditioned on $\type_\tpre$. Thus, 
\begin{align*}
\expect{\sum_{\tpre < t} \indica_{i,\tpre}^\dagger\cdot \indica_{i,\tpre,t}^\dagger}
&=
(1-\gamma)\displaystyle\sum
_{\tpre < t}
	\displaystyle\sum
	_{\type_\tpre\in\typespace_\tpre}
	\displaystyle\sum
	_{\assortment \in \assortmentspace}
	\typedistribution_\tpre(\type_\tpre)
    \durationcdfi^{\type_{\tpre}}_i(t-\tpre)
	\curchoice\alloc_{\assortment, t, \type_t}^* \\
	&\leq (1-\gamma)\,\inventory_i
\end{align*}
We finish the proof of our claim -- i.e., tail bound in~\eqref{eq:small infeasible probability}-- by applying the multiplicative form of Chernoff bound (\Cref{lem:chernoff}) for the sequence of independent random variables $\left\{\indica_{i,\tpre}^\dagger\cdot \indica_{i,\tpre,t}^\dagger\right\}_{\tpre=1}^{t-1}$.

Now, fix any time $t$ and product $i$. Consider the 
expected contribution of product $i$ at time period $t$ in the revenue of \Cref{alg:large inventory}. By definition, it is equal to
\begin{align*}
    \expect{\reward_i^{\type_t}\,\indica_{i, t}} &=
    \expect{\reward_i^{\type_t}\,\indica_{i, t}^{(1)} \cdot 
\indica_{i, t}^{(2)} \cdot 
\indica_{i, t}^{(3)} \cdot 
\indica_{i, t}^{(4)}} 
\\
&=\expect{\reward_i^{\type_t}\,\indica_{i, t}^{(1)} \cdot 
\indica_{i, t}^{(2)} \cdot \indica_{i, t}^{(3)} \cdot
\expect{\indica_{i, t}^{(4)}\condition
\indica_{i, t}^{(1)}, \indica_{i, t}^{(2)},\indica_{i, t}^{(3)}
}
} \\
&= \expect{\reward_i^{\type_t}\,\indica_{i, t}^{(1)} \cdot 
\indica_{i, t}^{(2)} \cdot 
\indica_{i, t}^{(3)} \cdot 
\left(
\choice^{\type_t}(\hat\assortment_t, i)
\cdot
\indica_{i, t}^{(1)}
\cdot
\indica_{i, t}^{(2)}
\cdot
\indica_{i, t}^{(3)}
\right)} \\
&\geq 
(1-\gamma)\left(1 - \exp\left(-\frac{\gamma^2\mininventory}
    {2-\gamma}\right)\right)
	\displaystyle\sum
	_{\type_t\in\typespace_t}
	\displaystyle\sum
	_{\assortment \in \assortmentspace}\typedistribution_t(\type_t)
	\curreward\curchoice\curalloctype^*
\end{align*}
where 
the third equality holds by \Cref{prop:assortment polytope}; 
and 
the last inequality holds since $\indica_{i, t}^{(3)}$ is mutually independent from
$\reward_i^{\type_t}$, $\indica_{i, t}^{(1)}$,  
$\indica_{i, t}^{(2)}$ and $\hat\assortment_t$,
and the fact that by our tail bound in \eqref{eq:small infeasible probability} we have 
$$\expect{\indica_{i, t}^{(3)}}\geq 1 - \exp\left(-\frac{\gamma^2\mininventory}
    {2-\gamma}\right)$$
Thus, \Cref{alg:large inventory} is at least 
$\left.(1-\gamma)\left(1 - \exp\left(-\frac{\gamma^2\mininventory}
    {2-\gamma}\right)\right)\right.$-competitive
    against the Bayesian expected LP benchmark for any $\gamma\in[0,1]$.
Finally, setting $\gamma =\gammaopt$
finishes the proof.
\hfill\halmos
\end{proof}
}
\section{Omitted Technical Details of \Cref{sec:small inventory} (Small Inventory)}
\label{apx:alg 2 proof}
In this section, we provide the detailed analysis of the competitive ratio of \Cref{alg:SB}, stated in \Cref{thm:stochastic-finite-competitive-ratio}. The analysis consists of two major parts, as mentioned in the sketch of the proof of  \Cref{thm:stochastic-finite-competitive-ratio}. We first compare \Cref{alg:SB} with the simple dynamic programming described in \Cref{sec:stochastic-finite-simple-DP} and show the total expected revenue of \Cref{alg:SB} due to rentals of product $i$ is at least $\inventory_i\valueDPsimple_{i,1}$ 
(\texttt{Part~(\rom{1}}), Appendix~\ref{apx:part1}). 
We then compare this simple dynamic programming with the expected linear programming benchmark and show for each product $i$, $\inventory_i\valueDPsimple_{i,1}$ is at least $1/2$ of the contribution of product $i$ to the optimal objective value of $\EAR{\typedistributionsequence}$ 
(\texttt{Part~(\rom{2}}), Appendix~\ref{apx:part2}). Combining the two parts finishes the proof of \Cref{thm:stochastic-finite-competitive-ratio}.
\subsection{{Part~(\rom{1}}) of the Proof: Comparing with the Optimistic DP} 
\label{apx:part1}
In this part, we use vector variable $\mathbf{J}=(J_{1},\ldots,J_{T})$ to track the state of inventory of the fixed product $i$, where $J_{t}$ is the number of copies of product $i$ that will return to inventory at time $t$. Note that at each time $t$, a \emph{possible state} $\mathbf{J}$ has the form
\begin{equation*}
\mathbf{J}=(0,\ldots,0,J_{t},J_{t+1},\ldots,J_T)~,
\end{equation*}
where $\sum_{\tau=t}^{T}J_{t}=\inventory_i$. Now suppose $\algValue_{i,t}(\mathbf{J})$ denotes the per-unit revenue of product $i$ generated by \Cref{alg:SB} from time $t$ to $T$, when algorithm starts from initial state $\mathbf{J}$ at time $t$.\footnote{Equivalently, $\algValue_{i,t}(\mathbf{J})$ is equal to the total revenue of product $i$ generated in $[t:T]$ divided by $J_t$,  which is the number of on-hand units of product $i$ at the starting time $t$.} As a convention, let $\algValue_{i,T+1}(\mathbf{J})=0$. We prove the following stronger lemma, which also shows total revenue generated from product $i$ is no less than $\inventory_i\valueDPsimple_{i,1}$.
The intuition behind \Cref{lem:part-one-simple}
is as follows:
$\valueDPsimple_{i,t}$
is computed by an \emph{optimistic DP}
that 
``imagines'' the deficiency in inventory
is replenished every period,
and 
the expected revenue-to-go is a concave 
function of inventory level
--
i.e., 
the higher the inventory level, 
the lower the per-unit expected 
revenue-to-go.
\begin{restatable}{lemma}{optimisticDP}
\label{lem:part-one-simple}
For every $t\in[\totaltime]$ and every possible inventory state $\mathbf{J}$ at time $t$, $\algValue_{i,t}(\mathbf{J})\geq \valueDPsimple_{i,t}$.
\end{restatable}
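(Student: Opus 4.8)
The plan is to prove \Cref{lem:part-one-simple} by backward induction on $t$, but against a slightly strengthened hypothesis that also credits the units currently out on rental. Writing $V_{i,t}(\mathbf{J}) := J_t\,\algValue_{i,t}(\mathbf{J})$ for the (un-normalized) expected revenue collected from product $i$ over $[t:T]$ when the algorithm starts in state $\mathbf{J}=(0,\dots,0,J_t,\dots,J_T)$, I would show
\begin{equation*}
V_{i,t}(\mathbf{J}) \;\geq\; \sum\nolimits_{\tau=t}^{T} J_\tau\,\valueDPsimple_{i,\tau}.
\end{equation*}
This strengthening implies the lemma: dividing by $J_t$ gives $\algValue_{i,t}(\mathbf{J})\ge \tfrac{1}{J_t}\sum_{\tau\ge t}J_\tau\valueDPsimple_{i,\tau}\ge \valueDPsimple_{i,t}$, the last step because the out-on-rental terms $J_\tau\valueDPsimple_{i,\tau}$ with $\tau>t$ are nonnegative, while the case $t=1$, $\mathbf{J}=(\inventory_i,0,\dots,0)$ also yields the total-revenue statement $V_{i,1}\ge \inventory_i\valueDPsimple_{i,1}$ quoted in the proof sketch. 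The base case $t=T+1$ is immediate, since both sides vanish.

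The engine of the induction is a simplification of the optimistic Bellman equation. Rearranging the $\max$ in \eqref{eq:bellman-equation-finite} and using $\curprice=\valueDPsimple_{i,t+1}-\sum_d \durationpdf_i^{\type_t}(d)\valueDPsimple_{i,t+d}$, one checks that each summand satisfies $\max\{\cdots\}=\valueDPsimple_{i,t+1}+\tfrac{\curchoice}{\inventory_i}(\curreward-\curprice)^{+}$, so that
\begin{equation*}
\valueDPsimple_{i,t}-\valueDPsimple_{i,t+1} \;=\; \frac{1}{\inventory_i}\sum\nolimits_{\type_t\in\typespace_t}\sum\nolimits_{\assortment\in\assortmentspace}\curexantetype\,\curchoice\,(\curreward-\curprice)^{+}.
\end{equation*}
On the algorithm side I would condition on the realized type $\type_t$ and the sampled set $\hat\assortment=\assortment$, splitting into the event that $i$ is discarded or not offered versus the event that it is accepted (i.e.\ $\curreward\ge\curprice$ and a unit is on hand). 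Sub-assortment sampling (\Cref{prop:assortment polytope}) guarantees that, conditioned on $i$ being kept, the consumer selects $i$ with probability exactly $\curchoice=\choice^{\type_t}(\hat\assortment,i)$ --- matching the DP --- and weak substitutability makes this well-posed since $\bar\assortment\subseteq\hat\assortment$ can only raise $i$'s choice probability before the correction. Applying the induction hypothesis to the two next-period states $\mathbf{J}^{(0)}$ (no rental; on-hand becomes $J_t+J_{t+1}$) and $\mathbf{J}^{(d)}$ (rental of duration $d$, one unit leaving now and returning at $t+d$), the per-realization revenue is at least $A+\indicator{i\in\assortment}\indicator{J_t\ge 1}\,\curchoice(\curreward-\curprice)^{+}$, where $A:=J_t\,\valueDPsimple_{i,t+1}+\sum_{\tau\ge t+1}J_\tau\valueDPsimple_{i,\tau}$ is the common baseline surviving both transitions.

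Averaging over $(\type_t,\assortment)$ with the weights $\curexantetype$ (which sum to one, leaving $A$ untouched) and comparing with the target $\sum_{\tau\ge t}J_\tau\valueDPsimple_{i,\tau}=A+J_t(\valueDPsimple_{i,t}-\valueDPsimple_{i,t+1})$, the whole step collapses to the scalar inequality
\begin{equation*}
\indicator{J_t\ge 1}\sum\nolimits_{\type_t,\assortment}\curexantetype\,\curchoice\,(\curreward-\curprice)^{+} \;\ge\; \frac{J_t}{\inventory_i}\sum\nolimits_{\type_t,\assortment}\curexantetype\,\curchoice\,(\curreward-\curprice)^{+},
\end{equation*}
which holds because $0\le J_t\le \inventory_i$. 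This last step is exactly where the concavity/monotonicity remarked in the text enters: the factor $1/\inventory_i$ is the replenished DP's per-unit rate for harvesting the profit $\curchoice(\curreward-\curprice)^{+}$, whereas the actual algorithm, holding at most $\inventory_i$ units, harvests the same profit on its available units at a weakly higher per-unit rate.

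The main obstacle, and the part I would be most careful about, is the bookkeeping of the strengthened hypothesis: crediting each out-on-rental unit at the rate $\valueDPsimple_{i,\tau}$ of its return time $\tau$, and verifying that the transitions $\mathbf{J}\mapsto\mathbf{J}^{(0)}$ and $\mathbf{J}\mapsto\mathbf{J}^{(d)}$ move the masses $J_\tau$ so that the telescoping in $A$ closes --- in the $\mathbf{J}^{(d)}$ branch the loss of $\valueDPsimple_{i,t+1}$ from depleting one on-hand unit is recovered by the $+\valueDPsimple_{i,t+d}$ credit of the returning unit, which is precisely what pairs with $-\valueDPsimple_{i,t+1}+\sum_d\durationpdf_i^{\type_t}(d)\valueDPsimple_{i,t+d}=-\curprice$ inside the threshold. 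A secondary point needing care is that the accept decision of \Cref{alg:SB} must coincide with the maximizer of \eqref{eq:bellman-equation-finite}, so the algorithm collects the nonnegative gain $(\curreward-\curprice)^{+}$ exactly when the DP does; this alignment, together with sub-assortment sampling pinning the selection probability at $\curchoice$, is what reduces the induction to the transparent inequality $J_t\le \inventory_i$.
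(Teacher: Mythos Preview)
Your argument is correct. Both you and the paper run the same backward induction and ultimately reduce the inductive step to the single inequality $J_t\le \inventory_i$; the DP identity $\valueDPsimple_{i,t}-\valueDPsimple_{i,t+1}=\tfrac{1}{\inventory_i}\sum_{\type_t,\assortment}\curexantetype\,\curchoice\,(\curreward-\curprice)^{+}$ that you distill is exactly what the paper uses after simplifying the $\max$ in \eqref{eq:bellman-equation-finite}.

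Where you genuinely diverge is in the bookkeeping. The paper inducts directly on the per-unit quantity $\algValue_{i,t}(\mathbf J)\ge \valueDPsimple_{i,t}$ and, to write the one-step recursion for $\algValue$, decomposes the post-rental future into a term $\algValue_{i,t+d}(\tilde{\mathbf J}^{(d,d)})$ for the rented unit and a term $(J_t-1)\,\algValue_{i,t+1}(\tilde{\mathbf J}^{(1,d)})$ for the others; it then applies the hypothesis to each piece and shows the resulting lower bound is non-increasing in $J_t$. Your strengthened hypothesis $V_{i,t}(\mathbf J)\ge \sum_{\tau\ge t}J_\tau\,\valueDPsimple_{i,\tau}$ --- crediting each out-on-rental unit at its return time --- sidesteps that decomposition entirely: the two transitions $\mathbf J\!\mapsto\!\mathbf J^{(0)}$ and $\mathbf J\!\mapsto\!\mathbf J^{(d)}$ push a single one-step recursion on the \emph{total} revenue, and the strengthened target telescopes cleanly to $A+\tfrac{J_t}{\inventory_i}B$. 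What this buys you is a more transparent accounting (no need to attribute future revenue to individual units, and the $J_t=0$ states are handled automatically by the strengthened hypothesis even though $\algValue$ is undefined there). A small difference in how you invoke post-processing: you use the exact pinning $\expect{\phi^{\type_t}(\tilde\assortment,i)}=\curchoice$ from \Cref{prop:assortment polytope}, whereas the paper first argues with $\phi^{\type_t}(\bar\assortment,i)\ge\curchoice$ and then observes that replacing $\bar\assortment$ by any $\tilde\assortment\subseteq\bar\assortment$ only helps in the accept branch; both are valid, and yours is the tighter, more direct route.
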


\begin{proof}{\emph{Proof.}}
{To simplify the proof, we first assume that the algorithm does not perform the sub-assortment sampling as the post-processing step (and hence $\tilde{\assortment}_t$ is the same as $\bar{\assortment}_t$). We finish the proof first under this simplifying assumption. Then, we show why the proof still remains intact when we replace the assortment set $\bar{\assortment}_t$ in the proof with $\tilde{\assortment}_t\subseteq \bar{\assortment}_t$, which is the relevant case when use the sub-assortment sampling procedure in the post-processing step of \Cref{alg:SB}.}

The proof is based on backward induction on $t$. For the base of induction, $\algValue_{i,T+1}(\mathbf{J})=\valueDPsimple_{i,T+1}=0$. Now suppose for any time in $[t+1:T]$ and possible state $\mathbf{J}$ the statement of lemma holds. We can simply write the following update equation for $\algValue_{i,t}(\mathbf{J})$ (following exactly the same logic as the update rule of the dynamic programming in \cref{eq:bellman-equation-finite}, but considering that only $J_t$ products are on-hand):
\begin{align}
		\label{eq:bellman-alg}
		\begin{split}
			&\algValue_{i,t}(\mathbf{J})= \displaystyle\sum\nolimits_{\type_t \in \typespace_t}
		\displaystyle\sum\nolimits_{\assortment \in \assortmentspace}
		\curexantetype \Bigg(\indicator{\curreward<\curprice}\algValue_{i, t + 1}(\mathbf{J'})\\
		&+\indicator{\curreward\geq\curprice}
		\Bigg(\left(1 - 
		\choice^{\type_t}\left(
		\bar \assortment, i
		\right)
		\right)\algValue_{i, t + 1}(\mathbf{J'}) \\
		&
		+ \frac{\choice^{\type_t}\left(
		\bar \assortment, i
		\right)}{J_t}\sum\nolimits_d\durationpdf_i^{\type_t}(d)\left(\curreward+\expect{\algValue_{i, t + d}(\mathbf{\tilde{J}}^{(d,d)})+(J_t-1)\algValue_{i, t + 1}(\mathbf{\tilde{J}}^{(1,d)})}\right)\Bigg)\Bigg)
	\end{split}
	\end{align}
where $\bar\assortment\subseteq \assortment$ is the assortment 
offered in \Cref{alg:SB} for consumer $t$ after the discarding step,
$\mathbf{J'}\triangleq(0,\ldots,0,0,J_t+J_{t+1},J_{t+2},\ldots,J_T)$ is the next state at time $t+1$ if no rental happens at time $t$,  and $\mathbf{\tilde{J}}^{(\tau,d)}$ is the (randomized) state of \Cref{alg:SB} at time $t+\tau$ if a rental happens at time $t$ with rental duration $d$ (where the randomness in this state comes from all type realizations and consumer choices during future times $[t+1:t+d-1]$). Now, By applying our inductive hypotheses to \cref{eq:bellman-alg} and the weak-substitution (i.e., $\choice(S, i) \leq \choice(\bar\assortment, i)$ for all $i\in \bar\assortment$), we have: 
\begin{align}
\label{eq:bellman-alg2}
		\begin{split}
			&\algValue_{i,t}(\mathbf{J})\geq \displaystyle\sum\nolimits_{\type_t \in \typespace_t}
		\displaystyle\sum\nolimits_{\assortment \in \assortmentspace}
		\curexantetype \Bigg(\indicator{\curreward<\curprice}\valueDPsimple_{i,t+1}\\
& 
		+\indicator{\curreward\geq\curprice}\left(\left(1 - 
		\choice^{\type_t}\left(
		\bar \assortment, i
		\right)
		\right)\valueDPsimple_{i,t+1} + \frac{\choice^{\type_t}\left(
		\bar \assortment, i
		\right)}{J_t}\sum\nolimits_d\durationpdf_i^{\type_t}(d)\left(\curreward+\valueDPsimple_{i,t+d}+(J_t-1)\valueDPsimple_{i,t+1}\right)\right)\Bigg)\\
		&\geq \displaystyle\sum\nolimits_{\type_t \in \typespace_t}
		\displaystyle\sum\nolimits_{\assortment \in \assortmentspace}
		\curexantetype \Bigg(\indicator{\curreward<\curprice}\valueDPsimple_{i,t+1}\\
& 
		+\indicator{\curreward\geq\curprice}\left(\left(1 - 
		\choice^{\type_t}\left(
		 \assortment, i
		\right)
		\right)\valueDPsimple_{i,t+1} + \frac{\choice^{\type_t}\left(
		 \assortment, i
		\right)}{J_t}\sum\nolimits_d\durationpdf_i^{\type_t}(d)\left(\curreward+\valueDPsimple_{i,t+d}+(J_t-1)\valueDPsimple_{i,t+1}\right)\right)\Bigg)
	\end{split}
	\end{align}
Note also that the RHS of above inequality is non-increasing as a function of $J_t$, simply because if $\frac{1}{J_t}$  increases by additive $\epsilon>0$ then RHS increases by 
\begin{align*}
&\epsilon\cdot\indicator{\curreward\geq\curprice}\cdot \curchoice\cdot\left(\sum\nolimits_d\durationpdf_i^{\type_t}(d)(\curreward+\valueDPsimple_{i,t+d}-\valueDPsimple_{i,t+1})\right)\\
=&\epsilon\cdot\indicator{\curreward\geq\curprice}\cdot \curchoice\cdot\left(\curreward-\curprice\right)\geq 0
\end{align*}
Therefore, as $J_t\leq \inventory_i$, we have
\begin{align*}
		\begin{split}
			\algValue_{i,t}(\mathbf{J})&\geq \displaystyle\sum\nolimits_{\type_t \in \typespace_t}
		\displaystyle\sum\nolimits_{\assortment \in \assortmentspace}
		\curexantetype \Bigg(\indicator{\curreward<\curprice}\valueDPsimple_{i,t+1}+\\
& 
		+\indicator{\curreward\geq\curprice}\left(\left(1 - \curchoice\right)\valueDPsimple_{i,t+1} + \frac{\curchoice}{\inventory_i}\sum\nolimits_d\durationpdf_i^{\type_t}(d)\left(\curreward+\valueDPsimple_{i,t+d}+(\inventory_i-1)\valueDPsimple_{i,t+1}\right)\right)\Bigg)\\
		&=\displaystyle\sum\nolimits_{\type_t \in \typespace_t}
		\displaystyle\sum\nolimits_{\assortment \in \assortmentspace}
		\curexantetype \\
& 
		\times\max\left\{\valueDPsimple_{i, t + 1}~,~\left(1 - \curchoice\right)\valueDPsimple_{i, t + 1}+ \curchoice\left(\frac{1}{\inventory_i}\sum\nolimits_d\durationpdf_i^{\type_t}(d)\left(\curreward+\valueDPsimple_{i, t + d}\right)+\frac{\inventory_i-1}{\inventory_i}\valueDPsimple_{i, t + 1}\right )\right\}\\
		&=\valueDPsimple_{i, t }\qquad\qquad\qquad\qquad\qquad\qquad \textrm{(holds because of Bellman update in \cref{eq:bellman-equation-finite})}
	\end{split}
	\end{align*}
	{
Now suppose we use the sub-assortment sampling (Procedure~\ref{alg:sample assortment}) in the post-processing step. Performing this post-processing step does not decrease
the revenue guarantee
of discarding with ``per-unit revenue thresholds'' $\curprice$
in \Cref{alg:SB}. In fact, showing any subset of $\bar{S}$ (for example, the output $\tilde{S}\subseteq \bar{S}$ of the sub-assortment sampling procedure at each time) can only increase the per-unit revenue-to-go of the policy that follows the thresholds of the optimistic DP. To be more precise, look at the inductive step in the above proof at some time $t$. When $\curreward\geq\curprice$, the right-hand-side of the first inequality in \Cref{eq:bellman-alg2} can only \emph{increase} if we replace $\bar{S}$ with $\tilde{S}\subseteq \bar{S}$, simply because $\choice^{\type_t}(\bar{S},i)<\choice^{\type_t}(\tilde{S},i)$ due to weak-substitution and the coefficient of $\choice^{\type_t}(\bar{S},i)$ in the right-hand-side of the first inequality in \Cref{eq:bellman-alg2} is non-negative when $\curreward\geq\curprice$. The rest of the induction follows similar to the simplified case, which finishes the final proof.
\hfill\Halmos}
\end{proof}

\subsection{{Part~(\rom{2}}): Comparing the Optimistic DP with the Expected-LP}
\label{apx:part2}
In order to prove this part, we use the connection between DP of \Cref{sec:stochastic-finite-simple-DP} and a related LP. This connections leads us to apply simple duality arguments to find a lower-bound on the ratio of $\inventory_i\valueDPsimple_{i,1}$ and the contribution of product $i$ to the optimal objective value of $\EAR{\typedistributionsequence}$.

Note that scaling $\{\curreward\}$ does not change the ratio. Hence, 
without loss of generality, we normalize
the contribution of product $i$ to the optimal objective value of $\EAR{\typedistributionsequence}$,
i.e.,
\begin{equation*}
\displaystyle\sum\nolimits_{t=1}^{\totaltime}
			\displaystyle\sum\nolimits_{\type_t \in \typespace_t}
			\displaystyle\sum\nolimits_{\assortment \in \assortmentspace}\curexantetype\curchoice
			\curreward
			 =1
\end{equation*}
Given the above observation and the Bellman update in \eqref{eq:bellman-equation-finite}, consider the following primal linear program that gives a lower-bound on worst-case value of the ratio:
\begin{align}
		\begin{array}{llll}
			{\min
			\limits_{\mathbf \reward, \boldsymbol\valueDPsimple \geq \mathbf 0}}~~&
				\inventory_i\valueDPsimple_{i,1}~~~\text{s.t.}& \\[1em]
							&
			\valueDPsimple_{i,t}\geq	\displaystyle\sum\nolimits_{\type_t \in \typespace_t}
		\displaystyle\sum\nolimits_{\assortment \in \assortmentspace}
		\curexantetype~~~~~~~~~~~~~~~~~~~~~~~~~~\forall t \\
& 
		\times\max\left\{\valueDPsimple_{i, t + 1}~,~\left(1 - \curchoice\right)\valueDPsimple_{i, t + 1}+ \curchoice\left(\frac{1}{\inventory_i}\sum\nolimits_d\durationpdf_i^{\type_t}(d)\left(\curreward+\valueDPsimple_{i, t + d}\right)+\frac{\inventory_i-1}{\inventory_i}\valueDPsimple_{i, t + 1}\right )\right\}
			\\[1em]
			&
			\displaystyle\sum\nolimits_{t=1}^{\totaltime}
			\displaystyle\sum\nolimits_{\type_t \in \typespace_t}
			\displaystyle\sum\nolimits_{\assortment \in \assortmentspace}
			\curexantetype\curchoice
			\curreward
			\geq1& &
		\end{array}
	\end{align}
in which we allow both $\boldsymbol{\valueDPsimple}=\{\valueDPsimple_{i,t}\}_{t=1}^T$ and $\mathbf{\reward}=\{\curreward\}_{t\in[\totaltime],\type_t\in\typespace_t}$ to be variables (and this is an important feature of this technique). We first relax and simplify this program by switching the outer summation and max operator in the RHS of the first constraint (which only makes RHS smaller, hence a relaxation). Note also that:
\begin{align*}
&\displaystyle\sum\nolimits_{\type_t \in \typespace_t}
		\displaystyle\sum\nolimits_{\assortment \in \assortmentspace}\curexantetype\valueDPsimple_{i, t + 1}=\valueDPsimple_{i, t + 1}\triangleq A,\\
&\textrm{and}\\
	&\displaystyle\sum\nolimits_{\type_t \in \typespace_t}
		\displaystyle\sum\nolimits_{\assortment \in \assortmentspace}\curexantetype\left(\left(1 - \curchoice\right)\valueDPsimple_{i, t + 1}+ \curchoice\left(\frac{1}{\inventory_i}\sum\nolimits_d\durationpdf_i^{\type_t}(d)\left(\curreward+\valueDPsimple_{i, t + d}\right)+\frac{\inventory_i-1}{\inventory_i}\valueDPsimple_{i, t + 1}\right )\right)	\\
		&=\frac{1}{\inventory_i}\displaystyle\sum\nolimits_{\type_t \in \typespace_t}
		\displaystyle\sum\nolimits_{\assortment \in \assortmentspace}\curexantetype\curchoice\left(\sum\nolimits_d\durationpdf_i^{\type_t}(d)\left(\curreward+\valueDPsimple_{i, t + d}\right)-\valueDPsimple_{i, t + 1}\right)+\valueDPsimple_{i, t + 1}\triangleq B
\end{align*}
Now, by replacing the new constraint $\valueDPsimple_{i, t}\geq \max\{A,B\}$ with two constraints $\valueDPsimple_{i, t}\geq A$ and $\valueDPsimple_{i, t}\geq B$,  we get the following final primal linear programming:
\begin{align}
\label{eq:stochastic-finite-primal}\tag{\textbf{Primal-LP1}}
		\begin{array}{llll}
			{\min
			\limits_{\mathbf \reward, \boldsymbol\valueDPsimple \geq \mathbf 0}}~~&
				\inventory_i\valueDPsimple_{i,1}~~~~~\text{s.t.}& & \\[1em]
			&\valueDPsimple_{i,t}\geq	\frac{1}{\inventory_i}\displaystyle\sum\nolimits_{\type_t \in \typespace_t,\assortment \in \assortmentspace}
		\curexantetype\curchoice\left(\sum\nolimits_d\durationpdf_i^{\type_t}(d)\left(\curreward+\valueDPsimple_{i, t + d}\right)-\valueDPsimple_{i, t + 1}\right)+\valueDPsimple_{i, t + 1} & t\in[\totaltime]&\langle\alpha_t\rangle\\[1em] 
		&\valueDPsimple_{i,t}\geq	\valueDPsimple_{i,t+1}&t\in[\totaltime] &\langle\beta_t\rangle\\[1em]
			&\displaystyle\sum\nolimits_{t=1}^{\totaltime}
			\displaystyle\sum\nolimits_{\type_t \in \typespace_t}
			\displaystyle\sum\nolimits_{\assortment \in \assortmentspace}
			\curexantetype\curchoice
			\curreward
			\geq1& &\langle\mu \rangle
		\end{array}
	\end{align}
 Now, we write its dual program with dual variable $\{\alpha_t\}$, $\{\beta_t\}$ and $\mu$ as follows:

\begin{align}
\label{eq:stochastic-finite-dual}\tag{\textbf{Dual-LP1}}
		\begin{array}{llll}
			&{\max\limits_{\boldsymbol\alpha, \boldsymbol\beta, \mu\geq \mathbf 0}}~~
				\mu~~~~~\text{s.t.}& &\\[1em]
				&\mu\left(\displaystyle\sum\nolimits_{t=1}^{\totaltime}
			\displaystyle\sum\nolimits_{\assortment \in \assortmentspace}
			\curchoice
			\curexantetype\right)\leq \frac{1}{\inventory_i}\left(\displaystyle\sum\nolimits_{t=1}^{\totaltime}
			\displaystyle\sum\nolimits_{\assortment \in \assortmentspace}
			\curchoice
			\curexantetype\alpha_t\right)& \type_t\in\typespace_t &\langle\curreward \rangle\\[1em]
			&\alpha_t-\alpha_{t-1}+\beta_t-\beta_{t-1}\leq \frac{1}{\inventory_i} \displaystyle\sum\nolimits_{\type_{t-1} \in \typespace_{t-1},\assortment \in \assortmentspace}X_{S,t-1,\type_{t-1}}\choice^{\type_{t-1}}(S,i)\alpha_{t-1} & &\\
			&\qquad\qquad\qquad\qquad\qquad -\frac{1}{\inventory_i} \displaystyle\sum\nolimits_{d< t}   \displaystyle\sum\nolimits_{\type_{t-d} \in \typespace_{t-d},\assortment \in \assortmentspace}g_i^{\type_{t-d}}X_{S,t-d,\type_{t-d}}\choice^{\type_{t-d}}(S,i)\alpha_{t-d} & t\in[2:T] & \langle\valueDPsimple_{i,2:T} \rangle\\[1em]
			&\alpha_1+\beta_1\leq \inventory_i& & \langle\valueDPsimple_{i,1} \rangle
		\end{array}
\end{align}
Now, we try to guess a feasible solution for the dual program to obtain our desired lower-bound on the optimal primal objective. To this end, let $\forall t: \alpha_t=\mu\inventory_i$ (therefore, first set of constraints will be satisfied automatically) and let all other constraints to be tight. In particular, $\beta_1=\inventory_i-\alpha_1=\inventory_i(1-\mu)$ and for all $t\in [2:T]$:
\begin{align*}
\beta_t&=\beta_{t-1}+\frac{1}{\inventory_i} \displaystyle\sum\nolimits_{\type_{t-1} \in \typespace_{t-1},\assortment \in \assortmentspace}X_{S,t-1,\type_{t-1}}\choice^{\type_{t-1}}(S,i)\alpha_{t-1} &\\
			 &\qquad\quad-\frac{1}{\inventory_i} \displaystyle\sum\nolimits_{d< t}   \displaystyle\sum\nolimits_{\type_{t-d} \in \typespace_{t-d},\assortment \in \assortmentspace}g_i^{\type_{t-d}}X_{S,t-d,\type_{t-d}}\choice^{\type_{t-d}}(S,i)\alpha_{t-d} & \\
			&=\beta_{t-1}+\mu \displaystyle\sum\nolimits_{\type \in \typespace_{t-1},\assortment \in \assortmentspace}X_{S,t-1,\type}\choice^{\type}(S,i)&\\
			 &\qquad\quad-\mu\displaystyle\sum\nolimits_{d< t}   \displaystyle\sum\nolimits_{\type\in \typespace_{t-d},\assortment \in \assortmentspace}g_i^{\type}X_{S,t-d,\type}\choice^{\type}(S,i)& \\
			 &\overset{(1)}{=}\beta_1-\mu \displaystyle\sum\nolimits_{\tpre\leq t}\displaystyle\sum\nolimits_{\type \in \typespace_{t},\assortment \in \assortmentspace}X_{S,\tpre,\type}\choice^{\type}(S,i)\sum_{d\geq t-\tpre}\durationpdf_i^{\type}(d)&\\
			 &\overset{(2)}{\geq }\inventory_i(1-\mu)-\mu\inventory_i=\inventory_i(1-2\mu)~,
\end{align*}
where (1) can be obtained by changing the order of summations and rearranging the terms, and (2) holds because of inventory feasibility of $\{\curalloctypeopt\}$ in expectation in $\EAR{\typedistributionsequence}$. Now, setting $\mu=1/2$ guarantees dual feasibility as $\forall t: \alpha_t\geq 0,\beta_t\geq 0$, and other constraints of the dual are satisfied by construction. By applying weak duality, the desired ratio is at least $\mu=1/2$, which finishes the proof.

\section{Omitted Details for the Special Case of Infinite Rental Durations}
\label{apx:sand barrier}

\label{sec:stochastic-infinite}
In this section, we consider the setting where the rental times are infinite, i.e., once products are purchased by consumers, they never return to the platform. In this setting, the inventory level of each product is weakly decreasing over time. This enables us to develop inventory-dependent discarding policies in polynomial time using dynamic programming. These discarding policies are more refined than the per-unit revenue thresholding policy discussed in \Cref{sec:stochastic-finite-simple-DP} and achieve a better competitive ratio  (i.e., $\infiniteboundtext$), which is near-optimal when inventories are large.

\infiniteDP*

The proof of \Cref{thm:SB infinite DP} follows a similar structure
as for \Cref{thm:stochastic-finite-competitive-ratio}: 
we first use induction to show 
$\{\DP_{i, t}^I\}$ are lower-bounds the for future revenue-to-go of \Cref{alg:SB infinite DP} (\texttt{Part~(\rom{1}})),
then introduce a linear program to show $\DP_{i, 1}^{\inventory_i}$
is at least $1 - \sfrac{1}{\sqrt{\inventory_i + 3}}$ of the contribution
of product $i$ to the objective value of $\EAR{\typedistributionsequence}$
(\texttt{Part~(\rom{2}})).

\begin{proof}{\emph{Proof of \Cref{thm:SB infinite DP}.}}
The running time is proved by \Cref{prop:EAR running time}, and the fact that the simple DP in \Cref{sec:stochastic-finite-simple-DP} can be solved in polynomial time. The proof of competitive has two parts: 
\paragraph{\texttt{Part~(\rom{1})}.}
	Here we show for any product $i$, time $t$ and inventory $I$,
	the future expected revenue in \Cref{alg:SB infinite DP}
	from product $i$ with current inventory $I$ at time $t$ is
	at least $\DP_{i, t}^I$ by induction on time $t$ from $T + 1$ to 1.
	The base case $t = T + 1$ is satisfied due to our boundary assumption
	that $\DP_{i, T + 1}^I = 0$. 
	Now, suppose from time $t + 1$ to $T + 1$, the induction hypothesis holds.
	Consider the algorithm at time $t$.
	An assortment $\sampleassortment$ is sampled with probability $\curexantetype$.
	Let $\bar\assortment \subseteq \sampleassortment$ be the assortment 
	offered to consumer at time $t$ after discarding.
	For any product $i$ with inventory $I$,
	the algorithm discards it if
	$\DP_{i, t + 1}^{I - 1} + \curreward < \DP_{i, t + 1}^{I}$.
	By induction hypothesis at time $t + 1$,
	it guarantees to gain future expected utility 
	at least $\max\{
		\DP_{i, t + 1}^I, 
		\choice^{\type_t}(\bar\assortment, i)
		(\curreward + \DP_{i, t + 1}^{I - 1})
		+
		(1-\choice^{\type_t}(\bar\assortment, i))
		\DP_{i, t + 1}^I, 
	\}$.
	Since the choice model satisfies substitutability (\Cref{asp:substitutability}),
	$\choice^{\type_t}(\sampleassortment, i) \leq 
	\choice^{\type_t}(\bar\assortment, i)$ for all
	$i \in \bar\assortment$.
	Therefore, by similar calculations as in the proof of \Cref{lem:part-one-simple}, the induction hypothesis holds at time $t$, i.e.,
	$\DP_{i, t}^I$ lower-bounds the future expected revenue from product $i$ with inventory $I$ 
	at time $t$.
	Thus, the expected revenue of \Cref{alg:SB infinite DP} 
	from product $i$ from time $1$ is at least $\DP_{i, 1}^{\inventory_i}$.

\paragraph{ \texttt{Part~(\rom{2})}.}
Here we show $\DP_{i, 1}^{\inventory_i}$ is at least $\left(1 - \tfrac{1}{\sqrt{\inventory_i + 3}}\right)$ fraction of the contribution of product $i$ to the objective value of $\EAR{\typedistributionsequence}$.
Note that scaling $\{\curreward\}$ does not change the ratio. Hence, 
without loss of generality, we normalize
the contribution of product $i$ to the optimal objective value of $\EAR{\typedistributionsequence}$,
i.e.,
\begin{equation*}
\displaystyle\sum\nolimits_{t=1}^{\totaltime}
			\displaystyle\sum\nolimits_{\type_t \in \typespace_t}
			\displaystyle\sum\nolimits_{\assortment \in \assortmentspace}\curexantetype\curchoice
			\curreward
			 =1
\end{equation*}
Given the above observation and the Bellman update in \eqref{eq:bellman equation infinite}, consider the following primal linear program that gives a lower-bound on worst-case value of the ratio:
	\begin{align}
		\label{eq:infinite DP program}
		\begin{array}{llll}
			{\min
			\limits_{\mathbf \reward, \mathbf \DP \geq \mathbf 0}}~~&
				\DP_{i, 1}^{\inventory_i}
				&\text{s.t.}& \\[1em]
							&
				\DP_{i, t}^I \geq
				\displaystyle\sum\nolimits_{\type_t \in \typespace_t}
				\displaystyle\sum\nolimits_{\assortment \in \assortmentspace}
				\curexantetype  
				\max\Big\{\DP_{i, t + 1}^I, & & \\
																   &
				\qquad\qquad
				\curchoice
				(\curreward + 
				\DP_{i, t + 1}^{I - 1})
				+
				(1 - \curchoice)\DP_{i, t + 1}^I
			\Big\}
			& t \in [\totaltime],\ I \in [\inventory_i]&
			\\[1em]
			&
			\displaystyle\sum\nolimits_{t=1}^{\totaltime}
			\displaystyle\sum\nolimits_{\type_t \in \typespace_t}
			\displaystyle\sum\nolimits_{\assortment \in \assortmentspace}
			\curreward
			\curchoice
			\curexantetype
			\geq 1
			& &
		\end{array}
	\end{align}
	where the variables are $\{\curreward\}$ and $\{\DP_{i, t}^I\}$.
	To prove our result, it is sufficient to show that 
	for all inventory feasible $\{\curexantetype\}$ (i.e., 
	$\sum_{t, \type_t, S}\curexantetype\curchoice \leq \inventory_i$),
	the value of program \eqref{eq:infinite DP program} is 
	at least $1 - \tfrac{1}{\sqrt{\inventory_i + 3}}$.
	To see this, we relax the first constraint
	in \eqref{eq:infinite DP program} by 
	switching the outer summation and max operator,
	which provides 
	a relaxed linear program as follows,
	\begin{align*}
		\begin{array}{llll}
			{\min
			\limits_{\mathbf \reward, \mathbf \DP \geq \mathbf 0}}~~&
				\DP_{i, 1}^{\inventory_i}
				&\text{s.t.}& \\[1em]
							&
				\DP_{i, t}^I \geq
				\displaystyle\sum\nolimits_{\type_t \in \typespace_t}
				\displaystyle\sum\nolimits_{\assortment \in \assortmentspace}
				\curexantetype  
				\curchoice
				(\curreward + 
				\DP_{i, t + 1}^{I - 1}) && \\
										& 
				\qquad\qquad+
				\left( 1 - 
					\displaystyle\sum\nolimits_{\type_t \in \typespace_t}
					\displaystyle\sum\nolimits_{\assortment \in \assortmentspace}
					\curexantetype  
					\curchoice
				\right)
				\DP_{i, t + 1}^I
				& t \in [\totaltime],\ I \in [\inventory_i]&
				\langle\alpha_t^I\rangle
				\\[1em]
				&
				\DP_{i, t}^I \geq \DP_{i, t + 1}^I
				& t \in [\totaltime],\ I \in [\inventory_i]& 
				\langle\beta_t^I\rangle
				\\[1em]
				&
				\displaystyle\sum\nolimits_{t=1}^{\totaltime}
				\displaystyle\sum\nolimits_{\type_t \in \typespace_t}
				\displaystyle\sum\nolimits_{\assortment \in \assortmentspace}
				\curreward
				\curchoice
				\curexantetype
				\geq 1
				& & 
				\langle\mu\rangle
			\end{array}
		\end{align*}
		We write its dual program with dual variable $\alpha_t^I$, $\beta_t^I$ and 
		$\gamma$ as follows
	\begin{align*}
		\begin{array}{llll}
			{\max
			\limits_{\boldsymbol{\alpha, \beta},\mu \geq \mathbf 0}}~~&
				\mu
				&\text{s.t.}& \\[1em]
				& \alpha_1^{\inventory_i} + \beta_1^{\inventory_i} \leq 1
				& & \langle\DP_{i, 1}^{\inventory_i}\rangle 
				\\[1em]
				& \alpha_1^{I} + \beta_1^{I} \leq 0
				& I \in [\inventory_i - 1]& \langle\DP_{i, 1}^{I}\rangle 
				\\[1em]
							&
				\alpha_{t + 1}^{I} + \beta_{t + 1}^I \leq
				\displaystyle\sum\nolimits_{\type_t \in \typespace_t}
				\displaystyle\sum\nolimits_{\assortment \in \assortmentspace}
				\curexantetype  
				\curchoice
				\alpha_{t}^{I + 1}
				+
				&& \\
										& 
				\qquad\qquad+
				\left( 1 - 
					\displaystyle\sum\nolimits_{\type_t \in \typespace_t}
					\displaystyle\sum\nolimits_{\assortment \in \assortmentspace}
					\curexantetype  
					\curchoice
				\right)
				\alpha_{t}^I + \beta_{t}^I
				& t \in [\totaltime - 1],\ I \in [\inventory_i]&
				\langle\DP_{i, t+ 1}^I\rangle
				\\[1em]
				&
				\displaystyle\sum\nolimits_{t=1}^{\totaltime}
				\displaystyle\sum\nolimits_{\assortment \in \assortmentspace}
				\curchoice
				\curexantetype 
				\mu & & \\
					   &
				\qquad\qquad
				\leq 
				\displaystyle\sum\nolimits_{t=1}^{\totaltime}
				\displaystyle\sum\nolimits_{\assortment \in \assortmentspace}
				\left(
				\curchoice
				\curexantetype
				\left(
				\displaystyle\sum\nolimits_{I = 1}^{\inventory_i}
		\alpha_t^I \right)\right)
				& \type_t \in \typespace& 
				\langle \curreward\rangle
			\end{array}
		\end{align*}
		By weak duality, any feasible dual solution offers 
		a lower bound for the primal. 
		
Let 
$q_t \triangleq \sum_{\type_t\in \typespace_t}\sum_{\assortment\in\assortmentspace}\curexantetype\curchoice$.
By definition, $\sum_{t=1}^\totaltime q_t \leq \inventory_i$.
Now, suppose we are interested in generating a feasible dual assignment where the all the constraints corresponding
are tight,
i.e., a solution assignment for 
the following
program~\eqref{eq:factor revealing}.
\begin{align}
    \label{eq:factor revealing}
    		\begin{array}{llll}
			{\max
			\limits_{\boldsymbol{\alpha, \beta}, \mu \geq \mathbf 0}}~~&
				\mu
				&\text{s.t.}& \\[1em]
				& \alpha_1^{\inventory_i} + \beta_1^{\inventory_i} = 1
				& &  
				\\[1em]
				& \alpha_1^{I} + \beta_1^{I} = 0
				& I \in [\inventory_i - 1]& 
				\\[1em]
							&
				\alpha_{t + 1}^{I} + \beta_{t + 1}^I =
				q_t\alpha_t^{I + 1} 
				+
				\left(1 - q_t
				\right)
				\alpha_{t}^I + \beta_{t}^I
				& t \in [\totaltime - 1],\ I \in [\inventory_i]&
				\\[1em]
				&
				\mu =
				\displaystyle\sum\nolimits_{I = 1}^{\inventory_i}
		\alpha_t^I 
				& \revcolor{\type} \in \typespace& 
			\end{array}
\end{align}
Notice that \Cref{lem:factor revealing} below
is sufficient to finish the 
proof of \Cref{thm:SB infinite DP}. 

\begin{restatable}{lemma}{sand}
\label{lem:factor revealing}
    For any non-negative sequence $\{q_t\}_{t=1}^\totaltime$
    such that $\sum_{t=1}^\totaltime q_t \leq \inventory_i$, the objective value 
    of program~\eqref{eq:factor revealing} 
    is at least $1-\frac{1}{\sqrt{\inventory_i+3}}$.
\end{restatable}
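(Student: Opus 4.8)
The plan is to prove the lemma by constructing, for every admissible sequence $\{q_t\}$ with $\sum_t q_t \le c_i$, an explicit nonnegative assignment $(\boldsymbol\alpha,\boldsymbol\beta,\mu)$ that satisfies all the equality constraints of program~\eqref{eq:factor revealing} with $\mu = 1 - \frac{1}{\sqrt{c_i+3}}$. Since any solution feasible for~\eqref{eq:factor revealing} is feasible for the relaxed dual and its objective $\mu$ therefore lower-bounds the primal by weak duality, exhibiting one such solution suffices. The guiding idea is to read $\alpha_t^I$ as the ``active'' probability that an inventory/magician process holding $c_i$ units, and depleting one unit per period with probability $q_t$, has exactly $I$ units left and is still committing to serve, while $\beta_t^I$ is the ``retired'' mass; the constraints $\mu = \sum_I \alpha_t^I$ then encode that each period is served with probability exactly $\mu$, i.e., $\mu$-selectability of Alaei's $\gamma$-conservative magician.

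First I would pin down the boundary and reduce the recursion. Setting $\gamma_t^I := \alpha_t^I + \beta_t^I$, the $t=1$ equalities together with nonnegativity force $\alpha_1^{c_i}=\mu$, $\beta_1^{c_i}=1-\mu$, and all other entries zero; the per-period recursion collapses to $\gamma_{t+1}^I = \gamma_t^I + q_t(\alpha_t^{I+1}-\alpha_t^I)$ with the convention $\alpha_t^{c_i+1}=0$. Summing over $I$ telescopes to $\Gamma_{t+1}=\Gamma_t - q_t\,\alpha_t^1$ where $\Gamma_t:=\sum_I \gamma_t^I$, and since $\Gamma_1=1$ this gives the clean identity $\sum_I\beta_t^I = \Gamma_t-\mu = 1-\mu-\sum_{\tau<t}q_\tau\,\alpha_\tau^1$. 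At each step the only remaining freedom is how to distribute the forced total active mass $\mu$ across levels subject to $0\le\alpha_t^I\le\gamma_t^I$; I would fix this by the conservative \emph{top-fill} rule that places active mass at the highest available inventory levels first, which minimizes $\alpha_t^1$ and hence maximizes the preserved mass $\Gamma_{t+1}$.

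With this rule the whole construction becomes a one-dimensional dynamical system, and dual-feasibility reduces to the two monotone requirements $\gamma_t^I\ge 0$ and $\sum_I\beta_t^I\ge 0$, the latter being $\sum_{\tau<t}q_\tau\,\alpha_\tau^1\le 1-\mu$. The crux is therefore a uniform bound on the cumulative bottom-level active mass $\sum_{\tau<t}q_\tau\,\alpha_\tau^1$: under top-fill this quantity is the total probability that the depletion process nearly exhausts its $c_i$ units while still committed to serve, and I would argue by a monotonicity/exchange argument that its supremum over all sequences with $\sum_\tau q_\tau\le c_i$ is attained in the infinitesimal (continuum) limit, where it is governed by a single recurrence in $c_i$ whose fixed point is exactly the magician constant $\gamma_{c_i}$. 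Establishing $\gamma_{c_i}\ge 1-\frac{1}{\sqrt{c_i+3}}$ by induction on $c_i$ --- bounding the relevant binomial/Poisson depletion tail --- is the delicate step and the source of the $\sqrt{\cdot}$, mirroring the analysis of the $\gamma$-conservative magician in \citet{ala-14}.

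I expect the tail estimate of the previous paragraph to be the main obstacle: the set-up (the telescoping identity and the top-fill splitting) is routine bookkeeping, but the bound $1-\frac{1}{\sqrt{c_i+3}}$ must hold \emph{uniformly} over all admissible $\{q_t\}$ and is essentially tight for this family, so it needs the careful extremal reduction and the inductive tail control rather than a crude estimate. Once that bound is in hand, taking $\mu=1-\frac{1}{\sqrt{c_i+3}}$ makes $\gamma_t^I\ge 0$, $\alpha_t^I\ge 0$ and $\beta_t^I\ge 0$ hold simultaneously for all $t$, so the constructed assignment is feasible and certifies objective value $\mu$, completing the proof.
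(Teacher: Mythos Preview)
Your construction is exactly the one the paper uses: your top-fill rule is the sand/barrier process of \citet{ala-14} in disguise (their ``leftmost $\gamma$ fraction'' of sand at positions $0,\dots,c_i-1$ is your active mass placed at the highest inventory levels first), and your variables match via $\alpha_t^I=s_t^{c_i-I}y_t^{c_i-I}$, $\beta_t^I=s_t^{c_i-I}(1-y_t^{c_i-I})$, $\mu=\gamma$. Your telescoping identity $\sum_I\beta_t^I=1-\mu-\sum_{\tau<t}q_\tau\alpha_\tau^1$ is correct, and the reduction to a single scalar inequality is clean.

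The only divergence is in how the crucial tail inequality is discharged. The paper does not carry out the extremal reduction and inductive tail estimate you sketch; instead it directly invokes the sand/barrier lemma of \citet{ala-14}, which gives the closed-form control $\lambda_t\le\gamma\sum_{\tau<t}q_\tau+\tfrac{1-\gamma^{\lambda_t}}{1-\gamma}$ on the barrier position, and then checks that $\gamma=1-\tfrac{1}{\sqrt{c_i+3}}$ together with $\sum_\tau q_\tau\le c_i$ forces $\lambda_{T+1}\le c_i$, which is exactly your condition that the process never overflows. Your route (passing to a continuum/worst-case limit and inducting on $c_i$) would also work and is closer to how Alaei originally derives the constant $\gamma_{c_i}$, but it is more laborious here; the barrier inequality is a tighter one-line shortcut that avoids the extremal reduction entirely. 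Either way, you have correctly isolated that the entire lemma boils down to Alaei's $\gamma$-conservative magician bound.
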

\ifinfiniteapx
To show \Cref{lem:factor revealing},
we use a technical lemma (\Cref{def:sand/barrier}, \Cref{lem:sand/barrier})
developed in
\citet{ala-14} described below.

	\begin{definition}[Sand/Barrier Process, \citealp{ala-14}]
	\label{def:sand/barrier}
	Consider a tape of infinite length with one unit of infinitely divisible sand at position 
	0 and a barrier at position 1.
	A sequence of $q_1, \dots, q_\totaltime$ (with all $q_t \in [0, 1]$)
	and a parameter $\gamma \in (0, 1)$ are given as the input.
	The sand and the barrier are gradually moved to the right in $\totaltime$ rounds.
	At the $t$-th round the following takes place.
	The left most $\gamma$ fraction of the sand on the tape is selected and the
	$q_t$ fraction of this sand is moved one position to the right.
	This can be defined formally as follows.
	Let $s_t^j$ denote the amount of sand
	at postition $j$ at the beginning of round $t$
	and let $y_t^j\in[0, 1]$ denote the fraction of sand
	selected from position $j$ during round $t$.
	$y_t^j$ are chosen such that $\sum_i s_t^jy_t^j = \gamma$
	and such that for some integer $\theta_t$,
	$y_t^j = 1$ for any $j < \theta_t$,
	and $y_t^j = 0$  for any $j > \theta_t$.
	So during round $t$,
	a $q_ty_t^js_t^j$ amount of sand is moved from each position $j$ 
	to position $j + 1$.
	The barrier is moved one position to the right at the end
	of any round when the total amount of sand at the position
	of the barrier is more than $1- \gamma$.
	We will use $\lambda_t$ to denote the position of the barrier at the beginning of round $t$.
\end{definition}

\begin{lemma}[Sand/Barrier Process, \citealp{ala-14}]
	\label{lem:sand/barrier}
	For any $\gamma$ and
	any
	sequence $q_1, \dots, q_n$,
	throughout the sand/barrier process
	the position of the barrier 
	at the beginning of round $t$,
	i.e., $\lambda_t$,
	satisfies
	\begin{align*}
		\lambda_t \leq \sum_{\tpre = 1}^{t - 1}q_{\tpre}\gamma
		+\frac{1 - \gamma^{\lambda_t}}{1 - \gamma}.
	\end{align*}
\end{lemma}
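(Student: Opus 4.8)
The plan is to treat the sand as a sub-probability distribution on the positions $\{0,1,2,\dots\}$ and to control the barrier through two summary statistics. Write $s_t^j$ for the sand at position $j$ at the start of round $t$, set $F_t(\ell)=\sum_{j<\ell}s_t^j$ (mass strictly left of $\ell$) and $\bar{F}_t(\ell)=1-F_t(\ell)$ (mass at positions $\ge\ell$), and read the advance rule of \Cref{def:sand/barrier} as follows: the barrier at position $\lambda$ advances exactly when $\bar{F}(\lambda)$, the total sand that has reached the barrier, exceeds $1-\gamma$; equivalently, at the start of every round $F_t(\lambda_t)\ge\gamma$. Two bookkeeping identities come first. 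Each round moves exactly $q_t\gamma$ units of mass one step to the right (a $q_t$-fraction of the selected mass $\gamma$), so the center of mass at the start of round $t$ equals $\sum_{\tpre=1}^{t-1}q_\tpre\gamma$; and since sand only moves rightward, this same quantity equals $\sum_{\ell\ge1}\bar{F}_t(\ell)$, with each $\bar{F}_t(\ell)$ nondecreasing in $t$.

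With $\Gamma(L):=\sum_{\ell=1}^{L}\gamma^{\ell}$ and $S_t(L):=\sum_{\ell=1}^{L}F_t(\ell)$, a short rearrangement (using $\frac{1-\gamma^{\lambda_t}}{1-\gamma}=\sum_{m=0}^{\lambda_t-1}\gamma^m$, so that $\lambda_t-\frac{1-\gamma^{\lambda_t}}{1-\gamma}=\sum_{\ell=1}^{\lambda_t-1}(1-\gamma^\ell)$) shows it suffices to prove the single prefix bound $S_t(\lambda_t-1)\le\Gamma(\lambda_t-1)$, since $\sum_{\ell\ge1}\bar{F}_t(\ell)\ge\sum_{\ell=1}^{\lambda_t-1}\bar{F}_t(\ell)=(\lambda_t-1)-S_t(\lambda_t-1)$. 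I would therefore prove, by induction on $t$, the stronger family of prefix bounds $S_t(L)\le\Gamma(L)$ for all interior gates $1\le L\le\lambda_t-1$. The inductive engine is clean on interior gates: the invariant $F_t(\lambda_t)\ge\gamma$ forces the leftmost-$\gamma$ selection to sit entirely in positions $\le\lambda_t-1$, so for $L\le\lambda_t-1$ the full $q_t$-fraction crosses gate $L-1$ and $S_{t+1}(L)=(1-q_t)S_t(L)+q_tS_t(L-1)$. Because $\Gamma$ is increasing, this reproduces the bound: $(1-q_t)\Gamma(L)+q_t\Gamma(L-1)=\Gamma(L)-q_t\gamma^{L}\le\Gamma(L)$.

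The single hard point is the barrier advance, and I expect it to be the main obstacle. When $\lambda_t=\lambda$ advances to $\lambda+1$, the invariant must be extended to the newly exposed top gate, i.e.\ one must show $S_{t+1}(\lambda)\le\Gamma(\lambda)$. The tempting one-line bound $S_{t+1}(\lambda)=S_{t+1}(\lambda-1)+F_{t+1}(\lambda)\le\Gamma(\lambda-1)+\gamma$, which uses only the advance threshold $F_{t+1}(\lambda)<\gamma$, overshoots $\Gamma(\lambda)=\Gamma(\lambda-1)+\gamma^{\lambda}$ by exactly $\gamma-\gamma^{\lambda}$ and is insufficient; recovering this geometric refinement is precisely what is delicate. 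The plan to close it is to argue over an entire sojourn rather than round-by-round: while the barrier rests at $\lambda$ the selection never reaches position $\lambda$, so no mass crosses gate $\lambda$, the mass in positions $\{0,\dots,\lambda\}$ is conserved, and $S_t(\lambda)$ decreases by exactly $q_t\gamma$ each round. Charging the jump $\gamma^{\lambda}$ in the target against the center-of-mass that accumulates during the sojourn (which is what eventually drives $F(\lambda)$ below $\gamma$ and triggers the advance) lets one reorganize the induction as a renewal argument across successive barrier positions, carrying the surplus built up at level $\lambda-1$ forward to pay for level $\lambda$.

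Finally I would assemble the pieces: the interior invariant gives $S_t(\lambda_t-1)\le\Gamma(\lambda_t-1)$ at every round, hence $\sum_{\ell=1}^{\lambda_t-1}\bar{F}_t(\ell)\ge\sum_{\ell=1}^{\lambda_t-1}(1-\gamma^{\ell})$, and combining this with the center-of-mass identity $\sum_{\ell\ge1}\bar{F}_t(\ell)=\sum_{\tpre=1}^{t-1}q_\tpre\gamma$ yields $\lambda_t\le\sum_{\tpre=1}^{t-1}q_\tpre\gamma+\frac{1-\gamma^{\lambda_t}}{1-\gamma}$. This recovers the argument of \citet{ala-14}; the delicate sojourn/renewal accounting at the advance step is the only substantive difficulty, everything else reducing to the two bookkeeping identities and the interior averaging recursion.
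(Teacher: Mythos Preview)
The paper does not prove this lemma; it is quoted from \citet{ala-14} and invoked as a black box inside the proof of \Cref{lem:factor revealing}. There is therefore no in-paper argument to compare yours against.

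On your attempt itself: the reduction via the center-of-mass identity to the prefix bound $S_t(\lambda_t-1)\le\Gamma(\lambda_t-1)$ is correct, and the interior induction goes through---in fact more simply than you wrote, since $S_{t+1}(L)=S_t(L)-q_t\sum_{j<L}y_t^j s_t^j\le S_t(L)$, so monotonicity alone preserves the prefix bound; your averaging recursion $S_{t+1}(L)=(1-q_t)S_t(L)+q_tS_t(L-1)$ holds only when $\theta_t\ge L$, not for all interior $L\le\lambda_t-1$. The genuine gap is the barrier-advance step, which you correctly flag but do not close. Unrolling your sojourn bookkeeping gives
\[
\Gamma(\lambda-1)-S_{\tau_\lambda}(\lambda-1)=\sum_{k=1}^{\lambda-1}\bigl[\gamma Q_k-(1-\gamma^k)\bigr],\qquad Q_k:=\sum_{t=\tau_k}^{\tau_{k+1}-1}q_t,
\]
so the renewal decomposition merely rewrites the target. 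The advance condition at level $k$ (mass at the barrier exceeds $1-\gamma$) yields only $\gamma Q_k>1-\gamma$, since the per-round inflow to the barrier is at most $q_t\gamma$; this falls short of the required $1-\gamma^k$ for $k\ge2$, and the surplus carried forward from earlier levels does not obviously cover the shortfall without a sharper invariant on the sand profile. As written, the proposal is a reasonable outline but incomplete at its acknowledged hard point; to finish you would need either Alaei's pointwise profile invariant or an independent lower bound on each sojourn budget $Q_k$.
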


\begin{proof}{\emph{Proof of \Cref{lem:factor revealing}.}}
We now construct such a feasible  solution of 
program~\eqref{eq:factor revealing}
by using the sand/barrier argument. 
Let $\{s_t^j\}, \{y_t^j\}$, and $\{\lambda_t\}$ be variables defined in sand/barrier process of \Cref{def:sand/barrier}, with parameter $T$, $\{q_t\}_{t\in[T]}$ and $\gamma = 1 - \tfrac{1}{\sqrt{\inventory_i+3}}$. Think of different inventory levels as different positions on the tape of sands, but in reveres order. In other words, position $\inventory_i -I$ corresponds to inventory level $I$ (basically, sand moves from higher to lower inventory levels). We have $T$ number of rounds of sand/barrier process.  At round $t$ of the process, there is a total of $s_t^{\inventory_i-I}$ unit of sand at each position $\inventory_i-I$ for $I\in[\inventory_i]$. The amount of sand $s_t^{\inventory_i-I}$ is divided into selected sand $ s_t^{\inventory_i - I}y_t^{\inventory_i - I}$ that determines $\alpha_t^I$, and non-selected sand $s_t^{\inventory_i - I}(1 - y_t^{\inventory_i - I})$, that determines $\beta_t^I$. In summary, we construct the dual solution as follows,
		\begin{align*}
			\mu = \gamma,
			\;\;
			\alpha_t^{I} = s_t^{\inventory_i - I}y_t^{\inventory_i - I},
			\;\;
			\mbox{and}
			\;\;
			\beta_t^{I} = s_t^{\inventory_i - I}(1 - y_t^{\inventory_i - I}).
		\end{align*}
Note that all constraint in 
program~\eqref{eq:factor revealing}
is satisfied by above dual construction.
Based on the description of sand/barrier process in \Cref{def:sand/barrier}, only $\inventory_i$ positions of tape will have sand (and hence this construction is well-defined) if the barrier never passes position $\inventory_i$, i.e., $\lambda_{T + 1} \leq \inventory_i$.
		Invoking \Cref{lem:sand/barrier} and rearranging the terms
		finish the proof.
\hfill\Halmos		
\end{proof}
\else
We defer the proof of \Cref{lem:factor revealing}
to Appendix~\ref{apx:sand barrier}.
\hfill\Halmos
\fi
\end{proof}

 \section{Numerical Experiments}
 \label{sec:numerical}

\paragraph{Experimental setup.}

In our test problems, we have six products indexed by $\{1,2,3,4,5,6\}$. Each product has an initial inventory $c_i=30$.
We consider consumers who choose from their own ``consideration sets'' ~\citep{HS-1969,AFL-15}. Given the consideration set of a consumer, we assume she chooses a product from this set based on a Multinomial Logit (MNL) choice model. More specifically, we have six types of consumers. Consumer of type $j\in\{1,2,3,4,5,6\}$ considers products $[1:j]$.
Given assortment set $S$, she chooses product $i\in S$ with probability $\phi^j(S,i)=\alpha^j_i/(\alpha^j_0+\sum_{l\in S}\alpha^j_l)$. For a consumer of type $j$, $\alpha^j_i=0$ for $i>j$ (i.e., she does not pick a product outside of her consideration set). Each non-zero $\alpha^j_i$ is drawn independently and uniformly at random from interval $[0.9,1.1]$.
To give consumers the possibility of picking an outside option, we set $\alpha_0^j$ so that $\alpha_0^j/ (\alpha^j_0+\sum_{l\in [1:j]}\alpha^j_l)=0.1$. Note that consumers of type $1$ are the most choosy ones (single-minded), while the consumers of type $6$ are the least choosy ones. 
For a consumer of type $j$,
the rental fee $r_i^j$ of product $i$
is drawn uniformly at random 
from the interval 
$[10\cdot \inflateRatio(j, \kappa), 25\cdot \inflateRatio(j, \kappa)]$
where $\inflateRatio(j, \kappa) 
\triangleq 1 + 2\kappa\cdot\frac{6 - j}{5}$,
and 
$\kappa\geq 0$ is a parameter that varies in our simulation cases
(will be specified later).
After generating rental fees $\mathbf r^j = (r_1^j,\dots, r_6^j)$ for consumer of type $j$,
we reorder them so that $r_1^j\geq r_2^j\geq \ldots \geq r_6^j$.

We consider a discrete-time selling/renting horizon of length $T=300$, and a non-stationary model for the arrival of different types in a similar fashion to \cite{RST-17}. Roughly speaking, in order to have worst-case sequence of type distributions, we intend to have an arrival ordering where choosy consumers arrive later in the selling horizon. In this way, policies need to either carefully preserve enough inventories for these consumers (in case of no rental), or take finite rental durations of products into account and carefully manage in a way that enough items return to the inventory  when choosy consumers arrive (in case of rentals). To capture this style of arrival, we divide the selling horizon into chunks of equal length $\tau=T/6$. Let $\tau^j=(6-j)\tau+1$ for $j=1,2,\ldots,6$. At each time $t$, a consumer of type $j$ arrives with probability proportional to $e^{-0.001\kappa\lvert t-\tau^j \rvert}$, i.e., 
\begin{equation}
\textrm{Probability of type $j$ arriving at time $t$}=\frac{e^{-0.001\kappa\lvert t-\tau^j \rvert}}{\sum_{l=1}^6 e^{-0.001\kappa\lvert t-\tau^l \rvert}}~,
\end{equation}
Note that $\kappa=0$ corresponds to i.i.d.\ arrival of types from the uniform distribution. As $\kappa$ increases, the arrival becomes more heterogeneous (converging towards a deterministic arrival in a descending order of types).

We consider two categories of scenarios: first without rentals (see \Cref{fig:exp1}) and second with rentals (see \Cref{fig:exp2}). For each category, we consider four test scenarios corresponding to $\kappa\in\{0,1,2,3\}$. 
In the first case, we set 
initial inventory $c_i = 30$ for each product $i$.
In the second case, 
we 
set initial inventory $c_i = 20$ for each product $i$
 and consider stochastic rental durations.
The rental duration of a product $i$ for 
a consumer of type $j$ 
is set to be the random variable $T/10+X_j$, where $X_j$ is geometrically distributed with a mean that is drawn uniformly at random from interval 
$[20\cdot \inflateRatio(7-j,\kappa),
30\cdot \inflateRatio(7-j,\kappa)]$ 
at the beginning of the simulation. 

In order to measure the expected revenue of different policies in each scenario, we consider running the target policy over independent sample paths of the input (including type realizations, consumer choices, and rental durations) by doing 50 iterations of Monte-Carlo simulation. We then record the generated revenues of all runs/sample paths, and use ``box and whisker'' plots to demonstrate the median revenues and the resulting confidence intervals for these quantities (\Cref{fig:exp1} and \Cref{fig:exp2}). Moreover, we compute average revenue over all runs/sample paths as an estimate for expected revenue of each policy (\Cref{tab:exp1} and \Cref{tab:exp2}).

\paragraph{Policies.} \revcolor{In our numerical experiments, we compare the expected revenue of eleven different policies/benchmarks that are proposed in this paper or previous work:}
\begin{enumerate}
\item \emph{Inventory Balancing} (\texttt{IB}): 
an adaptation of the inventory balancing algorithm in \citet{GNR-14} by keeping track of inventory levels of reusable products. For our experiments, we consider a variant that uses exponential penalty function $\exponential$.
\item \emph{Myopic greedy policy} (\texttt{GR}): given available products at each time, \texttt{GR} offers the (myopic) optimal assortment at this time. This policy is a special case of \texttt{IB} when $\Psi(x)=\indicator{x>0}$.

\item \emph{Greedy with linear approximate of value-to-go} (\texttt{GR-APXLinValue}): this policy is the approximate DP policy suggested in \cite{RST-17}, which is essentially a greedy myopic policy with respect to a linear approximation of optimal revenue-to-go function. See Section 3 of \citet{RST-17} for more details.
\item \emph{Rollout policy} (\texttt{Rollout}): this policy is obtained by applying rollout\footnote{In this context, rollout of a static policy is computed by first computing the revenue-to-go function of the static policy using bakward induction, and then greedily picking the best assortment at each time assuming future revenues are equal to the computed revenue-to-go values of the static policy.} on the static policy propsoed in Section~4 of \citet{RST-17}.

\item \emph{Decomposition}
(\texttt{Decomp}):
this policy is another approximate DP policy
suggested in \citet{LV-08},
which first decomposes the original DP 
across the products and then constructs value function approximations by solving a separate DP for each product.
This policy  is a widely used heuristics in practice,
but has no theoretical performance guarantees.
\revcolor{
\item \emph{Simulation-based policy with random discarding} (\texttt{Sim+Random}): this policy is the simulation-based policy that randomly discard each product with probability $\gamma$ and is discussed in \Cref{sec:large inventory} (\Cref{alg:large inventory}).
For our experiments, we set $\gamma = 0.1$.

\item \emph{Simulation-based policy with per-unit revenue thresholds} (\texttt{Sim+Infusion}): this policy is the simulation-based policy that assumes infusion/replenishment of inventories to compute its discarding rule and is discussed in \Cref{sec:small inventory} (\Cref{alg:SB}).

\item \emph{Simulation-based policy with inventory-dependent revenue thresholds} (\texttt{Sim+OPTDis}): this policy is designed for the special case of no rentals. It combines an exact randomized rounding of expected LP with the optimum online discarding rules for each product to maintain inventory feasibility of all the products.
We implement this policy for the 
no-rental scenario.
See its formal description in \Cref{sec:stochastic-infinite} (\Cref{alg:SB infinite DP}).
}

\item \emph{Simulation-based policy with hybrid discarding rules} (\texttt{Sim+Hybrid(i)}): 
this policy is the first type of 
simulation-based 
hybrid policy discussed in \Cref{sec:hybrid}.
Specifically, 
for each product, it compares the approximation guarantees 
for different discarding policies,
i.e., random discarding in 
\Cref{alg:large inventory}
and per-unit revenue thresholds in \Cref{alg:SB}
(as well as 
inventory-dependent revenue thresholds in \Cref{alg:SB infinite DP}
for the no rental scenario).
Then it chooses the discarding policy 
with higher approximation guarantee 
for each product separately. 

\item \emph{Simulation-based policy with hybrid
by Monte-Carlo simulation} (\texttt{Sim+Hybrid(ii)}):
this policy is the second type of 
simulation-based 
hybrid policy discussed in \Cref{sec:hybrid}.
Specifically, after every 10 periods, it estimates
the expected future revenue of \Cref{alg:large inventory}
and \Cref{alg:SB}
(as well as \Cref{alg:SB infinite DP}
for the no rental scenario) by doing 20 iterations of Monte-Carlo simulation; and switches to the policy with higher
estimated future revenue.


\item \emph{Bayesian expected LP}: this benchmark is the Bayesian LP benchmark $\EAR{\typedistributionsequence}$ defined in \Cref{sec:prelim}. It provides an upper-bound on the expected revenue of any feasible online policy (when the expectation is taken over the randomness of types, consumer choices, and rental durations, as well as internal randomness of the policy).
\end{enumerate}


\begin{table}[ht]
\centering
\caption{Comparing average revenue of different policies; No rentals and $\forall i:c_i=30$; Policies/benchmarks with $^*$ are proposed or analyzed in this paper. }
\label{tab:exp1}
\begin{tabular}[t]{lcccc}
\toprule
&$\kappa=0$ &$\kappa=1$&$\kappa=2$&$\kappa=3$\\
\midrule
\texttt{Expected-LP}$^*$ & 3540.84 & 7156.37 & 11277.60 & 14898.52 \\
\texttt{IB} & 3217.19 & 6061.12 & 8983.46 & 11223.27 \\
\texttt{GR} & 3010.94 & 5312.55 & 7587.48 & 9097.80 \\
\texttt{GR-APXLinValue} & 2985.39 & 5297.21 & 7203.79 & 9310.97 \\
\texttt{Rollout} & 3127.90 & 5514.05 & 7606.67 & 9406.78 \\
\texttt{Decomp} & 3054.57 & 5785.07 & 8521.41 & 10227.28 \\
\texttt{Sim+Random}$^*$ & 3185.99 & 6382.40 & 10455.97 & 13741.67 \\
\texttt{Sim+Infusion}$^*$ & 3353.12 & 6715.98 & 10415.32 & 13718.31 \\
\texttt{Sim+OPTDis}$^*$ & 3339.37 & 6389.41 & 9905.52 & 13723.17 \\
\texttt{Sim+Hybrid(i)}$^*$ & 3358.77 & 6392.97 & 10537.14 & 13799.11 \\
\texttt{Sim+Hybrid(ii)}$^*$ & 3345.92 & 6774.01 & 10827.79 & 14286.46 \\
\bottomrule
\end{tabular}
\end{table}%

\begin{figure}[ht]
  \centering
       \subfloat[$\kappa = 0$]
      {\includegraphics[width=0.5\textwidth]{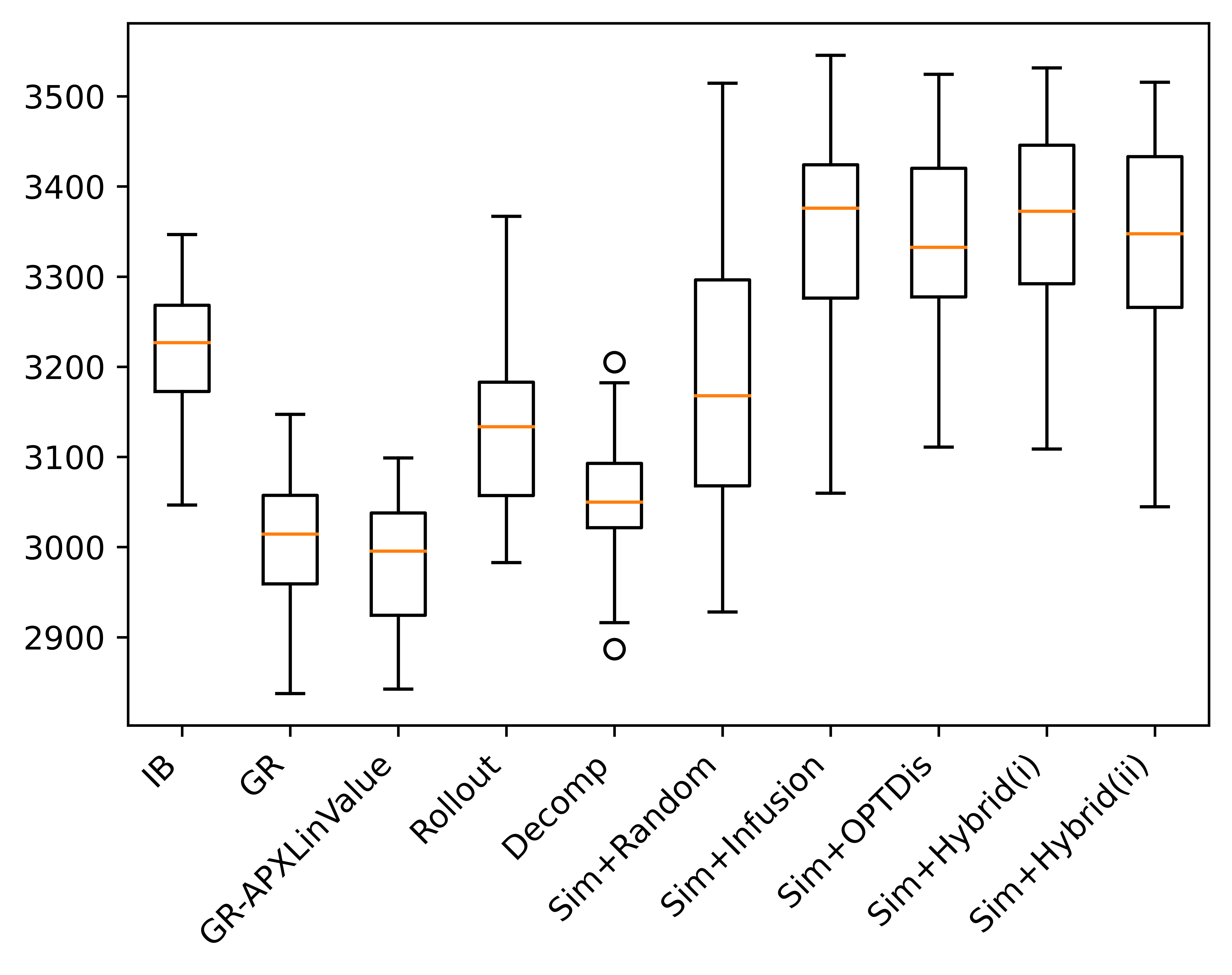}}
      \subfloat[$\kappa = 1$]
      {\includegraphics[width=0.5\textwidth]{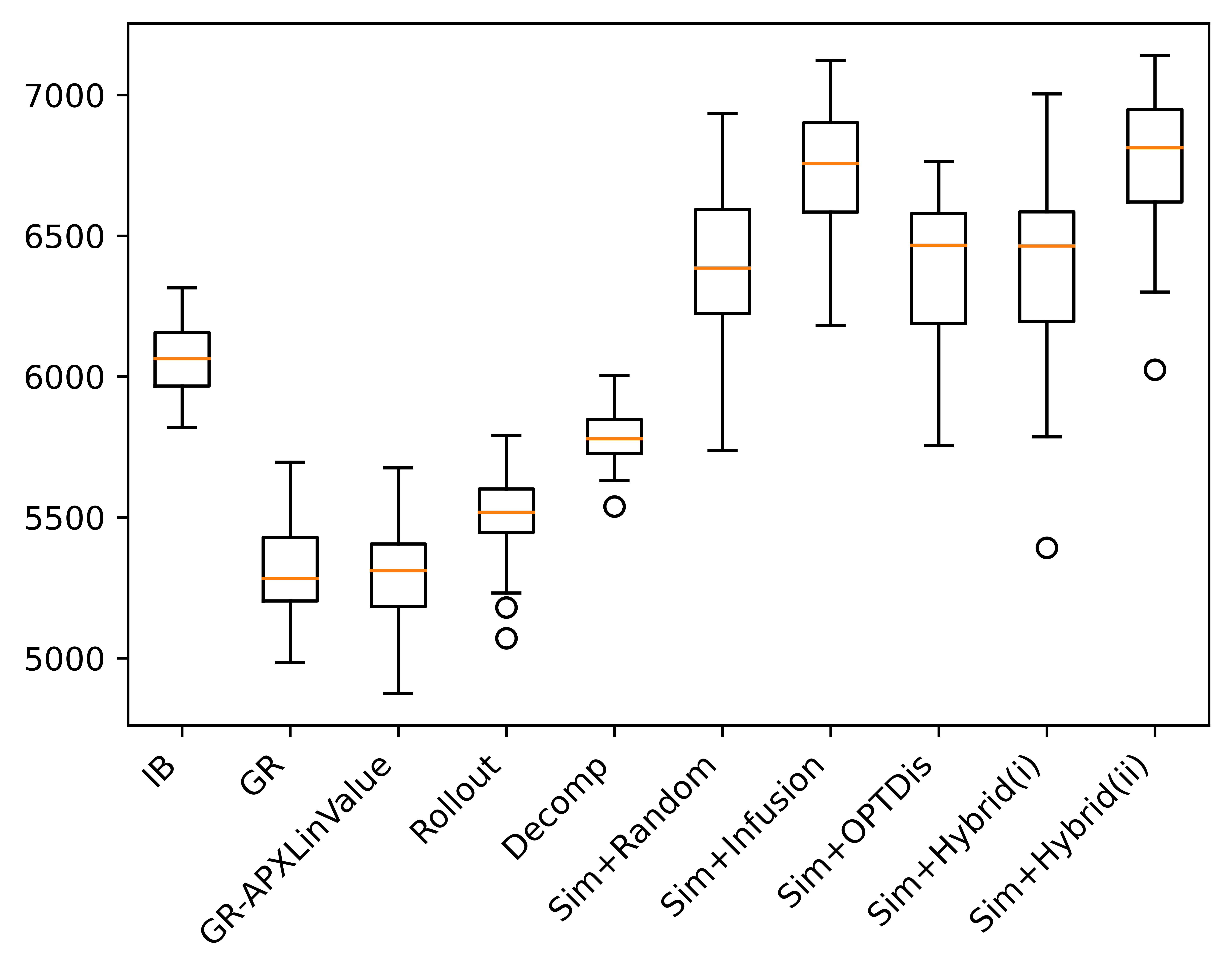}}
      \\
      \subfloat[$\kappa = 2$]
      {\includegraphics[width=0.5\textwidth]{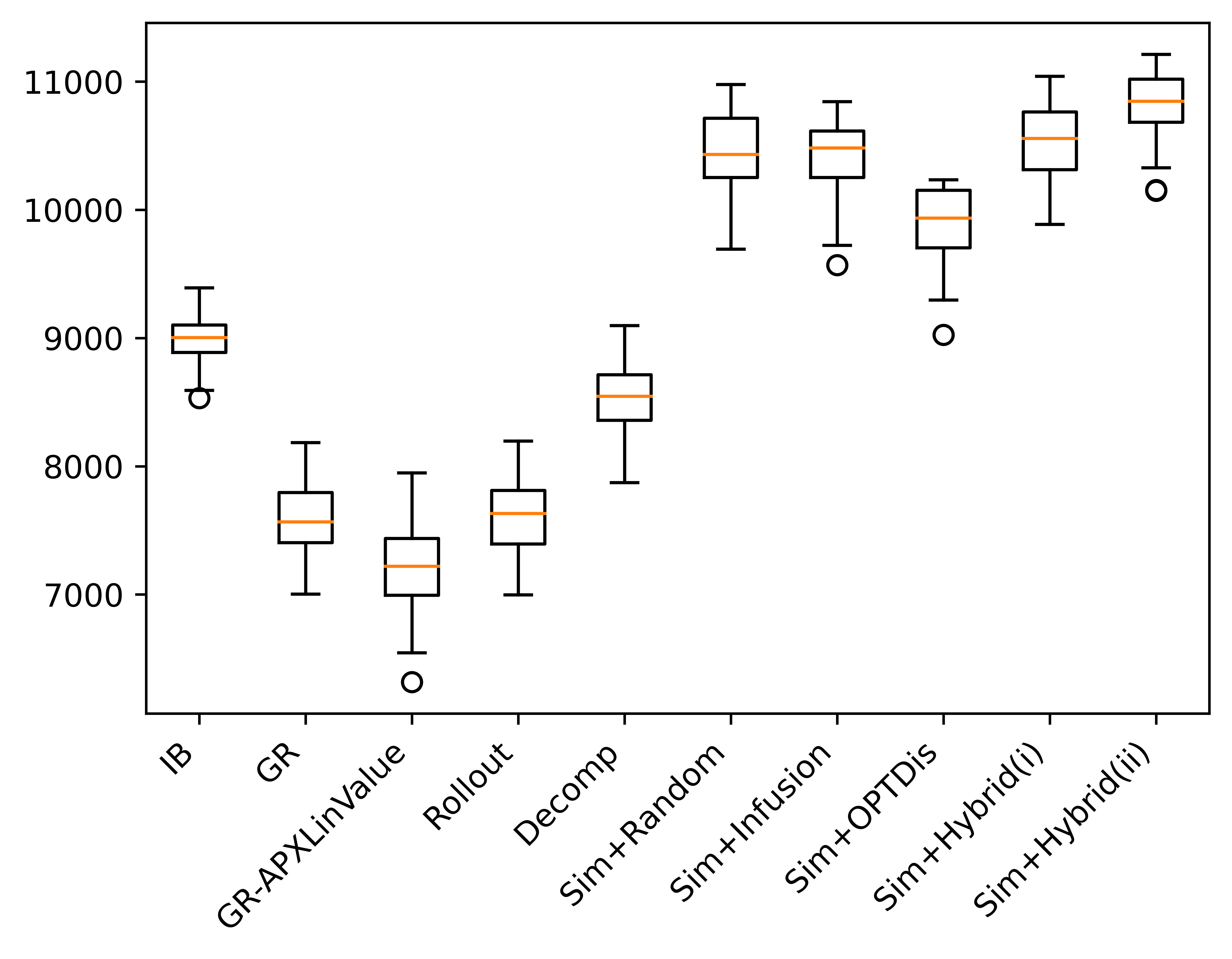}}
      \subfloat[$\kappa = 3$]
      {\includegraphics[width=0.5\textwidth]{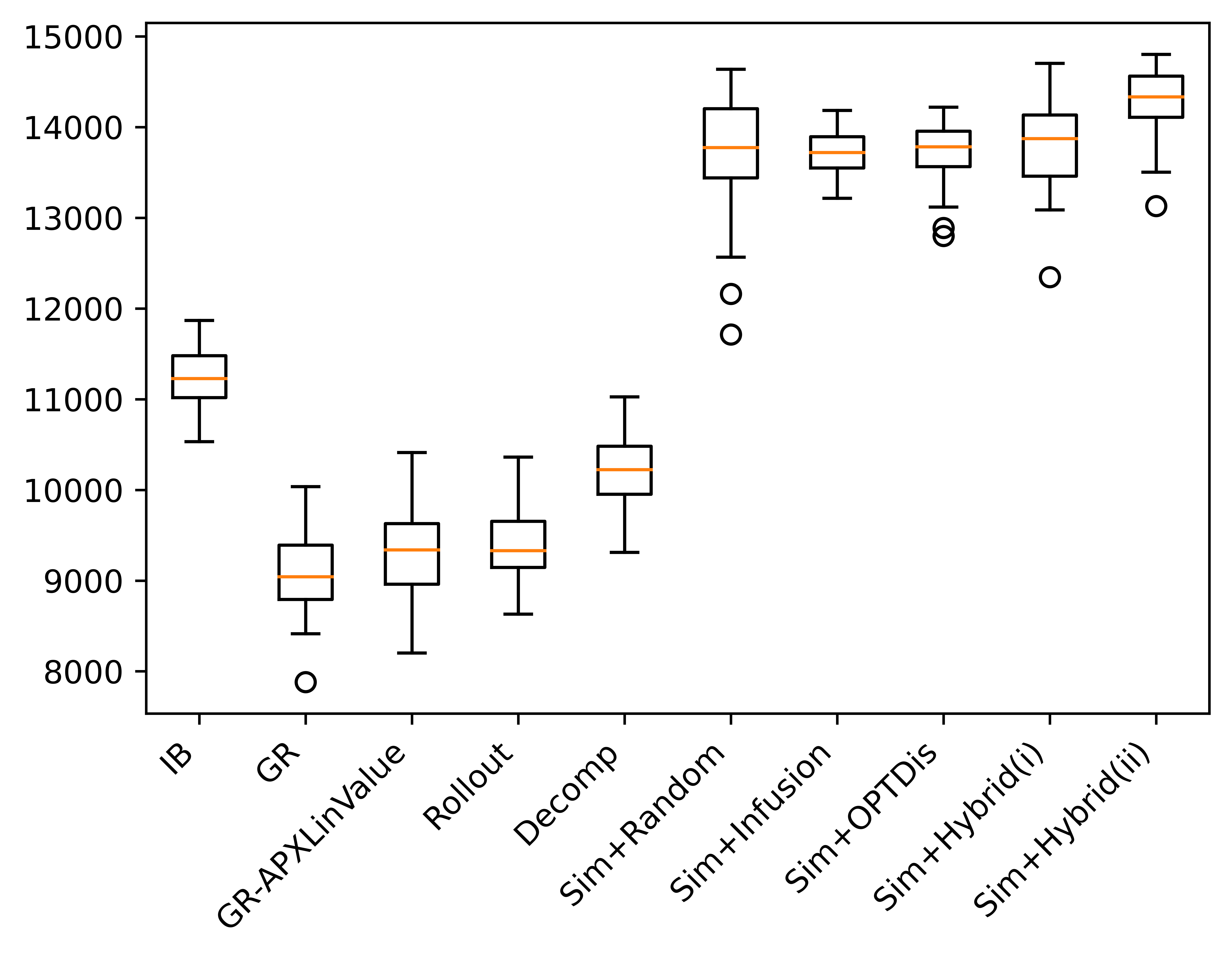}}
  \caption{Box and whisker comparison of different policies in terms of expected revenue when there is no rental.
  Results are based on 50 iterations of Monte-Carlo simulation.  }
   \label{fig:exp1}
\end{figure}

\revcolor{
\begin{table}[ht]
\centering
\caption{Comparing average revenue of different policies; With rentals and $\forall i:c_i=20$; Policies/benchmarks with $^*$ are proposed or analyzed in this paper.}
\label{tab:exp2}
\begin{tabular}[ht]{lcccc}
\toprule
&$\kappa=0$ &$\kappa=1$&$\kappa=2$&$\kappa=3$\\
\midrule
\texttt{Expected-LP}$^*$ & 5224.14 & 10142.05 & 15163.47 & 17552.75 \\
\texttt{IB} & 4600.11 & 7864.58 & 10577.72 & 12096.74 \\
\texttt{GR} & 4321.88 & 7052.60 & 9191.56 & 10387.23 \\
\texttt{GR-APXLinValue} & 3860.43 & 5968.55 & 8148.01 & 9234.32 \\
\texttt{Rollout} & 3970.43 & 6373.85 & 8828.02 & 10012.41 \\
\texttt{Decomp} & 4417.60 & 7148.10 & 9484.07 & 10580.31 \\
\texttt{Sim+Random}$^*$ & 4405.69 & 8110.84 & 12139.91 & 13531.74 \\
\texttt{Sim+Infusion}$^*$ & 4576.72 & 8025.13 & 11531.98 & 13253.71 \\
\texttt{Sim+Hybrid(i)}$^*$ & 4583.40 & 8228.66 & 12109.57 & 13578.05 \\
\texttt{Sim+Hybrid(ii)}$^*$ & 4590.80 & 8540.15 & 12683.29 & 14223.66 \\
\bottomrule
\end{tabular}
\end{table}%
}

\begin{figure}[ht]
  \centering
      \subfloat[$\kappa = 0$]
      {\includegraphics[width=0.5\textwidth]{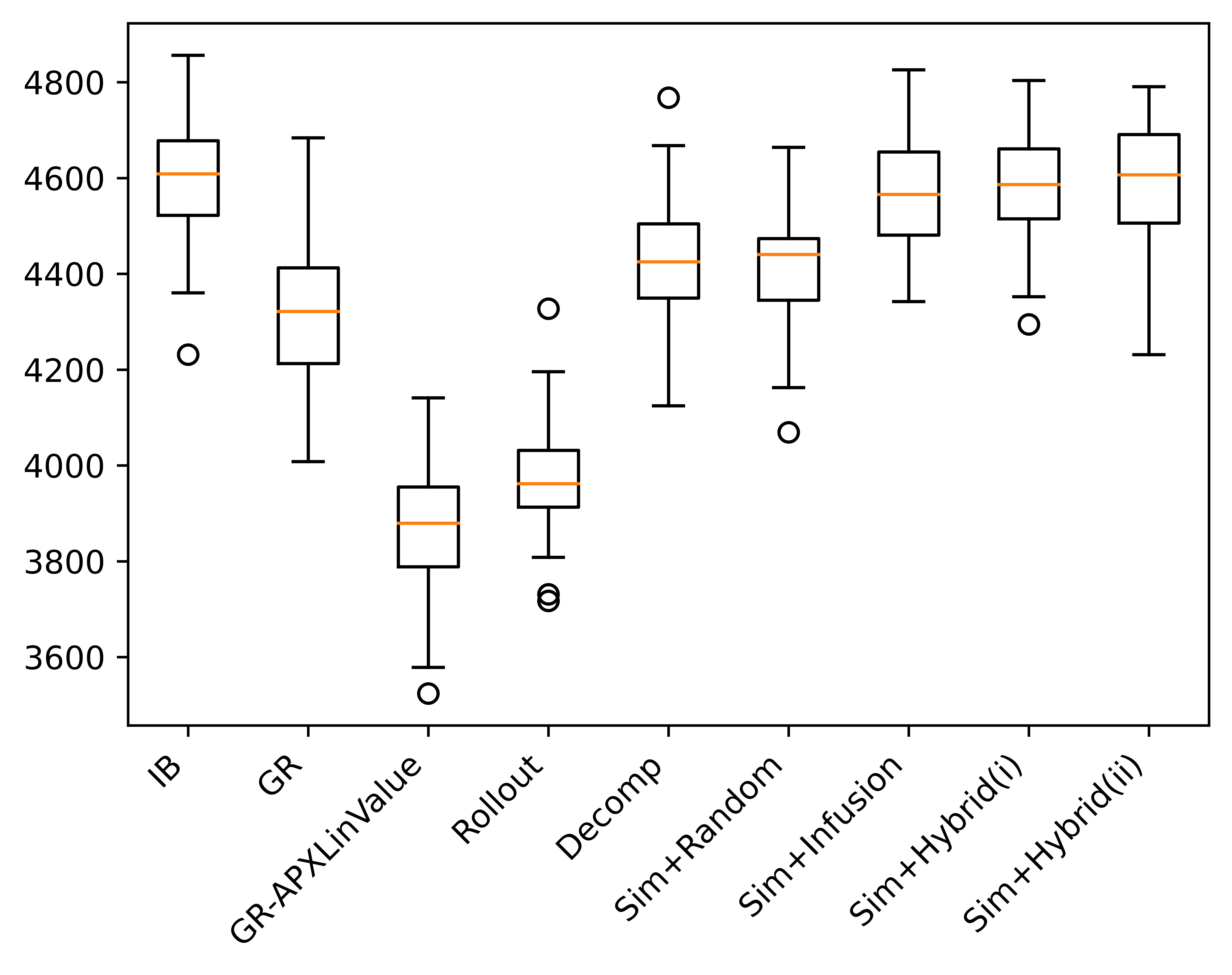}}
      \subfloat[$\kappa = 1$]
      {\includegraphics[width=0.5\textwidth]{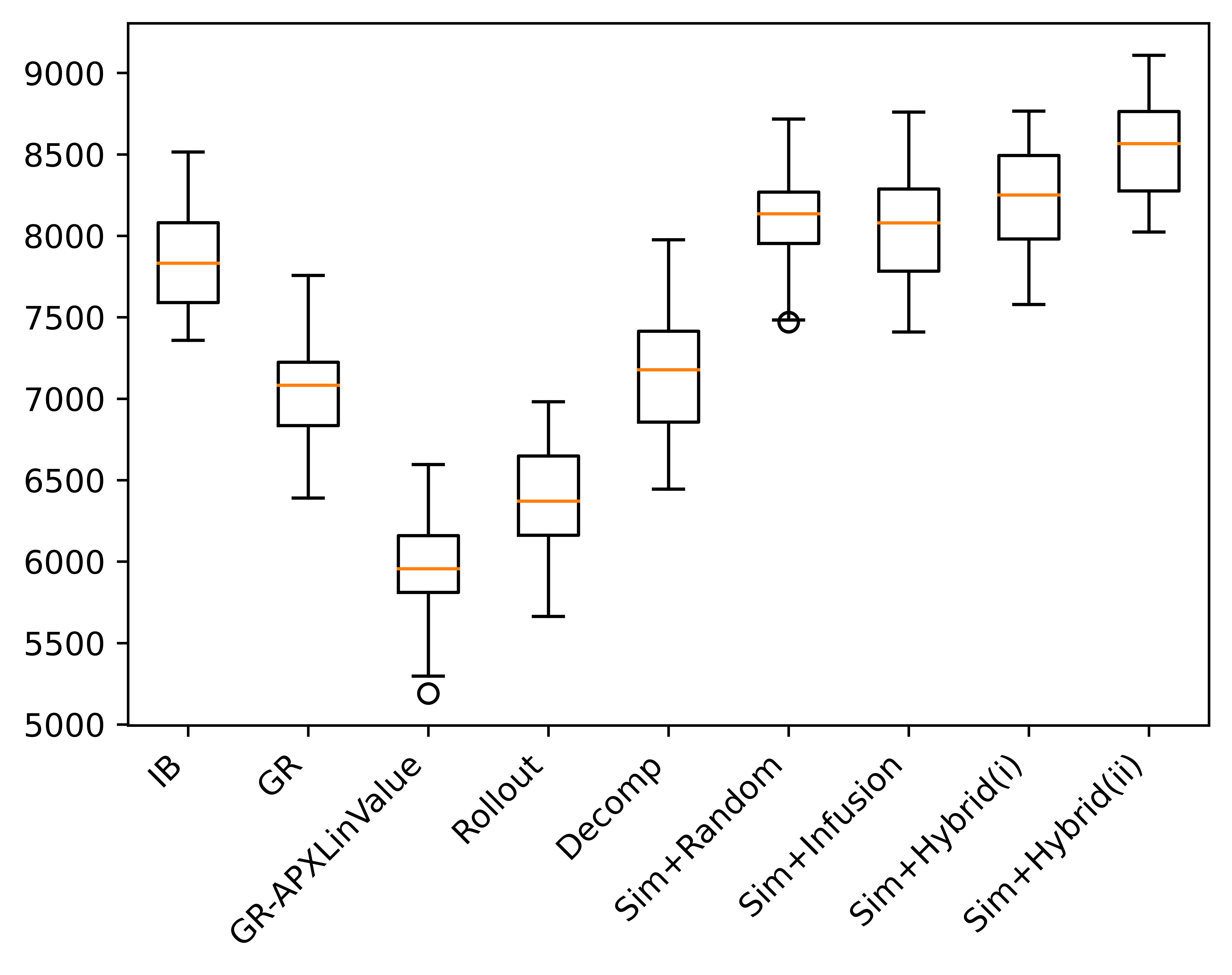}}
      \\
      \subfloat[$\kappa = 2$]
      {\includegraphics[width=0.5\textwidth]{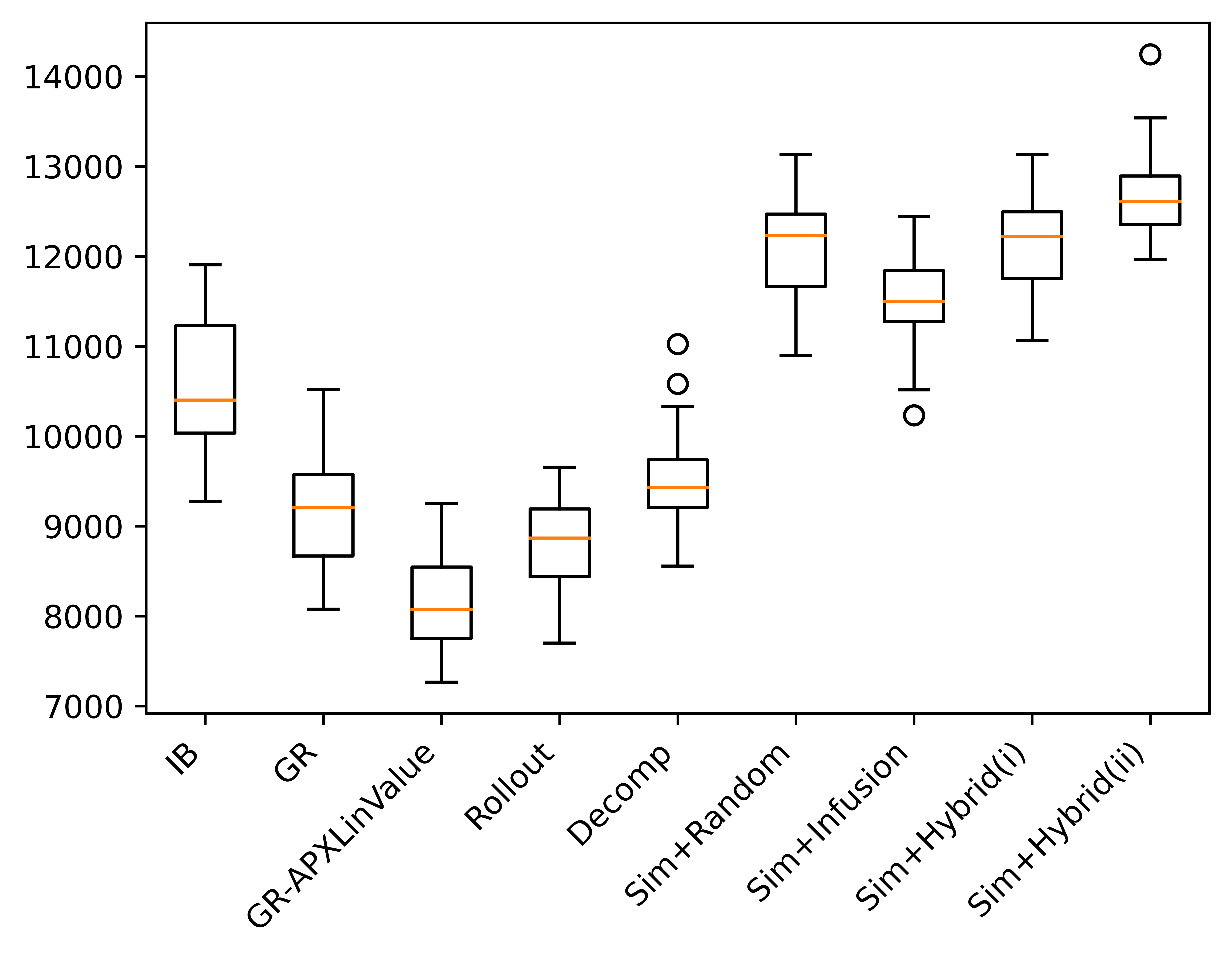}}
      \subfloat[$\kappa = 3$]
      {\includegraphics[width=0.5\textwidth]{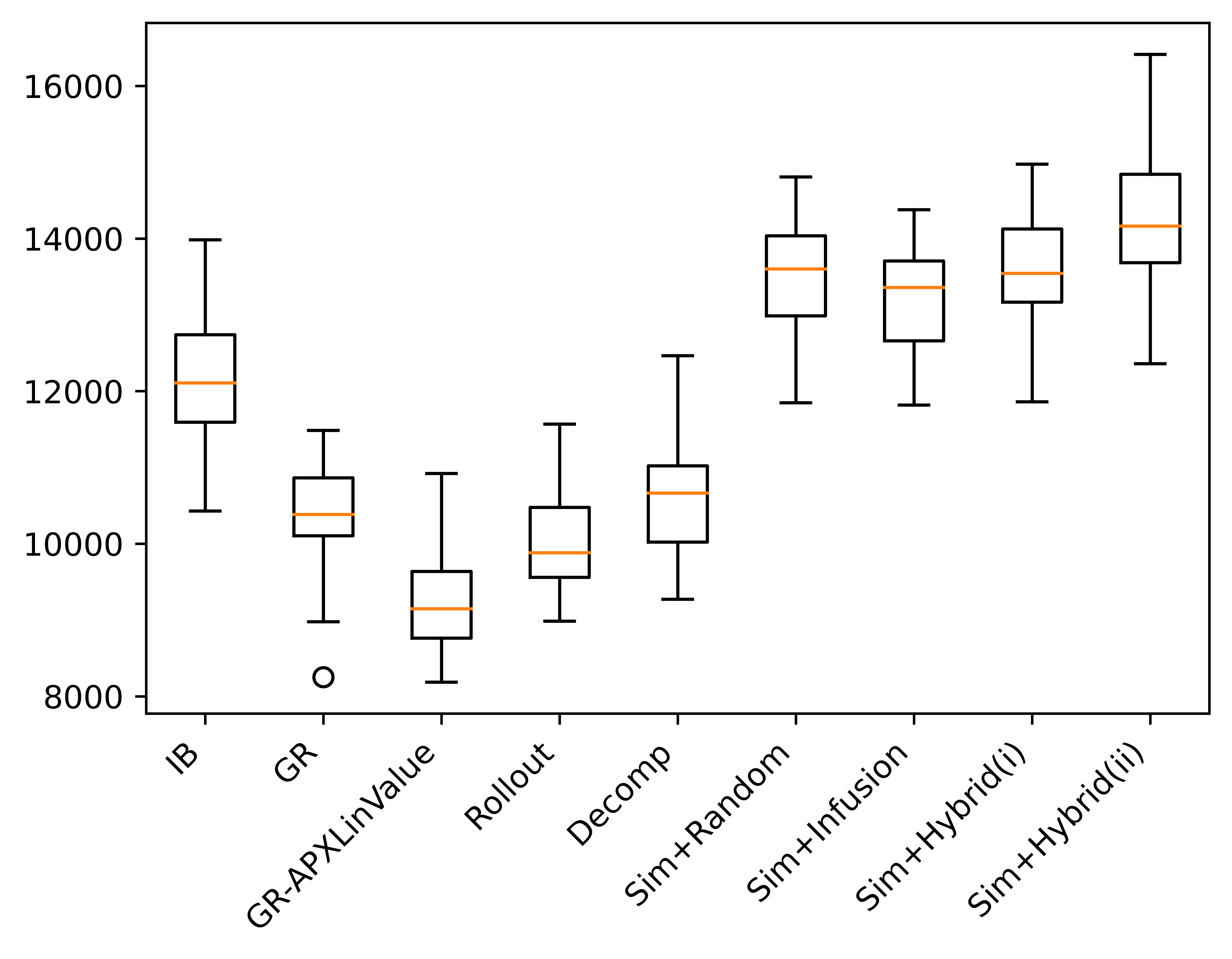}}
  \caption{Box and whisker comparison of different policies in terms of expected revenue when there are type-dependent rentals with stochastic durations.
  Results are based on 50 iterations of Monte-Carlo simulation.  }
  \label{fig:exp2}
\end{figure}

\paragraph{Results.} To summarize our findings from simulations we have run in this section, we consider the two categories mentioned above separately: 
\begin{enumerate}
\item{\emph{No rentals}}: 
\revcolor{in this case, it is clear from both \Cref{tab:exp1} and \Cref{fig:exp1} that 
\texttt{Sim+Hybrid(i)} and
\texttt{Sim+Hybrid(ii)} 
have better performance than others for 
all $\kappa$. Specifically, compare other policies with our hybrid policies by averaging over scenarios $\kappa\in\{0,1,2,3\}$: 
\texttt{Sim+Hybrid(ii)} 
achieves 
15.90\%, 34.59\%, 35.92\%, 31.01\%, 23.35\%, 4.67\%, 4.52\%, 3.08\%, 2.97\%
higher performance than
\texttt{IB},
\texttt{GR},
\texttt{GR+APXLinValue},
\texttt{Rollout},
\texttt{Decomp},
\texttt{Sim+Random},
\texttt{Sim+Infusion},
\texttt{Sim+OPTDis},
\texttt{Sim+Hybrid(i)}
on average,
respectively.\footnote{An important point to mention is that \citet{RST-17} reports a better performance for \texttt{Rollout} compared to \texttt{IB} in a very similar simulation setup. While this looks contradictory at the surface level, we verified the reason behind this discrepancy is that they recompute the approximate values used by \texttt{Rollout} every 100 time periods, which improves its performance. We did not recompute any of the parameters for the sake of having a fair comparison.}
Notably, 
both hybrid policies
\texttt{Sim+Hybrid(i)} and
\texttt{Sim+Hybrid(ii)} 
are better than their components, i.e., \texttt{Sim+Random},
\texttt{Sim+Infusion}
and \texttt{Sim+OPTDis}
in almost all parameter regimes.
In \texttt{Sim+Hybrid(i)},
on average there are 58.9\% products
using the uniform discarding procedure,
and the remaining 41.1\% products 
using inventory-dependent revenue threshold discussed in
\Cref{sec:stochastic-infinite}.
The running times of two hybrid policies 
differ significantly.
Specifically, the running time of 
\texttt{Sim+Hybrid(i)}
is almost the same as all
other policies except 
\texttt{Sim+Hybrid(ii)}
whose running time is $766.39$x slower
due to its
Monte-Carlo simulation.
See \Cref{tab:runtime}.
}

\item \emph{Stochastic and finite rentals}: 
\revcolor{in this case, it is clear from both \Cref{tab:exp2} and \Cref{fig:exp2} that both \texttt{Sim+Hybrid(i)} and 
\texttt{Sim+Hybrid(ii)} 
perform noticeably better than other policies for all $\kappa$.
Specifically, 
\texttt{Sim+Hybrid(ii)} 
achieves 
11.47\%, 25.56\%, 42.92\%, 33.84\%, 22.89\%, 4.77\%, 5.17\%, 3.36\%
higher performance than
\texttt{IB},
\texttt{GR},
\texttt{GR+APXLinValue},
\texttt{Rollout},
\texttt{Decomp},
\texttt{Sim+Random},
\texttt{Sim+Infusion},
\texttt{Sim+Hybrid(i)}
on average,
respectively.
Notably, 
both hybrid policies 
\texttt{Sim+Hybrid(i)} and
\texttt{Sim+Hybrid(ii)} 
are better than their components, i.e., \texttt{Sim+Random}
and \texttt{Sim+Infusion}.
In \texttt{Sim+Hybrid(i)},
on average there are 57.0\% products
using the uniform discarding procedure,
and the remaining 43.0\% products 
using per-unit revenue threshold discussed in
\Cref{sec:small inventory}.
The running times of two hybrid policies 
differ significantly.
Specifically, the running time of 
\texttt{Sim+Hybrid(i)}
is similar to all
other policies except 
\texttt{Sim+Hybrid(ii)}
whose running time is $511.18$x slower
due to its
Monte-Carlo simulation.
See \Cref{tab:runtime}.}
\end{enumerate}

\begin{table}[ht]
\centering
\caption{Comparing normalized average running time of different policies;
Policies/benchmarks with $^*$ are proposed or analyzed in this paper.}
\label{tab:runtime}
\begin{tabular}[t]{lcc}
\toprule
& no rentals & stochastic rentals\\
\midrule
\texttt{IB} & 1.06 & 0.96   \\
\texttt{GR} & 1.04 & 0.92 \\
\texttt{GR-APXLinValue} & 1.10 & 0.94 \\
\texttt{Rollout} & 1.10 & 0.95 \\
\texttt{Decomp} & 4.57 & 1.85 \\
\texttt{Sim+Random}$^*$ & 0.98 & 0.98 \\
\texttt{Sim+Infusion}$^*$ & 1.00 & 0.97 \\
\texttt{Sim+OPTDis}$^*$ & 0.99 &  \\
\texttt{Sim+Hybrid(i)}$^*$ & 1 & 1 \\
\texttt{Sim+Hybrid(ii)}$^*$ & 766.39 & 511.18 \\
\bottomrule
\end{tabular}
\end{table}


 \section{Application of Sub-Assortment Sampling 
in Advanced Booking of Upgrades}
\label{apx:advanced booking}
\revcolor{
To demonstrate the power of our framework and to show why the sub-assortment sampling procedure (\Cref{alg:sample assortment}) can be helpful in other related problems, we  introduce and briefly sketch a variation of our model which we call \emph{``advanced booking of upgrades''} in this section. 
This variation has applications in hotel upsell revenue management.

\paragraph{Advanced booking of upgrades model.} Suppose a hotel booking management system would like to sell upgrade options to the users who have already booked rooms. These upgrade options should be thought of as a combination of reusable resources (e.g., upgrade to a suite, or access to parking) and non-reusable resources (e.g., breakfast at the hotel). 
Each  arriving booking request at time $t$  encodes a check-in time $s_t \geq t$, 
and customer type $z_t$. 
We assume $(s_t, z_t)$ is drawn from known distribution $\{F_t\}^T_{t=1}$.
Customer type $z_t$ specifies the rental fees, 
duration distributions and choice model as our original model. 
After observing the request at time $t$, the platform 
\emph{immediately} display an assortment of possible upgrade options, 
the customer selects one upgrade option based on her choice model given by type $z_t$. 
Then at time $s_t$ when the customer actually checks in, the platform finalizes the decision of whether or not to fulfill the customer’s selected option. 
If the upgrade is allocated, the platform collects the reward.
Note that at the check-in time $s_t$, the platform makes the decision
to finalize the upgrade options based on the 
realized requests between $t$ and $s_t$ as well as their realized selected options.

\paragraph{Bayesian expected LP benchmark.} 
In this model, there exists a Bayesian expected LP benchmark as follows

\begin{align}
\label{eq:exante-stochastic advance booking}\tag{$\text{Expected-LP-AdvBooking}[{\typedistributionsequence}]$}
	\begin{array}{llll}
	{\max
	\limits_{\mathbf \alloc,\mathbf{x} \geq \mathbf 0}}~~&
	\displaystyle\sum\nolimits_{t=1}^\totaltime 
	\displaystyle\sum\nolimits_{s_t=t}^{T}
	\displaystyle\sum\nolimits_{\type_t\in \typespace_t}
	\displaystyle\sum\nolimits_{\assortment \in \assortmentspace}
	\displaystyle\sum\nolimits_{i=1}^n
    \typedistribution_t(s_t\type_t)
	\curreward y_{S, t, s_t, z_t, i}
	&~~~~~\text{s.t.}& \\[1em]
	  &
	\displaystyle\sum\nolimits_{\tpre = 1}^{t}
	\displaystyle\sum\nolimits_{s_t=t}^{T}
	\displaystyle\sum\nolimits_{\type_t\in \typespace_t}
	\displaystyle\sum\nolimits_{\assortment \in \assortmentspace}
	\typedistribution_\tpre(s_\tpre, \type_\tpre)
    \durationcdfi^{\type_{\tpre}}_i(t-s_\tpre)
	y_{\assortment, \tpre, s_\tpre, \type_\tpre,i}
    \leq \inventory_i
	&~~~~~i \in [n],\ t \in [\totaltime]&
	\\[1em]
	&\displaystyle\sum\nolimits_{\assortment\in \assortmentspace}
    x_{S, t, s_t, z_t} \leq 1
	&~~~~~t \in [\totaltime],\ s_t \in [t:\totaltime],\ 
 \type_t\in \typespace & 
	\\[1em]
	&
    y_{S, t, s_t, z_t, i} \leq 
    \curchoice
    x_{S, t, s_t, z_t} 
	&~~~~~
 i\in[n], t \in [\totaltime],\ s_t \in [t:\totaltime],\
 \type_t\in \typespace & 
	\\[1em]
\end{array}
\end{align}
Here, variables 
$\{x_{S, t, s_t, z_t} \}$ correspond to the probabilities 
that assortment $S$ is offered to consumer $t$
given check-in time $s_t$ and type $z_t$ is realized;
and 
variables $\{y_{S, t, s_t, z_t, i}\}$
correspond to the probabilities 
that assortment $S$ is offered to consumer $t$
and upgrade option $i$ is allocated at check-in time $s_t$
given check-in time $s_t$ and type $z_t$ is realized.

Similar to \Cref{prop:relaxation},
the Bayesian expected LP benchmark~\ref{eq:exante-stochastic advance booking}
is an upper bound on the clairvoyant optimum online benchmark,
and hence the
weaker non-clairvoyant optimum online.\footnote{
All the analysis in this section 
for advanced booking of upgrades model is similar 
to the ones in our original model. 
We omit them to avoid repetition.}

\begin{proposition}
For any distributions ${F_t}^T_{t=1}$, 
the expected total revenue of the clairvoyant optimum online benchmark is
upper-bounded by \ref{eq:exante-stochastic advance booking}.
\end{proposition}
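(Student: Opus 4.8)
{\emph{Proof proposal.}}
The plan is to mirror the proof of \Cref{prop:relaxation}, now accounting for the two-stage structure in which an assortment is offered at the request time $t$ while fulfillment (and hence the inventory consumption) is decided at the check-in time $s_t$. Fix the clairvoyant optimum online policy, which observes the entire realized request sequence $\{(s_\tpre,\type_\tpre)\}_{\tpre=1}^{\totaltime}$ but not the consumer choices nor the rental durations. For a given request sequence, let $A_{S,t}\in\{0,1\}$ be the indicator that this policy offers assortment $S$ to consumer $t$, and let $L_{i,t}\in\{0,1\}$ be the indicator that option $i$ is both selected by consumer $t$ from the offered assortment and subsequently fulfilled at check-in time $s_t$. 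I would set
\[
x_{S,t,s_t,\type_t}\triangleq\texpect{A_{S,t}\given s_t,\type_t},
\qquad
y_{S,t,s_t,\type_t,i}\triangleq\texpect{A_{S,t}\,L_{i,t}\given s_t,\type_t},
\]
and argue that this is a feasible point of \ref{eq:exante-stochastic advance booking} whose objective equals the expected revenue of the clairvoyant benchmark.

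First I would dispatch the two ``probability-type'' constraints. Since the policy offers exactly one assortment per period, $\sum_S A_{S,t}=1$, and taking conditional expectations yields $\sum_S x_{S,t,s_t,\type_t}=1$. For the coupling constraint, condition on $\{A_{S,t}=1\}$: the choice of consumer $t$ is realized \emph{after} the assortment is offered, so $i$ is selected with probability exactly $\curchoice$ independently of the offer decision, and the fulfillment indicator can only lower this. Hence $\texpect{A_{S,t}L_{i,t}\given s_t,\type_t}\le\curchoice\,\texpect{A_{S,t}\given s_t,\type_t}$, i.e.\ $y_{S,t,s_t,\type_t,i}\le\curchoice\,x_{S,t,s_t,\type_t}$.

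The heart of the argument is the inventory constraint. For any realized request sequence and any sample path of choices and durations, the clairvoyant policy is point-wise feasible, so for every product $i$ and time $t$ the number of units of $i$ simultaneously in use is at most $\inventory_i$; writing this count over requests $\tpre$ whose check-in has occurred by $t$,
\[
\sum\nolimits_{\tpre}\Big(\sum\nolimits_{S}A_{S,\tpre}L_{i,\tpre}\Big)\,\indicator{\duration_\tpre\ge t-s_\tpre}\le\inventory_i,
\]
where $\duration_\tpre$ is the rental duration of the unit allocated to request $\tpre$. Exactly as in \Cref{prop:relaxation}, the crucial point is that $\duration_\tpre$ is drawn fresh only \emph{after} request $\tpre$'s upgrade is allocated, so conditioned on the type sequence it is independent of the allocation indicator $\sum_S A_{S,\tpre}L_{i,\tpre}$ --- even though that indicator may depend on the durations of \emph{other} requests $\tpre'\neq\tpre$. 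Taking expectations of the displayed inequality, factoring each summand through this independence so that $\texpect{\indicator{\duration_\tpre\ge t-s_\tpre}\given\type_\tpre}=\durationcdfi^{\type_\tpre}_i(t-s_\tpre)$, and then averaging over the request distribution $\typedistributionsequence$, turns sample-path feasibility into precisely the first constraint of \ref{eq:exante-stochastic advance booking} under the assignment $y$.

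Finally, weighting the per-request expected reward $\sum_i\curreward\sum_S y_{S,t,s_t,\type_t,i}$ by $\typedistribution_t(s_t,\type_t)$ and summing recovers $\texpect{\sum_t\sum_i\curreward\,L_{i,t}}$, the expected revenue of the clairvoyant benchmark; thus the constructed feasible point attains this value, and the LP optimum upper-bounds it. I expect the only delicate step to be the independence claim underlying the inventory constraint: one must argue that conditioning on the type sequence decouples $\duration_\tpre$ from the history-dependent, choice- and cross-duration-dependent allocation event for request $\tpre$, which holds precisely because the duration is sampled only upon allocation.
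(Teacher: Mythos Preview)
Your proposal is correct and follows exactly the approach the paper intends: the paper omits this proof entirely, stating only that the analysis is similar to that of \Cref{prop:relaxation}, and your argument is precisely that adaptation --- construct $(x,y)$ from the clairvoyant policy's conditional expectations, verify the probability and coupling constraints, and then exploit that the duration $d_\tpre$ is drawn only after the fulfillment decision at $s_\tpre$ (hence independent of $L_{i,\tpre}$ conditional on the request sequence) to pass from sample-path feasibility to the in-expectation inventory constraint. Your handling of the extra two-stage structure (assortment at $t$, fulfillment at $s_t$) and the new coupling constraint $y\le\phi\,x$ is the only substantive addition beyond \Cref{prop:relaxation}, and it is done correctly.
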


\begin{algorithm}
 	\caption{Simulation-based Algorithm 
  for Advanced Booking of Upgrades}
 	\label{alg:advance booking}
 	\KwIn{discarding probability $\gamma\in[0,1]$}
 	\vspace{2mm}
 	\emph{\underline{Pre-processing:}} Compute the optimal assignment 
  $\{x^*_{S, t, s_t, z_t},y^*_{S, t, s_t, z_t, i}\}$ 
  of \ref{eq:exante-stochastic advance booking}
 	\vspace{2mm}
 	
 	\For{$t=1$ to $T$}{
 	\tcc{consumer $t$ with check-in time $s_t$, type $\type_t$ such that $(s_t, z_t)\sim\typedistribution_t$ arrives} 
 	\vspace{1mm}
 	
 	\emph{\underline{Simulation:}} Upon realizing consumer 
  check-in time $s_t$ and type $\type_t$, sample 
  $\hat{\assortment}_t\sim \{x^*_{S, t, s_t, z_t}\}_{S\in \assortmentspace}$
 	
 	\vspace{2mm}
 	\emph{\underline{Discarding:}} Initialize $\bar\assortment_t\gets\hat\assortment_t$
 	
 	\For{each product $i\in\hat\assortment_t$}{
 	\vspace{1mm}
 	Flip an independent coin and remove $i$ from $\bar\assortment_t$ with probability $\gamma$
 	
 	\If {there is no available unit of product $i$}{
 	Remove $i$ from $\bar\assortment_t$}}
 	
 	\vspace{2mm}
 	\emph{\underline{Post-processing:}} Let $\tilde{S}_t\gets \textsc{Sub-assortment Sampling} \left(\choice^{\type_t}, \bar\assortment_t,
  \{y^*_{\hat\assortment_t, t, s_t, z_t, i}\}_{i\in \bar\assortment_t}\right)$
 	
 	\tcc{Send a query call to Procedure~\ref{alg:sample assortment} with appropriate input arguments}
 	\vspace{2mm}
 	
 	Offer assortment $\tilde{\assortment}_t$ to consumer $t$,
  and fulfill consumer $t$'s selected option if it is still available
  at check-in time $s_t$

 	}
\end{algorithm}

\paragraph{Simulation-based Algorithm.}
We
construct a simulation-based algorithm given the optimal solution of above benchmark and using random discarding and
sub-assortment sampling as follows: 
\begin{itemize}
    \item at time $t$, for consumer $t$
    with check-in time $s_t$ and type $z_t$
    \begin{enumerate}
        \item sample assortment $\hat S_t$ based on $\{x_{S, t, s_t, z_t}\}_{S\in \assortmentspace}$;
        \item randomly discard each option with probability $\gamma$,
        and obtain assortment $\bar S_t\subseteq \hat\assortment_t$;
        \item use sub-assortment sampling to match probability $y_{\hat S_t, t, s_t, z_t, i}$, and obtain assortment $\tilde S_t$;
        \item offer assortment $\tilde S_t$ to consumer $t$;
    \end{enumerate}
    \item at check-in time $s_t$, fulfill the customer $t$’s selected option with probability one if it is still available
\end{itemize}
See the formal description of the simulation-based algorithm
in \Cref{alg:advance booking}.

Similar to the analysis in \Cref{thm:competitive-ratio large inventory},
our simulation-based algorithm achieves near-optimal competitive ratio
in the advanced booking of upgrades model when the initial inventory is large. As the proof is almost identical, we omit the details for brevity. 

\begin{theorem}
	\label{thm:competitive-ratio large inventory advance booking}
	By setting $\gamma=\gammaopt$, the competitive ratio of \Cref{alg:advance booking} against the Bayesian expected LP benchmark \ref{eq:exante-stochastic advance booking} is at least $1 -\loss=1-O\left(\sqrt{\log(\mininventory)/\mininventory}\right)$. 
\end{theorem}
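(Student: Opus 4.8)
\textbf{Proof proposal for \Cref{thm:competitive-ratio large inventory advance booking}.}
The plan is to follow the proof of \Cref{thm:competitive-ratio large inventory} in \Cref{apx:alg 1 proof} essentially line by line, with the single structural change that a booking placed at time $\tpre$ begins occupying inventory only at its (jointly revealed) check-in time $s_\tpre\geq\tpre$, so that the relevant duration survival factor is $\durationcdfi^{\type_\tpre}_i(t-s_\tpre)$ rather than $\durationcdfi^{\type_\tpre}_i(t-\tpre)$. The running time bound follows from the analog of \Cref{prop:EAR running time} for \ref{eq:exante-stochastic advance booking}: its dual again admits a separation oracle built from \Cref{asp:oracle} after folding the duration survival terms and the $y$-variable coupling constraints into effective rewards, solvable by the ellipsoid method; combined with the $\textrm{Poly}(n)$ guarantee of \Cref{prop:assortment polytope}, the algorithm runs in the claimed time. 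Hence all the content is in the competitive ratio.

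For the ratio, I would fix a product $i$ and a time $t$ and reduce everything to the tail bound
\begin{align*}
\prob{\sum_{\tpre<t}\indica_{i,\tpre}\cdot\indica_{i,\tpre,t}\geq \inventory_i}\leq \exp\left(-\frac{\gamma^2\mininventory}{2-\gamma}\right),
\end{align*}
where $\indica_{i,\tpre}$ indicates that a unit of $i$ is allocated to consumer $\tpre$ (a decision finalized at check-in $s_\tpre$) and $\indica_{i,\tpre,t}$ indicates that this unit is still checked out at time $t$, i.e.\ that $s_\tpre\leq t$ and the duration drawn at $s_\tpre$ is at least $t-s_\tpre$. As in the original argument I would decompose $\indica_{i,\tpre}=\indica_{i,\tpre}^{(1)}\indica_{i,\tpre}^{(2)}\indica_{i,\tpre}^{(3)}\indica_{i,\tpre}^{(4)}$ into the events that $i\in\hat\assortment_\tpre$, that $i$ survives the independent $\gamma$-coin, that a unit is available at the relevant check-in time, and that $i\in\tilde\assortment_\tpre$ and is selected. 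The left-hand side above is exactly the expected number of occupied units of $i$ at time $t$, so the bound is what step (iii) of \Cref{alg:advance booking} needs to guarantee availability.

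The hard part, and the only place the advance-booking delay genuinely enters, is establishing the independence needed to invoke \Cref{lem:chernoff}. As in \Cref{apx:alg 1 proof}, I would introduce a hypothetical run of \Cref{alg:advance booking} that ignores all inventory constraints (so $\indica_{i,\tpre}^{(3)\dagger}\equiv1$, since fulfillment at check-in never fails) and couple it to the real run by copying the requests $(s_\tpre,\type_\tpre)$, the sampled sets $\hat\assortment_\tpre^\dagger\gets\hat\assortment_\tpre$, the discarding coins, and the duration draws. The crucial step is that \Cref{prop:assortment polytope} forces the post-processing to match the per-item selection probability $\choice^{\type_\tpre}(\hat\assortment_\tpre,i)$ in both runs regardless of which other items were discarded or unavailable; this lets me further couple the selection-and-availability variables so that $\indica_{i,\tpre}^\dagger\cdot\indica_{i,\tpre,t}^\dagger\geq\indica_{i,\tpre}\cdot\indica_{i,\tpre,t}$ pointwise. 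In the hypothetical run, $\{\indica_{i,\tpre}^\dagger\cdot\indica_{i,\tpre,t}^\dagger\}_\tpre$ is mutually independent across $\tpre$, since each term depends only on request $\tpre$'s fresh randomness (its $(s_\tpre,\type_\tpre)$, its coin, its selection, and its duration draw). The wrinkle relative to the base model is that $s_\tpre$ is revealed jointly with the type, so $\indica_{i,\tpre,t}$ conditions on $s_\tpre\leq t$ before the survival event; because $(s_\tpre,\type_\tpre)$ is drawn independently across $\tpre$, this does not disturb cross-$\tpre$ independence and I would simply carry it inside the conditioning.

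Finally I would compute, using the sub-assortment identity and $\expect{\indica_{i,\tpre}^{(2)\dagger}\mid\indica_{i,\tpre}^{(1)\dagger}}=(1-\gamma)\,\indica_{i,\tpre}^{(1)\dagger}$,
\begin{align*}
\expect{\sum_{\tpre<t}\indica_{i,\tpre}^\dagger\cdot\indica_{i,\tpre,t}^\dagger}
&=(1-\gamma)\sum_{\tpre<t}\sum_{s_\tpre,\type_\tpre}\sum_{\assortment\in\assortmentspace}\typedistribution_\tpre(s_\tpre,\type_\tpre)\,\durationcdfi^{\type_\tpre}_i(t-s_\tpre)\,\choice^{\type_\tpre}(\assortment,i)\,y^*_{\assortment,\tpre,s_\tpre,\type_\tpre,i}\\
&\leq(1-\gamma)\inventory_i,
\end{align*}
where the inequality is the in-expectation inventory constraint of \ref{eq:exante-stochastic advance booking}. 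Applying \Cref{lem:chernoff} to the dominating independent family yields the tail bound, hence $\expect{\indica_{i,t}^{(3)}}\geq1-\exp(-\gamma^2\mininventory/(2-\gamma))$. Multiplying by the per-item reward and the $(1-\gamma)$ survival factor shows the expected revenue collected from $i$ is at least $(1-\gamma)\bigl(1-\exp(-\gamma^2\mininventory/(2-\gamma))\bigr)$ times product $i$'s contribution to the benchmark objective; summing over $i$ and $t$ and setting $\gamma=\gammaopt$ from \eqref{eq:loss} delivers the ratio $1-\loss$. I expect the coupling and independence bookkeeping around the $s_\tpre$ delay to be the only new obstacle, with everything else transferring unchanged.
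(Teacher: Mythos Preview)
Your plan is exactly the paper's: it explicitly omits the proof as ``almost identical'' to that of \Cref{thm:competitive-ratio large inventory}, and you correctly isolate the one structural change (occupancy begins at the jointly revealed check-in time $s_\tpre$, so the survival factor becomes $\durationcdfi_i^{\type_\tpre}(t-s_\tpre)$ and the LP's inventory constraint supplies the $(1-\gamma)\inventory_i$ bound). One algebraic slip to fix: in your displayed expectation you write $\choice^{\type_\tpre}(\assortment,i)\,y^*_{\assortment,\tpre,s_\tpre,\type_\tpre,i}$, but in \ref{eq:exante-stochastic advance booking} the variable $y^*$ is already the allocation variable (constrained by $y^*\leq\choice\cdot x^*$), not the assortment-probability variable of the base LP, so the extra $\choice$ factor double-counts selection and should be dropped; with that correction the inequality against the LP inventory constraint is immediate.
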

}
\end{document}